\newcommand{\pos}[2]{\makebox(0,0)[#1]{#2}}
\newcommand{\eqdef}{\stackrel{\scriptscriptstyle\bigtriangleup}{=} }
\newcommand{\Z}{\mathbb{Z}}
\newcommand{\calB}{\mathcal{B}}
\newcommand{\calE}{\mathcal{E}}
\newcommand{\calL}{\mathcal{L}}
\newcommand{\maxD}{{\hat D}}
\newcommand{\rank}{\operatorname{rank}}
\newcommand{\argmax}{\operatornamewithlimits{argmax}}
\newcommand{\lcf}{\operatorname{lcf}}
\newcommand{\lcm}{\operatorname{lcm}}
\newcommand{\rd}{\operatorname{rd}}
\newcommand{\odiv}{\,\operatorname{div}}
\definecolor{grey}{rgb}{0.5, 0.5, 0.5}
\definecolor{purple}{rgb}{1, 0, 1}
\newcommand{\grey}{\color{grey}}
\newcounter{examplecntr}
{\begin{trivlist}\item[]\refstepcounter{examplecntr}%
 {\bfseries Example~\theexamplecntr%
  \ifthenelse{\equal{#1}{}}{}{ (#1)}.
}}%
{\hfill$\Box$\end{trivlist}}
\newcounter{definitioncntr}
\newenvironment{definition}[1][]%
{\begin{trivlist}\item[]\refstepcounter{definitioncntr}%
{\bfseries Definition~\thedefinitioncntr%
 \ifthenelse{\equal{#1}{}}{}{ (#1)}.
}}%
{\hfill$\Box$\end{trivlist}}
\newcounter{theoremcntr}
\newenvironment{theorem}[1][]%
{\begin{trivlist}\item[]\refstepcounter{theoremcntr}%
{\bfseries Theorem~\thetheoremcntr%
  \ifthenelse{\equal{#1}{}}{}{ (#1)}.
}}%
{\hfill$\Box$\end{trivlist}}
\newcounter{lemmacntr}
\newenvironment{lemma}[1][]%
{\begin{trivlist}\item[]\refstepcounter{lemmacntr}%
{\bfseries Lemma~\thelemmacntr%
  \ifthenelse{\equal{#1}{}}{}{ (#1)}.
}}%
{\hfill$\Box$\end{trivlist}}
\newcounter{claimcntr}
{\begin{trivlist}\item[]\refstepcounter{claimcntr}%
{\bfseries Claim~\theclaimcntr%
  \ifthenelse{\equal{#1}{}}{}{ (#1)}.
}}%
{\hfill$\Box$\end{trivlist}}
\newcounter{corollarycntr}
\newenvironment{corollary}[1][]%
{\begin{trivlist}\item[]\refstepcounter{corollarycntr}%
{\bfseries Corollary~\thecorollarycntr%
  \ifthenelse{\equal{#1}{}}{}{ (#1)}.
}}%
{\hfill$\Box$\end{trivlist}}
\newcounter{propositioncntr}
\newenvironment{proposition}[1][]%
{\begin{trivlist}\item[]\refstepcounter{propositioncntr}%
{\bfseries Proposition~\thepropositioncntr%
  \ifthenelse{\equal{#1}{}}{}{ (#1)}.
}}%
{\hfill$\Box$\end{trivlist}}
\newenvironment{proof}{\begin{trivlist}\item[]{\bfseries Proof: }
 }{\hfill$\Box$\end{trivlist}}
\newenvironment{proofof}[1]{\begin{trivlist}\item[]{\bfseries Proof\ifthenelse{\equal{#1}{}}{}{ #1}:}
 }{\hfill$\Box$\end{trivlist}}
\newcommand{\eproofnegspace}{\\[-1.5\baselineskip]\rule{0em}{0ex}}
\newcounter{algorithmcntr}
\newenvironment{algorithm}[1][]%
{\normalsize\vspace{0.5ex}%
\refstepcounter{algorithmcntr}%
{\bfseries Algorithm~\thealgorithmcntr%
\ifthenelse{\equal{#1}{}}{:}{: #1}%
}}%
{}
\newcounter{subalgorithmcntr}[algorithmcntr]
\renewcommand{\thesubalgorithmcntr}{\thealgorithmcntr.\Alph{subalgorithmcntr}}
{\normalsize\vspace{0.5ex}%
\refstepcounter{subalgorithmcntr}%
{\bfseries Algorithm~\thesubalgorithmcntr%
\ifthenelse{\equal{#1}{}}{:}{: #1}%
}}%
{}
\newcounter{proglinecounter}
\newenvironment{pseudocode}%
    {\setcounter{proglinecounter}{0}%
     \begin{tabbing}123\=123\=123\=123\=123\=123\=123\=123\=123\=123\=123\= \kill}%
    {\end{tabbing}}
    {\setcounter{proglinecounter}{0}%
     \begin{tabbing}123\=123\=12\=12\=12\=12\=12\=12\=12\=12\=12\= \kill}%
    {\end{tabbing}}
\newcommand{\npcl}[1][]
    {\>\refstepcounter{proglinecounter}\arabic{proglinecounter}%
     \ifthenelse{\equal{#1}{}}{}{\label{#1}}\' \>}
\newcommand{\pkw}[1]{\textbf{#1}}    %
\newcounter{assertioncntr}
\newcommand{\assertlabel}[1]{\hfill\refstepcounter{assertioncntr}(A.\arabic{assertioncntr})\label{#1}}
\newcommand{\assertref}[1]{A.\ref{#1}}
\newcounter{extracntr}
\newcommand{\extralabel}[1]{\hfill\refstepcounter{extracntr}(E.\arabic{extracntr})\label{#1}}
\newcommand{\extraref}[1]{E.\ref{#1}}
\begin{document}

\title{Simultaneous Partial Inverses and Decoding\\ Interleaved Reed--Solomon Codes}

\author{%
Jiun-Hung~Yu,~\IEEEmembership{Member,~IEEE,} and Hans-Andrea Loeliger,~\IEEEmembership{Fellow,~IEEE}%
\thanks{%
Jiun-Hung Yu is with 
the Dept.\ of Electrical and Computer Engineering, 
National Chiao Tung University, Hsinchu, Taiwan 300, ROC.
H.-A.\ Loeliger is with
the Dept.\ of Information Technology and Electrical Engineering,
ETH Zurich, 8092 Z\"urich, Switzerland.

This paper was presented in part 
at the 2014 Allerton Conf.\ on Communication, Control, and Computing \cite{YuLoeliger2014} 
and at the 2015 IEEE Int.\ Symposium on Information Theory (ISIT) \cite{YuLoeliger2015isit}.

Copyright (c) 2017 IEEE. Personal use of this material is permitted.  
However, permission to use this material for any other purposes 
must be obtained from the IEEE by sending a request to pubs-permissions@ieee.org.
}}

\maketitle

\begin{abstract}
The paper introduces the simultaneous partial-inverse problem (SPI) for polynomials
and develops its application to decoding interleaved Reed--Solomon codes 
beyond half the minimum distance.
While closely related both to standard key equations and to well-known Pad\'e
approximation problems, the SPI problem stands out in several respects. 
First, the SPI problem has a unique solution (up to a scale factor),
which satisfies a natural degree bound. 
Second, the SPI problem can be transformed (monomialized) 
into an equivalent SPI problem where all moduli are monomials. 
Third, the SPI problem can be solved by an efficient algorithm 
of the Berlekamp--Massey type.
Fourth, decoding interleaved Reed--Solomon codes 
(or subfield-evaluation codes) beyond half the minimum distance 
can be analyzed in terms of a partial-inverse condition for the error pattern:
if that condition is satisfied, then the (true) error locator polynomial 
is the unique solution of a standard key equation 
and can be computed in many different ways,
including the well-known multi-sequence Berlekamp--Massey algorithm 
and
the SPI algorithm of this paper.
Two of the best performance bounds from the literature 
(the Schmidt--Sidorenko--Bossert bound and the Roth--Vontobel bound)
are generalized to hold for the partial-inverse condition
and thus to apply to several different decoding algorithms.
\end{abstract}

\begin{IEEEkeywords}
Interleaved Reed--Solomon codes, subfield-evaluation codes, simultanenous partial-inverse problem, 
Euclidean algorithm, multi-sequence Berlekamp--Massey algorithm, 
performance bounds.
\end{IEEEkeywords}

\section{Introduction}
\label{sec:SPIIntroduction}

This paper revolves around the following problem and develops
its application to decoding interleaved Reed--Solomon codes 
beyond half the minimum distance.

\begin{trivlist}\item{}{\bfseries Simultaneous Partial-Inverse (SPI) Problem:}
For $i=1,\ldots,L$,
let $b^{(i)}(x)$ and $m^{(i)}(x)$ be polynomials over some field $F$ 
with $\deg m^{(i)}(x) \geq 1$ and $\deg b^{(i)}(x)<\deg m^{(i)}(x)$. 
For fixed $\tau^{(i)}\in \Z$ with $0\leq \tau^{(i)} \leq \deg m^{(i)}(x)$, 
find a nonzero polynomial
$\Lambda(x)\in F[x]$ of the smallest degree such that
\begin{equation} \label{eqn:SPI}
\deg\!\Big(b^{(i)}(x)\Lambda(x) \bmod m^{(i)}(x)\Big)< \tau^{(i)}
\end{equation} 
for all $i\in\{1,\dots,L\}$.
\hfill$\Box$
\end{trivlist}
We will see that this problem has always a unique solution 
(up to a scale factor) 
and the solution satisfies 
\begin{equation}\label{Introduction:degreebound}
\deg \Lambda(x)\leq \sum_{i=1}^L \Big(\deg m^{(i)}(x)-\tau^{(i)}\Big).
\end{equation}

Moreover, we will see that the SPI problem for general moduli $m^{(i)}(x)$
can efficiently be transformed 
(``monomialized'') into an equivalent SPI problem 
with monomial moduli $m^{(i)}(x) = x^{\nu_i}$.

The special case%
\footnote{Except that \mbox{$b(x)=0$} was excluded in \cite{YuLoeliger2016IT}.}
$L=1$ was extensively discussed in \cite{YuLoeliger2016IT}.
In this paper, we address the generalization from $L=1$ to $L>1$, which is not obvious.

For $L>1$, the SPI problem appears to be new,
but it is closely related to a number of well-researched 
problems in coding and computer science including 
``key equations'' for interleaved Reed--Solomon codes 
\cite{SchSidoBossert2009,Nielsen2013},
the multi-sequence linear-feedback shift-register (MLFSR) problem 
of \cite{FengTzeng1989,FengTzeng1991,SchmidtSidorenko2006},
and generalizations of Pad\'e approximation problems 
\cite{GathenGerhard,RoSto:acmISSAC2016,PuRo:ipdISIT2017}.
However, none of these related problems shares all the mentioned 
properties (unique solution, degree bound, monomialization) of the SPI problem.

By developing the decoding of interleaved Reed--Solomon codes
around the SPI problem, we generalize and harmonize
a number of key ideas from the literature, as will be detailed below.

We will consider codes as follows. 
Let $F=F_q$ be a finite field with $q$ elements.
The codewords are $L\times n$ arrays over $F$ 
such that every row is a codeword in
some Reed--Solomon code over $F$.
We will only consider column errors, 
and we will not distinguish between columns with a single error
and columns with many errors.
The Reed--Solomon codes (for each row) consist of the codewords 
\begin{equation} \label{eqn:TheCode}
  \left\{ \big( a(\beta_0),\ldots,a(\beta_{n-1}) \big): a(x)\in F[x] \text{~with~} \deg a(x)<k \right\},
\end{equation}
where $\beta_0,\ldots,\beta_{n-1}$ are $n$ different elements of $F$.
Note that punctured Reed--Solomon codes are included 
and $\beta_\ell=0$ 
(for a single index $\ell$)
is allowed.
The dimension $k$ will be allowed to depend on the row.
However, for the further discussion in this section, we will assume 
that all row codes have the same dimension~$k$.

Such interleaved Reed--Solomon codes
can equivalently be viewed as punctured Reed--Solomon codes 
over $F_{q^L}$ 
simply by replacing $F[x]=F_q[x]$ in (\ref{eqn:TheCode}) by $F_{q^L}[x]$ 
while the evaluation points $\beta_0,\ldots,\beta_{n-1}$ remain in $F_q$ 
\cite{BrownMinderShokrollahi2004,ParvareshVardy2006isit,SchSidoBossert2009}.
Note that  symbol errors in $F_{q^L}$ correspond to  column errors in the array code.

Decoding such array codes (or subfield-evaluation codes)
beyond the Guruswami-Sudan decoding radius \cite{GuruswamiSudan} 
was studied in 
\cite{BleichenbacherKiayiasYung,ParvareshVardy2006isit,BrownMinderShokrollahi2004,ParvareshVardy,%
Metzner, Haslach19992001,SchSidoBossert2009, KurzweilSeidlHuber2011,RothVontobel2014}.
Following \cite{GuruswamiSudan}, 
some of these papers use list-decoding algorithms 
\cite{ParvareshVardy2006isit,ParvareshVardy} 
while others use unique-decoding algorithms that return at most one codeword
\cite{BrownMinderShokrollahi2004,BleichenbacherKiayiasYung,Metzner,Haslach19992001,SchSidoBossert2009,KurzweilSeidlHuber2011,RothVontobel2014}.
The best unique-decoding algorithms
can now correct $t$ errors (column errors or $F_{q^L}$-symbol errors) up to 
\begin{equation} \label{IRS:potentialerrorcorrectingbound}
t \leq  \frac{L}{L+1}\big(n-k\big)
\end{equation}
with high probability if $q$ is large 
\cite{BrownMinderShokrollahi2004,BleichenbacherKiayiasYung,SchSidoBossert2009}. 
For \mbox{$L\geq n-k-1$}, 
the bound (\ref{IRS:potentialerrorcorrectingbound}) becomes
\begin{equation} \label{eqn:tltnmk}
t < n-k,
\end{equation}
which cannot be improved. 
(For small $L$, however, improvements over (\ref{IRS:potentialerrorcorrectingbound}) 
have been demonstrated, cf.\ \cite{PuRo:ipdISIT2017} and the references therein.)

Specifically, for $t$ errors with random error values 
(uniformly distributed over all nonzero columns), 
the best bound on the probability $P_f$ of a decoding failure 
(due to Schmidt et al.\ \cite{SchSidoBossert2009}) is
\begin{equation} \label{eqn:SchSidoBossert2009bound2}
P_f  \leq \gamma \frac{q^{-L(n-k)+(L+1)t}}{q-1}
\end{equation}
with
\begin{equation} \label{eqn:SchSidoBossert2009boundPrefactor}
\gamma \eqdef \left(\frac{q^L-q^{-1}}{q^L-1}\right)^{t}.
\end{equation}
(Note that $\gamma>1$, but $\gamma\approx 1$ for any $t$ of interest.)
The bound (\ref{eqn:SchSidoBossert2009bound2}) 
implies that the decoding algorithm of \cite{SchSidoBossert2009} 
decodes up to (\ref{IRS:potentialerrorcorrectingbound}) errors 
with high probability if $q$ is large.

Another type of bound, not relying on randomness, 
uses the rank of the error matrix $E\in F^{L\times n}$ 
that corrupts the transmitted (array-) codeword 
\cite{RothVontobel2014}.
The decoding algorithm by Roth and Vontobel \cite{RothVontobel2014} 
corrects any $t$ (column) errors provided that 
\begin{equation} \label{IRS:guaranteedcorrectingbound}
t\leq \frac{n-k+\rank(E)-1}{2} \,,
\end{equation}
which beats the guarantee in \cite{SchSidoBossert2009} by a margin of $\rank(E)/2$.

Note that \cite{SchSidoBossert2009} and \cite{RothVontobel2014}
use different decoding algorithms, and the decoding algorithm of \cite{SchSidoBossert2009}
assumes cyclic Reed--Solomon codes (as row codes) where $m(x)=x^n-1$.

The bound (\ref{IRS:guaranteedcorrectingbound}) can also be used 
with random error values. For $t\leq L$, $\rank(E)$ is then likely to equal $t$,
in which case (\ref{IRS:guaranteedcorrectingbound}) reduces to (\ref{eqn:tltnmk});
for $t=n-k-1 \leq L$, 
(\ref{IRS:guaranteedcorrectingbound}) 
(by (\ref{eqn:FullRankProb})) yields the same bound as 
(\ref{eqn:SchSidoBossert2009bound2}) with $\gamma=1$,
which agrees with the bounds in
\cite{Metzner,KurzweilSeidlHuber2011}, 
where different decoding algorithms are used.

In this paper, 
we define a partial-inverse condition (Definition~\ref{def:PIC})
for the error pattern, 
which is always satisfied up to half the minimum distance and 
almost always 
satisfied almost up to the full minimum distance. 
If that condition is satisfied, then the (true) error locator polynomial
is the unique solution of a standard key equation 
and can thus be computed in several different ways.

Specifically,
we will show that (\ref{IRS:guaranteedcorrectingbound}) guarantees 
the partial-inverse condition to be satisfied.
For random error values (as above), 
the probability for this condition not to hold will be shown 
to be bounded by (\ref{eqn:SchSidoBossert2009bound2}), 
with the minor improvement 
that (\ref{eqn:SchSidoBossert2009boundPrefactor}) is replaced by $\gamma=1$.
In this way, the scope of both (\ref{eqn:SchSidoBossert2009bound2}) 
and (\ref{IRS:guaranteedcorrectingbound}) is widened.

The primary decoding algorithms for interleaved Reed--Solomon codes
are based on the MLFSR algorithm
by Feng and Tzeng \cite{FengTzeng1991} with corrections 
by Schmidt and Sidorenko \cite{SchmidtSidorenko2006,SchSidoBossert2009}
(but see also \cite{WangISIT2008}).
The complexity of this algorithm is $O(L(n-k)^2)$ 
additions and/or multiplications in $F$.
(Asymptotically faster algorithms have been proposed 
\cite{SiBo:fskew2014}
and will be discussed below.)
However, the MLFSR algorithm is restricted to monomial moduli,
which arise naturally from cyclic Reed--Solomon codes.

Beyond cyclic codes, 
for $L=1$, it is a classical result that decoding general Reed--Solomon 
codes can be reduced to a key equation with monomial modulus \cite{Roth},
which is amenable to the MLFSR algorithm.
(However, standard accounts of that method do not
allow an evaluation point $\beta_\ell$ to equal zero.)
For $L>1$, such a transformation was used in \cite{RothVontobel2014}.
In this paper, the same effect (without any constraints) 
is achieved by the monomialization of SPI problems, 
with the additional benefit that the partial-inverse condition 
is preserved.
This transformation can be carried out, either by the Euclidean algorithm 
or by the partial-inverse algorithm of \cite{YuLoeliger2016IT},
with complexity $O(L(n-k)^2)$.

Finally, we propose algorithms to solve the SPI problem. 
The basic SPI algorithm is of the Berlekamp--Massey type.
In the special case where $m^{(i)}(x)=x^{\nu_i}$,
it looks very much like, and has the same complexity as, the MLFSR algorithm 
\cite{FengTzeng1991,SchmidtSidorenko2006}.
However,
the two algorithms are different:
for \mbox{$L=1$}, the MLFSR algorithm of \cite{FengTzeng1991} and \cite{SchmidtSidorenko2006} 
reduces to the Berlekamp--Massey algorithm \cite{Massey}
while the proposed SPI algorithm (Algorithm~\ref{alg:BasicSPIAlg} of this paper) reduces 
to the reverse Berlekamp--Massey algorithm of \cite{YuLoeliger2016IT}. 

As shown in \cite{YuLoeliger2016IT}, the reverse Berlekamp--Massey algorithm
is easily translated into two other algorithms, one of them being a variation of the 
Euclidean algorithm by Sugiyama et al.\ \cite{Sugiyama}. 
The (reverse) Berlekamp--Massey algorithm and the Euclidean algorithm
may thus be viewed as two versions of the same algorithm.
In this paper, 
we extend this to $L>1$: 
by easy translations of Algorithm~\ref{alg:BasicSPIAlg}, 
we obtain two other algorithms (Algorithms \ref{alg:QSSPIAlg} and \ref{alg:RSSPIAlg}), 
one of which is of the Euclidean type and reminiscent 
of \cite{FengTzeng1989} rather than of \cite{FengTzeng1991}. 
(Yet another, quite different, ``Euclidean'' algorithm was proposed in \cite{BoBe:diEuclid2008c}.)
For $L>1$, no such connection between the (different) approaches of 
\cite{FengTzeng1989} and \cite{FengTzeng1991} has been described in the literature.
However, 
the (reverse) Berlekamp--Massey version for 
monomial (or monomialized) SPI problems stands out by having 
the lowest complexity.

As mentioned, 
asymptotically faster algorithms for the MLFSR problem 
have been presented in \cite{SiBo:fskew2014}
for cyclic row codes
and in \cite{Nielsen2013} 
for general row codes,
both achieving 
$O(L^3 (n-k) \log^2 (n-k) \log\log (n-k))$.
(Note that the asymptotic speed-up in $n-k$ is bought with the factor $L^3$.)
It seems likely that such asymptotically fast algorithms can also be developed 
for the SPI problem, but this is not addressed in the present paper. 
In any case, the algorithms from \cite{SiBo:fskew2014} and \cite{Nielsen2013}
also profit from the performance bounds and the monomialization scheme of this paper.

In summary, we demonstrate that the SPI problem 
allows to harmonize and to generalize 
a number of ideas 
from the literature on interleaved Reed--Solomon codes.
Specifically:
\begin{enumerate}
\item
A general SPI problem can be transformed into an equivalent SPI problem with $m^{(i)}(x) = x^{\nu_i}$.

When applied to decoding, this monomialization 
preserves the partial-inverse condition (with the associated guarantees).
We also show how the error evaluator polynomial 
(which is used, e.g., in Forney's formula, cf.\ Section~\ref{sec:ErrorLocator}) 
can be transformed accordingly.
\item
We show that the SPI problem can be solved by an efficient algorithm 
of the Berlekamp--Massey type. 

In the Appendix, 
we also show that this algorithm is easily translated into two other algorithms, 
one of which is of the Euclidean type. 
For the MLFSR problem with $L>1$, 
no such connection between the Berlekamp--Massey approach \cite{FengTzeng1991}
and the Euclidean approach \cite{FengTzeng1989}
has been demonstrated. 
However, for \mbox{$L>1$}, the (reverse) Berlekamp--Massey version 
has lower complexity than the other versions.
\item
Using the partial-inverse condition,
we prove 
the Schmidt--Sidorenko--Bossert bound (\ref{eqn:SchSidoBossert2009bound2}) (with $\gamma=1$)
and the Roth--Vontobel bound (\ref{IRS:guaranteedcorrectingbound})
for a range of algorithms including MLFSR algorithms
(such as, e.g., \cite{SchSidoBossert2009} and \cite{Nielsen2013})
and the SPI decoding algorithms of this paper.
\end{enumerate}

The paper is structured as follows. 
In Section~\ref{sec:SPIProblem}, 
we discuss the SPI problem without regard to any algorithms or applications.
In particular, we prove the degree bound (\ref{Introduction:degreebound}) 
and we discuss the monomialization of the SPI problem.
In Section~\ref{sec:SPIProblem4Decoding}, 
we consider the decoding of interleaved Reed--Solomon codes. 
The pivotal concept in this section is a partial-inverse condition 
for the error pattern, which guarantees that the (correct) error locator polynomial
can be computed by many different (well-known and new) algorithms. 
In Section~\ref{sec:DecodingFailureAnalysis}, 
the bounds (\ref{eqn:SchSidoBossert2009bound2}) and (\ref{IRS:guaranteedcorrectingbound}) 
are shown to apply to the partial-inverse condition.

In Section~\ref{sec:SPIAlg}, we return to the problem of actually 
solving SPI problems, for which 
we propose the reverse Berlekamp--Massey algorithm. 
In Section~\ref{sec:SPIDecoding}, we adapt and apply this algorithm 
to decoding interleaved Reed--Solomon codes.

The proof of the reverse Berlekamp--Massey algorithm 
is given in Appendix~\ref{section:proofAlgo}.
The other two versions (including the Euclidean version) 
of the algorithm are given in 
Appendix~\ref{section:QuotientRemainderSavingAlgs}.
Section~\ref{section:conclusion} concludes the paper.

The reader is assumed to be familiar with the basics of algebraic coding theory 
\cite{Blahut,JustesenHoholdt,Roth}
as well as with the notion of a ring homomorphism and its application to $F[x]$
\cite{Herstein}.

We will use the following notation.
The coefficient of $x^d$ of a polynomial $b(x)\in F[x]$ will be
denoted by $b_d$, and
the leading coefficient (i.e., the coefficient of
$x^{\deg b(x)}$) of a nonzero polynomial $b(x)$ will be denoted by $\lcf b(x)$.
We will use ``${\bmod}$'' both as in $r(x) = b(x) \bmod m(x)$ (the
remainder of a division) and as in $b(x) \equiv r(x)  \mod m(x)$
(a congruence modulo $m(x)$). 
We will also use ``$\odiv$'' for polynomial division: if
\begin{equation}
a(x) = q(x)m(x) + r(x)
\end{equation}
with $\deg r(x) < \deg m(x)$, then 
$q(x) = a(x) \odiv m(x)$ and $r(x) = a(x) \bmod m(x)$.

\section{About the Simultaneous Partial-Inverse Problem}
\label{sec:SPIProblem}

In this section, we consider the SPI problem 
without regard to any algorithms or applications.
The properties and facts that are proved here
are mostly straightforward generalizations from the case $L=1$ 
from \cite{YuLoeliger2016IT}, 
but Theorem~\ref{theorem:MonomializedSPI} (monomialization) 
requires some extra work.

\subsection{Basic Properties}
\label{sec:SPIBasics}

The SPI problem 
as defined in Section~\ref{sec:SPIIntroduction} has the following properties,
which will be heavily used throughout the paper.
\begin{proposition}\label{prop:existence}
The SPI problem has always a solution.
\end{proposition}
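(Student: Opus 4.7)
The plan is to exhibit one concrete nonzero polynomial satisfying the congruence conditions (\ref{eqn:SPI}), and then to invoke the well-ordering of the nonnegative integers to extract a solution of smallest degree.

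First I would try the simplest candidate, $\Lambda_0(x) \eqdef \prod_{i=1}^L m^{(i)}(x)$ (equivalently, one could take the least common multiple of the $m^{(i)}(x)$). Since each $m^{(i)}(x)$ has degree at least $1$, this product is nonzero. Moreover, each modulus $m^{(i)}(x)$ divides $\Lambda_0(x)$, so $b^{(i)}(x)\Lambda_0(x) \equiv 0 \bmod m^{(i)}(x)$; the remainder is the zero polynomial, whose degree is $-\infty$ (or, equivalently, less than any nonnegative integer), and so the condition $\deg\bigl(b^{(i)}(x)\Lambda_0(x)\bmod m^{(i)}(x)\bigr) < \tau^{(i)}$ holds trivially, using $\tau^{(i)} \geq 0$.

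Having established that the set of admissible nonzero $\Lambda(x)$ is nonempty, I would next observe that the set of degrees of such polynomials is a nonempty subset of $\Z_{\geq 0}$, since any nonzero polynomial has nonnegative degree. By well-ordering this set has a minimum, and any nonzero polynomial $\Lambda(x)$ achieving that minimum is, by definition, a solution of the SPI problem.

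The main ``obstacle'' is essentially nonexistent: this proposition is a soft existence statement, proved by producing a witness and then minimizing. The substantive content of the SPI problem --- uniqueness up to scale, the degree bound (\ref{Introduction:degreebound}), and the monomialization theorem --- is reserved for the results that follow later in this section.
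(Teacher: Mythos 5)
Your proof is correct and takes exactly the same route as the paper's: exhibit the product $m^{(1)}(x)\cdots m^{(L)}(x)$ as a witness satisfying all the conditions, then minimize the degree. The only difference is that you spell out the well-ordering step explicitly, which the paper leaves implicit.
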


\begin{proof}
The polynomial $\Lambda(x)= m^{(1)}(x)\cdots m^{(L)}(x)$
satisfies $b^{(i)}(x)\Lambda(x) \bmod m^{(i)}(x) = 0$
for all $i$, which implies the existence of a solution for any $\tau^{(i)}\geq 0$.
\end{proof}

\begin{proposition}\label{propo:Uniqueness}
The solution $\Lambda(x)$ 
of an SPI problem 
is unique up to a scale factor in $F$.
\end{proposition}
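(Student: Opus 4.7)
The plan is to exploit the $F$-linearity of the partial-inverse conditions and run a standard leading-coefficient cancellation argument, exactly as for the analogous uniqueness claim in the $L=1$ case of \cite{YuLoeliger2016IT}.

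First I would observe that the set
\[
S = \left\{\Lambda(x)\in F[x] : \deg\!\Big(b^{(i)}(x)\Lambda(x) \bmod m^{(i)}(x)\Big) < \tau^{(i)} \text{ for all } i \right\}
\]
is an $F$-linear subspace of $F[x]$. This is because, for each fixed $i$, the map $\Lambda(x) \mapsto b^{(i)}(x)\Lambda(x) \bmod m^{(i)}(x)$ is $F$-linear, and the polynomials of degree less than $\tau^{(i)}$ form an $F$-subspace; hence $S$ is the intersection of $L$ subspaces. In particular, any $F$-linear combination of solutions of the SPI problem satisfies the partial-inverse conditions.

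Next, suppose $\Lambda_1(x)$ and $\Lambda_2(x)$ are both solutions of the SPI problem, so both are nonzero elements of $S$ of the same minimal degree $d$. Let $c_j = \lcf\Lambda_j \neq 0$ for $j=1,2$, and consider
\[
\Lambda(x) \eqdef c_2 \Lambda_1(x) - c_1 \Lambda_2(x).
\]
By the previous paragraph, $\Lambda(x) \in S$. By construction, the $x^d$ terms of $c_2\Lambda_1$ and $c_1\Lambda_2$ cancel, so $\deg \Lambda(x) < d$.

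The decisive step is the dichotomy on $\Lambda(x)$. If $\Lambda(x) \neq 0$, then $\Lambda(x)$ is a nonzero element of $S$ of degree strictly less than $d$, contradicting the minimality of $d$ among degrees of nonzero solutions. Therefore $\Lambda(x) = 0$, which gives $\Lambda_2(x) = (c_2/c_1)\,\Lambda_1(x)$, proving that the solution is unique up to a scalar factor in $F$. There is no real obstacle here beyond verifying the $F$-linearity of the reduction-mod-$m^{(i)}$ operation; existence of a solution of minimal degree is guaranteed by Proposition~\ref{prop:existence} together with well-ordering of degrees in $\Z_{\geq 0}$.
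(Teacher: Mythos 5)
Your proof is correct and is essentially the paper's own argument: both form the leading-coefficient cancellation $c_2\Lambda_1 - c_1\Lambda_2$, use the $F$-linearity of $\Lambda(x)\mapsto b^{(i)}(x)\Lambda(x)\bmod m^{(i)}(x)$ (the paper phrases this via the ring homomorphism $F[x]\rightarrow F[x]/m^{(i)}(x)$) to see that the combination still satisfies (\ref{eqn:SPI}), and conclude from minimality of the degree that the combination must vanish. Your framing of the condition set as a linear subspace is just a slightly more explicit packaging of the same step.
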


\begin{proofof}{}
Let $\Lambda'(x)$ and $\Lambda''(x)$ be two solutions of the
problem, which implies $\deg \Lambda'(x)=\deg \Lambda''(x) \geq
0$. Define
\begin{IEEEeqnarray}{rCl}
r'^{(i)}(x)  & \eqdef &  b^{(i)}(x) \Lambda'(x) \bmod m^{(i)}(x) \label{eqn:ProofUniquenessr1}\IEEEeqnarraynumspace\\
r''^{(i)}(x)  & \eqdef &  b^{(i)}(x) \Lambda''(x) \bmod m^{(i)}(x)
\label{eqn:ProofUniquenessr2} \IEEEeqnarraynumspace
\end{IEEEeqnarray}
and consider
\begin{equation} \label{eqn:ProofUniquenessLambda3}
\Lambda(x)\eqdef \Big(\lcf \Lambda''(x) \Big) \Lambda'(x) -
\Big(\lcf \Lambda'(x) \Big) \Lambda''(x).
\end{equation}
Then
\begin{IEEEeqnarray}{rCl}
r^{(i)}(x)  & \eqdef &  b^{(i)}(x) \Lambda(x) \bmod m^{(i)}(x) \label{eqn:ProofUniquenessrx.0} \IEEEeqnarraynumspace\\
  & = & \Big(\lcf \Lambda''(x) \Big)  r'^{(i)}(x) - \Big(\lcf \Lambda'(x) \Big)  r''^{(i)}(x)
        \label{eqn:ProofUniquenessrx} \IEEEeqnarraynumspace
\end{IEEEeqnarray}
by the natural ring homomorphism $F[x] \rightarrow
F[x]/m^{(i)}(x)$. Clearly, (\ref{eqn:ProofUniquenessrx}) implies
that $\Lambda(x)$ also satisfies (\ref{eqn:SPI}) for every
\mbox{$1\leq i \leq L$}. But (\ref{eqn:ProofUniquenessLambda3})
implies $\deg \Lambda(x) < \deg\Lambda'(x)$, which is a
contradiction unless $\Lambda(x)=0$. Thus $\Lambda(x)=0$, which
means that $\Lambda'(x)$ equals $\Lambda''(x)$ up to a scale
factor.
\end{proofof}

\begin{proposition}[Degree Bound] \label{proposition:SolutionDegree}
If $\Lambda(x)$ solves the SPI problem,
then
\begin{equation} \label{eqn:MaxDegree}
\deg \Lambda(x) \leq \sum_{i=1}^{L} 
      \left(\deg m^{(i)}(x) - \tau^{(i)}\right).
\end{equation}
\eproofnegspace
\end{proposition}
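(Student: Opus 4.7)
The plan is to exhibit, by a simple dimension count, some nonzero polynomial $\Lambda'(x)$ of degree at most $D \eqdef \sum_{i=1}^{L}\big(\deg m^{(i)}(x) - \tau^{(i)}\big)$ that satisfies all the constraints (\ref{eqn:SPI}), and then invoke the minimality in the definition of the SPI solution $\Lambda(x)$.

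First I would rephrase (\ref{eqn:SPI}) as a system of linear equations over $F$. For a fixed $i$, the remainder $b^{(i)}(x)\Lambda'(x) \bmod m^{(i)}(x)$ has degree at most $\deg m^{(i)}(x)-1$, and the condition (\ref{eqn:SPI}) is that its coefficients at positions $\tau^{(i)},\tau^{(i)}+1,\ldots,\deg m^{(i)}(x)-1$ all vanish; these are $\deg m^{(i)}(x) - \tau^{(i)}$ linear constraints on the coefficients of $\Lambda'(x)$ (linearity being ensured by the natural ring homomorphism $F[x] \to F[x]/m^{(i)}(x)$ used in Proposition~\ref{propo:Uniqueness}). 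Summing over $i = 1,\ldots,L$ yields $D$ linear constraints in total.

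Next I would restrict attention to polynomials $\Lambda'(x) \in F[x]$ with $\deg \Lambda'(x) \leq D$. These form a $(D+1)$-dimensional $F$-vector space, while the space of constraint values is $D$-dimensional. Hence the linear map taking $\Lambda'(x)$ to its tuple of constrained coefficients has nontrivial kernel, providing a nonzero $\Lambda'(x)$ with $\deg \Lambda'(x) \leq D$ that satisfies (\ref{eqn:SPI}) for every $i$. By the minimality requirement in the SPI problem, $\deg \Lambda(x) \leq \deg \Lambda'(x) \leq D$, which is precisely (\ref{eqn:MaxDegree}).

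There is no real obstacle: once (\ref{eqn:SPI}) is recognized as a system of $F$-linear equations in the coefficients of the unknown polynomial, the degree bound follows immediately from comparing the dimensions of the search space and the constraint space, as in standard Padé-type arguments.
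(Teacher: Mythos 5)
Your proposal is correct and is essentially the paper's own argument: the paper likewise defines the linear maps sending the coefficient vector of a candidate $\Lambda'(x)$ of degree at most $D$ to the constrained remainder coefficients, observes that the intersection of their kernels has dimension at least $D+1-D=1$, and concludes by the minimality of the SPI solution. The only cosmetic difference is that the paper treats the degenerate case $\tau^{(i)}=\deg m^{(i)}(x)$ for all $i$ separately, which your formulation absorbs without trouble.
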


\begin{proof}
The case $\tau^{(i)}=\deg m^{(i)}(x)$ for all $i$ is obvious. 
Otherwise, let $\nu_i \eqdef \deg m^{(i)}(x) - \tau^{(i)}$
and $\nu \eqdef \sum_{i=1}^L \nu_i$, and 
consider, for $i=1,\ldots,L$, the linear mappings
\begin{equation}
\varphi_i : F^{\nu+1} \rightarrow F^{\nu_i}
\end{equation}
given by
\begin{IEEEeqnarray}{rCl}
(\Lambda_0, \ldots, \Lambda_\nu)
& \mapsto & \Lambda(x) \eqdef \Lambda_0 + \Lambda_1 x + \ldots + \Lambda_\nu x^\nu 
            \IEEEeqnarraynumspace \\
& \mapsto & r^{(i)}(x) \eqdef b^{(i)}(x) \Lambda(x) \bmod m^{(i)}(x) \\
& \mapsto & (r^{(i)}_0, \ldots, r^{(i)}_{\deg m^{(i)}(x)-1}) \\
& \mapsto & (r^{(i)}_{\tau^{(i)}}, \ldots, r^{(i)}_{\deg m^{(i)}(x)-1}).
\end{IEEEeqnarray}
Note that a polynomial $\Lambda_0 + \Lambda_1 x + \ldots + \Lambda_\nu x^\nu$
satisfies (\ref{eqn:SPI}) if and only if 
$(\Lambda_0,\ldots,\Lambda_\nu) \in \ker \varphi_i$.
But 
\begin{IEEEeqnarray}{rCl}
\dim \left( \bigcap_{i=1}^L \ker \varphi_i \right) 
 & \geq & \nu+1 - \sum_{i=1}^L \nu_i \IEEEeqnarraynumspace\\
 & = & 1,
\end{IEEEeqnarray}
i.e., $\left( \bigcap_{i=1}^L \ker \varphi_i \right)$ is not trivial. 
There thus exists a nonzero polynomial
$\Lambda_0 + \Lambda_1 x + \ldots + \Lambda_\nu x^\nu$
that satisfies (\ref{eqn:SPI}) simultaneously for $i=1,\ldots,L$.
\end{proof}

For occasional later use, the right-hand side of (\ref{eqn:MaxDegree}) will be denoted by
\begin{equation}\label{eqn:DegreeBoundsymbol}
D\eqdef \sum_{i=1}^{L}\Big(\deg m^{(i)}(x)-\tau^{(i)}\Big).
\end{equation}

Another obvious bound on the degree of the solution is 
\begin{equation} \label{eqn:DegreeBoundLCM}
\deg \Lambda(x) \leq \deg\lcm\!\big( m^{(1)}(x), \ldots, m^{(L)}(x) \big),
\end{equation}
where $\lcm$ denotes the common multiple of the smallest degree.
In particular, we have
\begin{proposition}[Monomial-SPI Degree Bound]\hspace{-0.5em}\footnote{%
Proposition~\ref{proposition:MonomialDegreeBound} was pointed out by an anonymous reviewer.}
\label{proposition:MonomialDegreeBound}
Assume $m^{(i)}(x) = x^{\nu_i}$ for $i=1,\ldots,L$.
If $\Lambda(x)$ solves the SPI problem, then
\begin{equation} \label{eqn:MaxDegreeMonomial}
\deg \Lambda(x) \leq \max_{i=1,\ldots,L} \nu_i.
\end{equation}
\eproofnegspace
\end{proposition}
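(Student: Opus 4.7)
The plan is to invoke the lcm-based bound (\ref{eqn:DegreeBoundLCM}) stated just above the proposition, and then to observe that for monomial moduli the relevant lcm is itself a monomial whose degree is $\max_i \nu_i$.

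More precisely, first I would note that equation (\ref{eqn:DegreeBoundLCM}) already gives
\begin{equation}
\deg \Lambda(x) \leq \deg\lcm\!\big(m^{(1)}(x),\ldots,m^{(L)}(x)\big)
\end{equation}
for the solution of any SPI problem (this holds because $\lcm(m^{(1)}(x),\ldots,m^{(L)}(x))$ is itself a valid candidate: it is nonzero, and since $m^{(i)}(x)$ divides it, we have $b^{(i)}(x)\lcm(\cdots) \equiv 0 \mod m^{(i)}(x)$, trivially satisfying (\ref{eqn:SPI}) for every $i$; by minimality, the solution has degree at most that of this candidate).

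Second, I would specialize to $m^{(i)}(x)=x^{\nu_i}$. Then
\begin{equation}
\lcm(x^{\nu_1},\ldots,x^{\nu_L}) = x^{\max_{i} \nu_i},
\end{equation}
since $x$ is the unique irreducible factor appearing and its highest multiplicity among the $m^{(i)}(x)$ is $\max_i \nu_i$. Substituting into the displayed bound yields $\deg \Lambda(x) \leq \max_i \nu_i$, which is (\ref{eqn:MaxDegreeMonomial}).

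There is no real obstacle: the work is entirely in the preceding lcm bound, and specialization to monomials is just the observation that the lcm of pure powers of $x$ is the highest such power. If one preferred a self-contained argument avoiding the lcm bound, one could directly exhibit $\Lambda(x)=x^{\max_i \nu_i}$ as a candidate solution — it annihilates every $b^{(i)}(x)\Lambda(x) \bmod x^{\nu_i}$ — and then invoke the uniqueness/minimality of the SPI solution (Propositions \ref{propo:Uniqueness} and \ref{proposition:SolutionDegree} setup) to conclude the degree bound. Either route is a one-line argument.
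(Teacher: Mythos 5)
Your proof is correct and matches the paper's own route: the paper presents the proposition as an immediate consequence of the lcm bound (\ref{eqn:DegreeBoundLCM}), exactly as you do, with $\lcm(x^{\nu_1},\ldots,x^{\nu_L}) = x^{\max_i \nu_i}$. Your justification of (\ref{eqn:DegreeBoundLCM}) itself (the lcm is a nonzero candidate annihilating every $b^{(i)}(x)\Lambda(x)\bmod m^{(i)}(x)$, so minimality gives the degree bound) is also the intended one.
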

The right side of (\ref{eqn:MaxDegreeMonomial})
may be smaller or larger than (\ref{eqn:DegreeBoundsymbol}).

\subsection{Irrelevant Coefficients And Degree Reduction}

Let $\Lambda(x)$ be the solution of a given SPI problem
and let $u$ be a (nonnegative) integer such that 
\begin{equation}\label{eqn:BoundDegLambda}
u \geq \deg \Lambda(x). 
\end{equation}
Note that $u=D$ as in (\ref{eqn:DegreeBoundsymbol}) qualifies by (\ref{eqn:MaxDegree}).

\begin{proposition}[Irrelevant Coefficients] \label{proposition:irrelevantcoeff}
In the SPI problem, coefficients $b_\ell^{(i)}$ of
$b^{(i)}(x)$ with
\begin{equation} \label{eqn:IrrelevantCoeff_b}
\ell < \tau^{(i)} - u
\end{equation}
and coefficients $m_s^{(i)}$ of $m^{(i)}(x)$ with
\begin{equation} \label{eqn:IrrelevantCoeff_m}
s \leq  \tau^{(i)} - u
\end{equation}
have no effect on the solution $\Lambda(x)$.
\end{proposition}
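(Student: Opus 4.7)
The plan is to show that replacing $b^{(i)}(x)$ by $b^{(i)}(x)+\tilde b^{(i)}(x)$ with $\deg\tilde b^{(i)}(x)\leq \tau^{(i)}-u-1$, or replacing $m^{(i)}(x)$ by $m^{(i)}(x)+\tilde m^{(i)}(x)$ with $\deg\tilde m^{(i)}(x)\leq \tau^{(i)}-u$, does not alter the set of polynomials $\Lambda(x)$ with $\deg\Lambda(x)\leq u$ that satisfy (\ref{eqn:SPI}). Since Proposition~\ref{propo:Uniqueness} guarantees a unique minimum-degree solution (which by hypothesis has $\deg\Lambda(x)\leq u$), this will imply that the solution is unchanged under any such modification. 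Since the restrictions on $\tilde b^{(i)}$ and $\tilde m^{(i)}$ describe precisely the coefficients of indices stated in (\ref{eqn:IrrelevantCoeff_b}) and (\ref{eqn:IrrelevantCoeff_m}), the claim will follow.

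First, I would handle the $b^{(i)}$ perturbation. For any $\Lambda(x)$ with $\deg \Lambda(x)\leq u$,
\[
\deg\bigl(\tilde b^{(i)}(x)\Lambda(x)\bigr)\leq (\tau^{(i)}-u-1)+u=\tau^{(i)}-1 <\tau^{(i)} \leq \deg m^{(i)}(x),
\]
so $\tilde b^{(i)}(x)\Lambda(x) \bmod m^{(i)}(x)=\tilde b^{(i)}(x)\Lambda(x)$, a polynomial of degree less than $\tau^{(i)}$. By linearity of reduction modulo $m^{(i)}(x)$, adding $\tilde b^{(i)}(x)\Lambda(x)$ to $b^{(i)}(x)\Lambda(x)\bmod m^{(i)}(x)$ preserves the property of having degree less than $\tau^{(i)}$.

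For the $m^{(i)}$ perturbation, write $b^{(i)}(x)\Lambda(x)=q(x)m^{(i)}(x)+r(x)$ with $\deg r(x)<\deg m^{(i)}(x)$. From
\[
\deg q(x)\leq \deg\bigl(b^{(i)}(x)\Lambda(x)\bigr)-\deg m^{(i)}(x)\leq u-1,
\]
it follows that $\deg\bigl(q(x)\tilde m^{(i)}(x)\bigr)\leq (u-1)+(\tau^{(i)}-u)=\tau^{(i)}-1$. Rewriting
\[
b^{(i)}(x)\Lambda(x) = q(x)\bigl(m^{(i)}(x)+\tilde m^{(i)}(x)\bigr) + \bigl(r(x)-q(x)\tilde m^{(i)}(x)\bigr),
\]
and noting that $\deg(m^{(i)}(x)+\tilde m^{(i)}(x))=\deg m^{(i)}(x)$ (since $\deg\tilde m^{(i)}(x)<\deg m^{(i)}(x)$), we see that $r(x)-q(x)\tilde m^{(i)}(x)$ is the remainder under the new modulus. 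This remainder has degree less than $\tau^{(i)}$ if and only if $r(x)$ does, because the correction term $q(x)\tilde m^{(i)}(x)$ has degree less than $\tau^{(i)}$.

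The expected obstacle is mostly bookkeeping: verifying that $\deg(m^{(i)}(x)+\tilde m^{(i)}(x))\geq \tau^{(i)}$ so that the modified problem is well-posed, and handling the edge case $u=0$ (in which $\Lambda(x)$ is a nonzero constant and $q(x)=0$, so the argument trivializes). Once these are checked, one applies the argument in both directions of the perturbation to conclude that the set of admissible $\Lambda(x)$ of degree at most $u$ is invariant, and hence so is the SPI solution.
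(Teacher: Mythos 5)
Your proof is correct and takes essentially the same route as the paper's: the first claim via $\ell+\deg\Lambda(x)<\tau^{(i)}$, and the second via the decomposition $b^{(i)}(x)\Lambda(x)=q(x)m^{(i)}(x)+r(x)$ with $\deg q(x)\leq u-1$, so that $q(x)\tilde m^{(i)}(x)$ only perturbs the remainder below degree $\tau^{(i)}$. Your write-up is in fact somewhat more explicit than the paper's (it spells out why the quotient is unchanged and how uniqueness of the minimal-degree solution closes the argument), but the underlying idea is identical.
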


\begin{proof}
From  (\ref{eqn:IrrelevantCoeff_b}) and (\ref{eqn:BoundDegLambda}), we obtain
\begin{equation}
\ell + \deg \Lambda(x) < \tau^{(i)},
\end{equation}
which proves the first claim.
As for the second claim, 
we begin by writing 
\begin{equation} \label{eqn:ProofIrrelCoeff2}
b^{(i)}(x) \Lambda(x) \bmod m^{(i)}(x) = b^{(i)}(x) \Lambda(x) - m^{(i)}(x) q^{(i)}(x)
\end{equation}
for some $q^{(i)}(x)\in F[x]$ with 
\begin{equation} \label{eqn:ProofIrrelCoeff3}
\deg q^{(i)}(x) < \deg \Lambda(x). 
\end{equation}
(If $q^{(i)}(x)\neq 0$, (\ref{eqn:ProofIrrelCoeff3}) follows from 
considering the leading coefficient of the right-hand side 
of (\ref{eqn:ProofIrrelCoeff2}) with $\deg b^{(i)}(x) < \deg m^{(i)}(x)$).
From (\ref{eqn:IrrelevantCoeff_m}), (\ref{eqn:ProofIrrelCoeff3}), and (\ref{eqn:BoundDegLambda}), 
we then obtain
\begin{equation} \label{eqn:ProofIrrelCoeff4}
s + \deg q^{(i)}(x) < \tau^{(i)}. 
\end{equation}
The second claim then follows 
from (\ref{eqn:ProofIrrelCoeff2}) and (\ref{eqn:ProofIrrelCoeff4}).
\end{proof}

Irrelevant coefficients according to Proposition~\ref{proposition:irrelevantcoeff}
may be set to zero without affecting the
solution $\Lambda(x)$. 
In fact, such coefficients can be stripped off as follows.
\begin{proposition}[Degree Reduction] \label{proposition:reducedPIproblem}
For any $u>0$ satisfying (\ref{eqn:BoundDegLambda}), 
let 
\begin{equation}
s^{(i)} \eqdef \max\{ \tau^{(i)} -u, 0 \}
\end{equation}
and define the polynomials $\tilde b^{(i)}(x)$ and $\tilde m^{(i)}(x)$ with
\begin{equation}
\tilde b_{\ell}^{(i)} \eqdef b_{\ell+s^{(i)}}^{(i)}
\end{equation}
and 
\begin{equation}
\tilde m_{\ell}^{(i)} \eqdef m_{\ell+s^{(i)}}^{(i)}
\end{equation}
for $\ell\geq 0$.
Then the modified SPI problem
with $b^{(i)}(x)$, $m^{(i)}(x)$, and $\tau^{(i)}$ replaced by
$\tilde b^{(i)}(x)$, $\tilde m^{(i)}(x)$, and $\tilde \tau^{(i)}\eqdef \tau^{(i)}-s^{(i)}$,
respectively,
has the same solution $\Lambda(x)$ 
as the original SPI problem.
In addition, we have
\begin{equation}\label{eqn:QuoUnChangedReducedEquation}
 b^{(i)}(x) \Lambda(x) \odiv m^{(i)}(x) = \tilde b^{(i)}(x) \Lambda(x) \odiv  \tilde  m^{(i)}(x) 
\end{equation}
\eproofnegspace
\end{proposition}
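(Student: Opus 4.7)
The plan is to lift a reduced polynomial division back to a division modulo $m^{(i)}(x)$ and read off both the identity~(\ref{eqn:QuoUnChangedReducedEquation}) and the reduced degree condition simultaneously. Concretely, I decompose
\[
b^{(i)}(x) = \beta^{(i)}(x) + x^{s^{(i)}} \tilde b^{(i)}(x), \quad m^{(i)}(x) = \mu^{(i)}(x) + x^{s^{(i)}} \tilde m^{(i)}(x),
\]
where $\beta^{(i)}$ and $\mu^{(i)}$ collect the stripped coefficients at positions $0, \ldots, s^{(i)}-1$ (each thus of degree strictly less than $s^{(i)}$). The case $s^{(i)} = 0$ is trivial, so I henceforth assume $s^{(i)} = \tau^{(i)} - u > 0$, and carry out the reduced division $\tilde b^{(i)}(x) \Lambda(x) = \tilde q^{(i)}(x) \tilde m^{(i)}(x) + \tilde r^{(i)}(x)$, with $\deg \tilde q^{(i)}(x) < \deg \Lambda(x) \leq u$.

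Multiplying the reduced division by $x^{s^{(i)}}$ and substituting the two decompositions yields
\[
b^{(i)}(x) \Lambda(x) = \tilde q^{(i)}(x) m^{(i)}(x) + \bigl( \beta^{(i)}(x) \Lambda(x) - \tilde q^{(i)}(x) \mu^{(i)}(x) + x^{s^{(i)}} \tilde r^{(i)}(x) \bigr).
\]
The core computation is to verify that the bracketed expression has degree strictly less than $\deg m^{(i)}(x)$; once this holds, the uniqueness of Euclidean division identifies the bracket with $r^{(i)}(x) = b^{(i)}(x) \Lambda(x) \bmod m^{(i)}(x)$ and forces $q^{(i)}(x) = \tilde q^{(i)}(x)$, which is exactly~(\ref{eqn:QuoUnChangedReducedEquation}). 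The three term-by-term degree bounds follow routinely from the hypothesis $\deg \Lambda(x) \leq u$, the degree bounds on $\beta^{(i)}$ and $\mu^{(i)}$, and $\deg \tilde r^{(i)}(x) < \deg \tilde m^{(i)}(x) = \deg m^{(i)}(x) - s^{(i)}$.

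To transfer the SPI degree bound I observe that the first two bracketed terms, $\beta^{(i)}(x) \Lambda(x)$ and $\tilde q^{(i)}(x) \mu^{(i)}(x)$, in fact have degree strictly less than $\tau^{(i)}$ by the same bookkeeping, so $\deg r^{(i)}(x) < \tau^{(i)}$ is equivalent to $\deg \tilde r^{(i)}(x) < \tau^{(i)} - s^{(i)} = \tilde \tau^{(i)}$; hence $\Lambda(x)$ satisfies the reduced congruences. Since this equivalence applies bidirectionally to any nonzero polynomial of degree at most $u$, any nonzero reduced solution of strictly smaller degree than $\Lambda(x)$ would also solve the original SPI problem, contradicting the minimality of $\Lambda(x)$. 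The main obstacle is precisely this degree accounting: each bracketed term must be shown to stay below both $\deg m^{(i)}(x)$ and $\tau^{(i)}$, and the bound on $\beta^{(i)}(x) \Lambda(x)$ saturates at $\tau^{(i)} - 1$, so the estimates are tight but entirely routine.
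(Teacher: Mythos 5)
Your proof is correct and follows essentially the same route as the paper: the paper first zeroes the low-order coefficients (justified by Proposition~\ref{proposition:irrelevantcoeff}) and then divides by $x^{s^{(i)}}$, which is exactly your decomposition $b^{(i)}=\beta^{(i)}+x^{s^{(i)}}\tilde b^{(i)}$, $m^{(i)}=\mu^{(i)}+x^{s^{(i)}}\tilde m^{(i)}$ carried out in one explicit division identity. Your version is merely more self-contained, spelling out the degree bookkeeping and the quotient identity (\ref{eqn:QuoUnChangedReducedEquation}) that the paper dismisses as obvious from (\ref{eqn:ProofIrrelCoeff2}).
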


\begin{proof}
Consider an auxiliary simultaneous partial-inverse problem with
$b^{(i)}(x)$ replaced by $x^{s^{(i)}} \tilde b^{(i)}(x)$ and
$m^{(i)}(x)$ replaced by $x^{s^{(i)}} \tilde m^{(i)}(x)$ (and $\tau^{(i)}$ unchanged).
This auxiliary problem has the same solution
as the original problem by Proposition~\ref{proposition:irrelevantcoeff}.
The equivalence of this auxiliary problem with the 
modified problem is obvious from (\ref{eqn:ProofIrrelCoeff2}).
\end{proof}

\subsection{Monomialized SPI Problem}
\label{section:MonomializedSPI}

For a given SPI problem, let $u$ be 
a (nonnegative) integer 
that satisfies (\ref{eqn:BoundDegLambda}).
Moreover, let
$n^{(i)}\eqdef \deg m^{(i)}(x)$.

It turns out that the given SPI problem (with general moduli $m^{(i)}(x)$) 
can be transformed into another SPI problem where 
(\ref{eqn:SPI}) is replaced with
\begin{equation} \label{eqn:MonomializedSPI}
\deg\!\Big( \tilde b^{(i)}(x) \Lambda(x) \bmod x^{n^{(i)}-\tau^{(i)}+u} \Big)< u
\end{equation}
with $\tilde b^{(i)}(x)$ as defined below. 
The precise statement is given as Theorem~\ref{theorem:MonomializedSPI} below. 

We will need the additional condition
\begin{equation}\label{eqn:MonPIPCondd}
n^{(i)} - \tau^{(i)} + u > 0
\end{equation} 
for all $i\in \{1,\ldots,L\}$. 
Note that this condition does not entail any loss in generality: 
for $u>0$, (\ref{eqn:MonPIPCondd}) is always satisfied, 
and by (\ref{eqn:BoundDegLambda}), $u=0$ is admissible 
only if the SPI problem has the trivial solution $\Lambda(x)=1$.

The polynomial $\tilde b^{(i)}(x)$ in (\ref{eqn:MonomializedSPI}) is
defined as follows. 
Let 
\begin{IEEEeqnarray}{rCl}
\overline{b}^{(i)}(x) & \eqdef & x^{n^{(i)}-1}b^{(i)}(x^{-1}). \label{eqn:RevbPl}\\
\overline{m}^{(i)}(x) & \eqdef & x^{n^{(i)}}m^{(i)}(x^{-1}). \label{eqn:RevmPl}
\end{IEEEeqnarray}
Moreover, let $w^{(i)}(x)$ be the inverse of 
\begin{equation}\label{eqn:TrancatedBarmx}
\overline{m}^{(i)}(x)\bmod  x^{n^{(i)}-\tau^{(i)}+u}
\end{equation}
in the ring $F[x]/x^{n^{(i)}-\tau^{(i)}+u}$; this inverse exists because $\overline{m}^{(i)}(0)\neq 0$,
which implies that $\overline{m}^{(i)}(x)$ is relatively prime to  $x^{n^{(i)}-\tau^{(i)}+u}$.
Further, let
\begin{equation}\label{eqn:NewSymseq}
s^{(i)}(x)\eqdef \big(w^{(i)}(x) \overline{b}^{(i)}(x)\big) \bmod x^{n^{(i)}-\tau^{(i)}+u},
\end{equation} 
and finally 
\begin{equation}\label{proof:DefineTildebx}
\tilde b(x)\eqdef x^{n^{(i)}-\tau^{(i)}+u-1}s^{(i)}(x^{-1}).
\end{equation}

\begin{theorem}[Monomialized SPI Problem]\label{theorem:MonomializedSPI}
For a given SPI problem,
let $u$ be an integer satisfying both (\ref{eqn:BoundDegLambda}) and (\ref{eqn:MonPIPCondd}).
Then the modified SPI problem
where $b^{(i)}(x)$, $m^{(i)}(x)$, and $\tau^{(i)}$ are replaced by 
$\tilde b^{(i)}(x)$ (as defined above), $x^{n^{(i)}-\tau^{(i)}+u}$, and $u$, 
respectively, 
has the same solution $\Lambda(x)$ as the original SPI problem.
In addition, we have
\begin{equation}\label{eqn:QuoUnChangedMonoEquation}
 b^{(i)}(x) \Lambda(x) \odiv m^{(i)}(x) = \tilde b^{(i)}(x) \Lambda(x) \odiv  x^{n^{(i)}-\tau^{(i)}+u}
\end{equation}
\end{theorem}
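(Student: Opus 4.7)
The plan is to prove the theorem by translating the original division identity $b^{(i)}(x)\Lambda(x) = q^{(i)}(x) m^{(i)}(x) + r^{(i)}(x)$ (with $\deg r^{(i)}(x) < \tau^{(i)}$ and $\deg q^{(i)}(x) < \deg\Lambda(x) \leq u$, by Proposition~\ref{proposition:irrelevantcoeff}-type reasoning as in (\ref{eqn:ProofIrrelCoeff3})) into an analogous identity over the monomial modulus $x^{\mu}$, where $\mu \eqdef n^{(i)} - \tau^{(i)} + u$. For this I would introduce a length-$N$ reversal operator $[p]_N(x) \eqdef x^N p(x^{-1})$, valid whenever $\deg p(x) \leq N$, and rely on its two basic properties: $[p\,q]_{A+B} = [p]_A\,[q]_B$ (for $\deg p \leq A$, $\deg q \leq B$), and $[[p]_M]_N = x^{N-M}p$ (for $\deg p \leq M \leq N$).

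First, I would apply the reversal $[\,\cdot\,]_{n^{(i)}+u-1}$ to both sides of the division identity, splitting each product according to the factorization rule and using $x^{n^{(i)}+u-1-\tau^{(i)}+1} = x^{\mu}$ for the prefactor that appears in front of the reversed remainder. This yields
\begin{equation}
\overline{b}^{(i)}(x)\,[\Lambda]_u(x) \;=\; [q^{(i)}]_{u-1}(x)\,\overline{m}^{(i)}(x) \;+\; x^{\mu}\,[r^{(i)}]_{\tau^{(i)}-1}(x),
\end{equation}
so that reduction modulo $x^{\mu}$ kills the remainder term. Multiplying by $w^{(i)}(x)$ and using its defining property $w^{(i)}(x)\overline{m}^{(i)}(x)\equiv 1\pmod{x^{\mu}}$ gives $s^{(i)}(x)\,[\Lambda]_u(x) \equiv [q^{(i)}]_{u-1}(x)\pmod{x^{\mu}}$.

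Next, I would re-reverse this congruence at length $\mu+u-1$. The left side becomes $\tilde b^{(i)}(x)\Lambda(x)$ by the factorization rule together with $[[\Lambda]_u]_u = \Lambda$ and the defining formula (\ref{proof:DefineTildebx}); the $x^{\mu}$-shift of the error term reverses to a polynomial of degree $\leq u-1$, while $[q^{(i)}]_{u-1}$ re-reverses to $x^{\mu} q^{(i)}(x)$. The net result is the identity
\begin{equation}
\tilde b^{(i)}(x)\Lambda(x) \;=\; x^{\mu}\,q^{(i)}(x) \;+\; \tilde r^{(i)}(x), \quad \deg \tilde r^{(i)}(x) < u,
\end{equation}
which is exactly the monomialized division: the remainder has degree $< u$ (so the new SPI condition (\ref{eqn:MonomializedSPI}) holds), and the quotient is unchanged, yielding (\ref{eqn:QuoUnChangedMonoEquation}). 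The condition (\ref{eqn:MonPIPCondd}) ensures $u \leq \mu$, so the mod-$x^{\mu}$ reduction of $\tilde r^{(i)}(x)$ is itself.

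To conclude that the two SPI problems have the \emph{same} minimum-degree solution, I would run the whole chain in reverse starting from any $\Lambda'(x)$ satisfying the monomialized condition with $\deg \Lambda'(x) \leq u$, showing it must satisfy the original condition, then invoke Proposition~\ref{propo:Uniqueness} on both sides. The main obstacle is purely bookkeeping: making every reversal legitimate requires verifying the degree bounds on each intermediate polynomial (notably that the quotient arising from the monomialized problem has degree $\leq u-1$, using $\deg\Lambda' \leq u$ and $\deg \tilde b^{(i)} \leq \mu-1$), and handling the boundary case $n^{(i)} = \tau^{(i)}$ where $\mu = u$. Once these degree accountings are consistent, the equivalence and the quotient identity follow directly.
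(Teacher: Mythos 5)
Your proposal is correct and follows essentially the same route as the paper's proof: reverse the division identity $b^{(i)}\Lambda = q^{(i)}m^{(i)} + r^{(i)}$, reduce modulo $x^{\mu}$ and multiply by $w^{(i)}(x)$ to pass to $s^{(i)}(x)$, re-reverse to obtain the monomialized identity with the same quotient $q^{(i)}(x)$, and then run the chain backwards (multiplying by $\overline{m}^{(i)}(x)$ to undo $w^{(i)}(x)$) to show any solution of the modified problem of degree at most $u$ satisfies the original condition, whence uniqueness gives equality of the solutions. The paper carries out exactly this plan with explicit named reversals in place of your $[\,\cdot\,]_N$ operator, and your degree bookkeeping (in particular $\deg q^{(i)} < \deg\Lambda \leq u$ and the divisibility of $\overline{b}^{(i)}\hat\Lambda - \hat q^{(i)}\overline{m}^{(i)}$ by $x^{\mu}$ with quotient of degree below $\tau^{(i)}$) matches the paper's.
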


Note that the computation of $\tilde b^{(i)}(x)$ requires the computation 
of $w^{(i)}(x)$ ($=$ the inverse of (\ref{eqn:TrancatedBarmx}) 
in $F[x]/x^{n^{(i)}-\tau^{(i)}+u}$),
which can be computed by the extended Euclidean algorithm or by 
the algorithms in \cite[Sec~IV]{YuLoeliger2016IT} 
(which coincide with the SPI algorithms of Section~\ref{sec:SPIAlg} for $L=1$).

\begin{proofof}{of Theorem~\ref{theorem:MonomializedSPI}}
Consider the original SPI problem (\ref{eqn:SPI}) and 
let $\Lambda(x)$ be its solution (which is unique up to a nonzero scale factor).
Let
\begin{equation}
r^{(i)}(x)\eqdef b^{(i)}(x) \Lambda(x) \bmod m^{(i)}(x),
\end{equation}
where $\deg r^{(i)}(x)<\tau^{(i)}$.
We then write
\begin{equation}
r^{(i)}(x) = b^{(i)}(x)\Lambda(x) - q^{(i)}(x)m^{(i)}(x) \label{eqn:RemainderPIVQuotient}
\end{equation}
for some (unique) $q^{(i)}(x)$ with 
\begin{equation} \label{eqn:MonomialProofDegqx}
\deg q^{(i)}(x)<\deg \Lambda(x)\leq u.
\end{equation}
Now let 
\begin{eqnarray} 
\overline{\Lambda}(x)& \eqdef & x^{u} \Lambda(x^{-1})  \label{eqn:RevLamPly}\\
\overline{q}^{(i)}(x) & \eqdef & x^{u-1}q^{(i)}(x^{-1})   \label{eqn:RevqPly} \\
\overline{r}^{(i)}(x) & \eqdef & x^{\tau^{(i)}-1}r^{(i)}(x^{-1}). \label{eqn:RevrPl}
\end{eqnarray}
By substituting $x^{-1}$ for $x$ in (\ref{eqn:RemainderPIVQuotient})
and multiplying both sides by $x^{n^{(i)}+u-1}$ (i.e., reversing (\ref{eqn:RemainderPIVQuotient})), we obtain
\begin{equation}
x^{n^{(i)}+u-\tau^{(i)}}\overline{r}^{(i)}(x)
=\overline{b}^{(i)}(x) \overline{\Lambda}(x) - \overline{q}^{(i)}(x)\overline{m}^{(i)}(x).
\end{equation}
We then have
\begin{equation}
\overline{b}^{(i)}(x) \overline{\Lambda}(x) \equiv \overline{q}^{(i)}(x)\overline{m}^{(i)}(x) \mod x^{n^{(i)}-\tau^{(i)}+u}
\end{equation}
and thus 
\begin{equation} \label{eqn:MonomialedEqn}
s^{(i)}(x) \overline{\Lambda}(x) \equiv \overline{q}^{(i)}(x) \mod x^{n^{(i)}-\tau^{(i)}+u}
\end{equation}
with $s^{(i)}(x)$ defined in (\ref{eqn:NewSymseq}).
Note that $\deg\overline\Lambda(x)\leq u$ 
and $\deg\overline q^{(i)}(x)<u$ from (\ref{eqn:MonomialProofDegqx}).

We now write (\ref{eqn:MonomialedEqn}) as
\begin{equation} \label{eqn:MonomialedEqn.1}
s^{(i)}(x) \overline{\Lambda}(x) = \overline p^{(i)}(x)x^{n^{(i)}-\tau^{(i)}+u} + \overline{q}^{(i)}(x)
\end{equation}
for some (unique) $\overline p^{(i)}(x)$ 
with $\deg \overline p^{(i)}(x)<\deg\overline{\Lambda}(x)\leq u$, and let
\begin{equation}
p^{(i)}(x)\eqdef x^{u-1}\overline p^{(i)}(x^{-1}).
\end{equation}
By substituting $x^{-1}$ for $x$ in (\ref{eqn:MonomialedEqn.1})
and multiplying both sides by $ x^{n^{(i)}-\tau^{(i)}+2u-1}$, we obtain
\begin{equation}\label{eqn:MonomialedEqnWanted}
\tilde{b}^{(i)}(x) \Lambda(x) = x^{n^{(i)}-\tau^{(i)}+u}{q}^{(i)}(x) + p^{(i)}(x),
\end{equation}
from which we have
\begin{equation}\label{eqn:MonomializedSPIProblem} 
\deg\!\Big(\tilde b^{(i)}(x) \Lambda(x) \bmod  x^{n^{(i)}-\tau^{(i)}+u}\Big)< u.
\end{equation} 
We have arrived at the modified SPI problem.

Now, let $\tilde \Lambda(x)$ denote the solution of the modified SPI problem,
which implies
\begin{equation}\label{eqn:TheTildeLambdaDegree}
\deg \tilde \Lambda(x)\leq \deg \Lambda(x).
\end{equation}
In the following, we will show
\begin{equation}\label{eqn:TheGoaltildeLambda}
\deg \tilde \Lambda(x)\geq \deg \Lambda(x).
\end{equation}
When this is established, we have $\deg \tilde \Lambda(x)=\deg \Lambda(x)$;
thus $\Lambda(x)$ solves the modified SPI problem,  
and (\ref{eqn:QuoUnChangedMonoEquation}) is obvious from (\ref{eqn:MonomialedEqnWanted}).

It remains to prove (\ref{eqn:TheGoaltildeLambda}). 
We begin by writing
\begin{equation}
\deg\!\Big(\tilde b^{(i)}(x) \tilde\Lambda(x) \bmod  x^{n^{(i)}-\tau^{(i)}+u}\Big)< u
\end{equation} 
with $\deg\tilde \Lambda(x)\leq u$ because of (\ref{eqn:TheTildeLambdaDegree}).
Now, let
\begin{equation}
 \tilde p^{(i)}(x)\eqdef \tilde b^{(i)}(x) \tilde\Lambda(x) \bmod  x^{n^{(i)}-\tau^{(i)}+u},
\end{equation}
where $\deg\tilde p^{(i)}(x)<u$. We then write
\begin{equation}\label{eqn:revbpq}
\tilde{b}^{(i)}(x) \tilde\Lambda(x) =  x^{n^{(i)}-\tau^{(i)}+u}\tilde{q}^{(i)}(x) + \tilde p^{(i)}(x)
\end{equation}
for some (unique) $\tilde{q}^{(i)}(x)$ with $\deg \tilde{q}^{(i)}(x)<\deg \tilde \Lambda(x)$.

Further, let 
\begin{eqnarray}
\hat{\Lambda}(x)& \eqdef & x^{u} \tilde\Lambda(x^{-1})  \label{eqn:RevtildeLamPly}\\
\hat{p}^{(i)}(x) & \eqdef & x^{u-1}\tilde p^{(i)}(x^{-1})   \label{eqn:RevtildepPly} \\
\hat{q}^{(i)}(x) & \eqdef & x^{u-1}\tilde q^{(i)}(x^{-1}). \label{eqn:RevtildeqPly}
\end{eqnarray}
By substituting $x^{-1}$ for $x$ in (\ref{eqn:revbpq})
and multiplying both sides by $x^{n^{(i)}-\tau^{(i)}+2u-1}$, we obtain
\begin{equation}
{s}^{(i)}(x) \hat\Lambda(x) =  x^{n^{(i)}-\tau^{(i)}+u}\hat{p}^{(i)}(x) + \hat q^{(i)}(x),
\end{equation}
where ${s}^{(i)}(x)$ is obtained from (\ref{proof:DefineTildebx}). 
It follows that 
\begin{equation}
s^{(i)}(x)  \hat\Lambda(x)  \equiv \hat{q}^{(i)}(x) \mod  x^{n^{(i)}-\tau^{(i)}+u}
\end{equation}
and therefore 
\begin{equation}\label{eqn:revwbhatLamhatq}
w^{(i)}(x) \overline{b}^{(i)}(x) \hat{\Lambda}(x) \equiv \hat{q}^{(i)}(x) \mod  x^{n^{(i)}-\tau^{(i)}+u}.
\end{equation}
By multiplying both sides of (\ref{eqn:revwbhatLamhatq}) by $\overline{m}^{(i)}(x)$, we obtain
\begin{equation}\overline{b}^{(i)}(x) \hat{\Lambda}(x) \equiv \hat{q}^{(i)}(x)\overline{m}^{(i)}(x) \mod  x^{n^{(i)}-\tau^{(i)}+u},
\end{equation}
and therefore 
\begin{equation}\label{eqn:EndrevbhatLamhatqhatr}
\overline{b}^{(i)}(x) \hat{\Lambda}(x)-\hat{q}^{(i)}(x)\overline{m}^{(i)}(x)= x^{n^{(i)}-\tau^{(i)}+u}\hat r^{(i)}(x) 
\end{equation}
holds for some (unique) $\hat r^{(i)}(x)$ with $\deg \hat r^{(i)}(x)<\tau^{(i)}$.
Let
\begin{equation} 
\tilde r^{(i)}(x)\eqdef  x^{\tau^{(i)}-1} \hat r^{(i)}(x^{-1}).
\end{equation}
By substituting $x^{-1}$ for $x$ in (\ref{eqn:EndrevbhatLamhatqhatr})
and multiplying both sides by $x^{n^{(i)}+u-1}$, we obtain
\begin{equation}
{b}^{(i)}(x) \tilde{\Lambda}(x)-\tilde{q}^{(i)}(x){m}^{(i)}(x)=\tilde r^{(i)}(x) 
\end{equation}
and therefore
\begin{equation}\label{eqn:TheEndSPIProb}
\deg\!\Big(b^{(i)}(x) \tilde\Lambda(x) \bmod m^{(i)}(x)\Big)<\tau^{(i)}.
\end{equation} 
Thus $\tilde\Lambda(x)$ satisfies (\ref{eqn:SPI}), 
and (\ref{eqn:TheGoaltildeLambda}) follows.
\end{proofof}

\section{Utilizing the SPI Problem for Decoding Interleaved Reed--Solomon Codes}
\label{sec:SPIProblem4Decoding}

\subsection{About Interleaved Reed--Solomon Codes}
\label{sec:IRSCodes}

We first establish some (more or less standard) concepts 
and the pertinent notation.

\subsubsection{Array Codes and Evaluation Isomorphism}

Let $F=F_q$ be a finite field with $q$ elements. 
We consider array codes as defined in Section~\ref{sec:SPIIntroduction} 
where codewords are $L\times n$ arrays over $F$ such that 
each row is a codeword in some Reed--Solomon code 
as in (\ref{eqn:TheCode}) 
with blocklength $n$ and dimension $k^{(i)}$, $i\in \{1,\ldots,L\}$.
Let 
\begin{equation} \label{eqn:Mx}
m(x) \eqdef \prod_{\ell=0}^{n-1}(x-\beta_\ell),
\end{equation} 
where $\deg m(x)=n$.
Let $\psi$ be the evaluation mapping
\begin{equation} \label{eqn:DefMappingPsi}
\psi: F[x]/m(x)\rightarrow F^n: a(x) \mapsto \big( a(\beta_0),\ldots, a(\beta_{n-1}) \big),
\end{equation}
which is a ring isomorphism. 
The row code (\ref{eqn:TheCode}) can then be described as
\begin{equation} \label{eqn:RowCodeWords}
\{ c\in F^n: \deg\psi^{-1}(c) < k^{(i)} \}.
\end{equation}

The standard definition of Reed--Solomon codes requires, in addition, that  
\begin{equation}\label{eqn:CondForDFT}
\beta_\ell = \alpha^\ell \text{~for~}\ell=0,\ldots,n-1, 
\end{equation}
where $\alpha\in F$ is a primitive $n$-th root of unity.
This additional condition implies 
\begin{equation}\label{eqn:mxofDFT}
m(x)=x^n-1, 
\end{equation}
which makes the code cyclic 
and turns $\psi$ into a discrete Fourier transform \cite{Blahut}. 
However, (\ref{eqn:CondForDFT})
and (\ref{eqn:mxofDFT}) will not be required below. In particular, the set 
$\{\beta_0,\ldots, \beta_{n-1}\}$ will be permitted to contain 0.

In general, the inverse mapping $\psi^{-1}$ may be computed 
by Lagrange interpolation or according to the Chinese remainder theorem.

We are primarily interested in the special case
where all row codes have the same dimension $k^{(i)} = k$. 
Nonetheless, we allow the general case, for which we define
\begin{equation}\label{eq:defineMaxk}
k_\text{max}\eqdef\max\{k^{(i)}, 1\leq i\leq L\},
\end{equation} 
\begin{equation}\label{eq:defineMink}
k_\text{min}\eqdef\min\{k^{(i)}, 1\leq i\leq L\},
\end{equation} 
and
\begin{equation}\label{eq:defineAvgk}
k_\text{avg}\eqdef \frac{1}{L}\sum_{i=1}^L k^{(i)}.
\end{equation}
Note that the overall rate of the array code is $k_\text{avg}/n$.
Throughout the paper, we will assume
\begin{equation} \label{eqn:kmaxsmallern}
k_\text{max} < n.
\end{equation}

\subsubsection{Notation for Individual Rows and Error Support}\label{sec:NotationForIndRows}

Let $Y = C+E\in F^{L\times n}$ be the received word where $C\in F^{L\times n}$ 
is the transmitted (array-) codeword and $E\in F^{L\times n}$ is the error pattern.
Further, let $y^{(i)}$ be the $i$-th row of the matrix $Y$, 
let $c^{(i)}$ be the $i$-th row of $C$, and
let $e^{(i)}$ be the $i$-th row of $E$. We then have
$y^{(i)}=c^{(i)}+e^{(i)}$,
$i=1,\ldots, L$, and therefore
\begin{equation} \label{eqn:Yix}
Y^{(i)}(x)=C^{(i)}(x)+E^{(i)}(x)
\end{equation}
where $Y^{(i)}(x)\eqdef \psi^{-1}(y^{(i)})$, $C^{(i)}(x) \eqdef
\psi^{-1}(c^{(i)}),$ and $E^{(i)}(x) \eqdef \psi^{-1}(e^{(i)})$.
Note that $\deg E^{(i)}(x) < \deg m(x)=n$ and $\deg C^{(i)}(x) < k^{(i)}$.

We will index the columns of codewords and error patterns 
beginning with zero as in 
$E = (e_0,\ldots,e_{n-1})$, 
and we define
\begin{equation} \label{eqn:ErrorSupportU}
U_E \eqdef \big\{ \ell \in \{ 0,\ldots, n-1\}: e_\ell \neq 0 \big\},
\end{equation}
the index set of the nonzero columns of $E$.
Note that $e_\ell = 0$ if and only if
\begin{equation} \label{eqn:ErrorSupportEval}
E^{(i)}(\beta_\ell) = 0 \text{ for all $i\in \{ 1,\ldots,L \}$.}
\end{equation}
We will only consider column errors, and we will not distinguish between 
columns with a single error and columns with many errors.

\subsubsection{Error Locator Polynomial and Interpolation}
\label{sec:ErrorLocator}

We define the error locator polynomial as%
\footnote{In the literature, the error locator polynomial is more often defined 
with $(x-\beta_j)$ in (\ref{eq:IRSerrorlocator})
replaced by $(1-x\beta_j)$.} 
\begin{equation} \label{eq:IRSerrorlocator}
\Lambda_E(x)\eqdef \prod_{\ell\in U_E}(x-\beta_\ell).
\end{equation}
(In particular, $\Lambda_E(x)=1$ if $E=0$.)
Note that
\begin{equation}\label{eqn:ErrorNumAndLamdDegree}
\deg \Lambda_E(x) = |U_E| = \text{number of column errors}.
\end{equation}

If $\Lambda_E(x)$ is known and satisfies $\deg \Lambda_E \leq n-k_\text{max}$, 
the polynomial $C^{(i)}(x)$ %
for each $i\in\{1,\ldots,L\}$ can be recovered in many different ways 
(cf.\ the discussion in \cite{YuLoeliger2016IT}), 
e.g., by means of
\begin{equation}\label{eqn:interpolation2011}
C^{(i)}(x)= \frac{Y^{(i)}(x) \Lambda_E(x) \bmod m(x)}{ \Lambda_E(x)}
\end{equation}
or by means of
\begin{equation} \label{eqn:IRSRecoverCxByDiv}
C^{(i)}(x) = Y^{(i)}(x) \bmod \tilde m(x)
\end{equation}
with $\tilde m(x)\eqdef m(x)/\Lambda_E(x)$
according to \cite[Proposition~9]{YuLoeliger2016IT}. 
For large $L$, (\ref{eqn:IRSRecoverCxByDiv}) may be more attractive.

If we need to recover the actual codeword $c^{(i)}$ (rather than just the polynomial $C^{(i)}(x)$), 
interpolation according to (\ref{eqn:interpolation2011}) or (\ref{eqn:IRSRecoverCxByDiv}) requires 
the additional computation of $\psi(C^{(i)}(x))$. 
Alternatively, it may be attractive to 
compute the error pattern $e^{(i)}=y^{(i)}-c^{(i)}$ 
from
\begin{equation}\label{eqn:TheMamyQi}
Q^{(i)}(x)\eqdef Y^{(i)}(x)\Lambda_E(x)\odiv m(x)
\end{equation}
and Forney's formula
\begin{equation}\label{eqn:ForneyFormula}
e_\ell^{(i)}
  \eqdef \left\{ \begin{array}{ll}
         0 & \text{if $\Lambda_E(\beta_\ell)\neq0$} \\
         \frac{Q^{(i)}(\beta_\ell)m'(\beta_\ell)}{\Lambda_E'(\beta_\ell)} & \text{if $\Lambda_E(\beta_\ell)=0$}
        \end{array}\right.
\end{equation}
for $\ell=0,1,\ldots, n-1$, where $\Lambda_E'(x)$ and $m'(x)$ 
denote the formal derivatives of $\Lambda_E(x)$ and $m(x)$, respectively.

\subsubsection{Shiozaki--Gao Error-Locator Equation \cite{Shiozaki,Gao}}
\label{subsection:SPIELEquation}

From (\ref{eqn:ErrorSupportEval}) and (\ref{eq:IRSerrorlocator}), 
we have
\begin{equation}\label{eqn:ClassicalKeyEqn}
E^{(i)}(x) \Lambda_E(x) \bmod m(x) = 0
\end{equation}
and thus
\begin{equation}
Y^{(i)}(x) \Lambda_E(x) \bmod m(x) = C^{(i)}(x) \Lambda_E(x) \bmod m(x).
\end{equation}
If $|U_E|\leq n-k_\text{max}$, we obtain
\begin{equation} \label{IRS:KeyDecdoingDegCond}
\deg \!\big( Y^{(i)}(x) \Lambda_E(x) \bmod m(x) \big) 
  < k^{(i)} + |U_E|
 \end{equation}
for all $i\in\{1,\ldots,L\}$.

\subsection{Partial-Inverse Conditions}
\label{sec:PIC}

We now begin to develop the specific approach of this paper.
We first note that any finite set of inequalities of the form (\ref{eqn:SPI}) 
(indexed by $i\in\{1,\ldots,L\}$, 
with suitable polynomials $b^{(i)}(x)$ and $m^{(i)}(x)$, 
and with suitable integers $\tau^{(i)}\in\Z$)
implicitly defines an SPI problem. 
\begin{definition}[SPI Solution]
The solution of such an SPI problem will be called 
the \emph{SPI solution} of the inequalities. 
\end{definition}
Clearly, the SPI solution is unique, up to a nonzero scale factor in $F$.

For a given code as above and some arbitrary, but fixed, 
error pattern $E\in F^{L \times n}$, 
we now consider the conditions
\begin{equation} \label{eqn:PICond}
\deg \!\Big( E^{(i)}(x) \Lambda(x) \bmod m(x) \Big) < k^{(i)} + |U_E|,
\end{equation} 
$i = 1,\ldots, L$.
Note that
$\Lambda(x)=\Lambda_E(x)$ satisfies (\ref{eqn:PICond}) by (\ref{eqn:ClassicalKeyEqn}). 

\begin{definition}[Partial-Inverse Condition] \label{def:PIC}
$E$ satisfies the \emph{partial-inverse condition} 
if $|U_E|\leq n-k_\text{max}$ and
the SPI solution of (\ref{eqn:PICond}) is $\Lambda_E(x)$.
\end{definition}

We will see in Section~\ref{sec:DecodingFailureAnalysis}
that $E$ always satisfies the partial-inverse condition 
if $|U_E| \leq (n-k_\text{max})/2$,
and $E$ is very likely to satisfy the partial-inverse condition 
if $|U_E| \leq \frac{L}{L+1}(n - k_\text{avg})$.

We now proceed to derive several equivalent formulations of the partial-inverse condition: 
the received-word version (Proposition~\ref{proposition:PICond}), 
the syndrome version (Theorem~\ref{theorem:ReducedPICond}), 
and the monomialized version (Theorem~\ref{theorem:MonomializedPICondQu}).

\begin{proposition}[Received-Word Partial-Inverse Condition]\label{proposition:PICond}%
Let $Y=C+E \in F^{L\times n}$ where $C$ is a codeword and 
$|U_E|\leq n-k_\text{max}$. Then 
$E$ satisfies the partial-inverse condition
if and only if the SPI solution of 
\begin{equation} \label{eqn:PICondY}
\deg \!\Big( Y^{(i)}(x) \Lambda(x) \bmod m(x) \Big) < k^{(i)} + |U_E|,
\end{equation}
$i = 1,\ldots,L$,
is $\Lambda_E(x)$.
\end{proposition}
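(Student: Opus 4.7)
The plan is to show that, under the standing hypothesis $|U_E|\leq n-k_\text{max}$, the two systems of inequalities (\ref{eqn:PICond}) and (\ref{eqn:PICondY}) cut out the same SPI solution. Once this is established, the equivalence stated in the proposition is immediate from Definition~\ref{def:PIC}.

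First I would check that $\Lambda_E(x)$ is a feasible (not necessarily minimal-degree) candidate for both SPI problems. For (\ref{eqn:PICond}) this is exactly the Shiozaki--Gao identity (\ref{IRS:KeyDecdoingDegCond}) applied to $E$. For (\ref{eqn:PICondY}) I would write $Y^{(i)}(x) = C^{(i)}(x) + E^{(i)}(x)$ and use the ring-homomorphism $F[x]\to F[x]/m(x)$ to obtain
\begin{equation*}
Y^{(i)}(x)\Lambda_E(x)\bmod m(x) = \bigl(C^{(i)}(x)\Lambda_E(x)\bmod m(x)\bigr) + \bigl(E^{(i)}(x)\Lambda_E(x)\bmod m(x)\bigr).
\end{equation*}
Since $\deg C^{(i)}(x)<k^{(i)}$, $\deg\Lambda_E(x)=|U_E|$, and $k^{(i)}+|U_E|\leq n$, the product $C^{(i)}(x)\Lambda_E(x)$ already has degree $<n$, so the reduction mod $m(x)$ does nothing and the first summand has degree $<k^{(i)}+|U_E|$; the second summand satisfies the same bound by (\ref{IRS:KeyDecdoingDegCond}).

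The decisive observation is then a degree bound on any SPI solution. Because $\Lambda_E(x)$ is feasible for both problems and has degree exactly $|U_E|$, the SPI solutions of (\ref{eqn:PICond}) and (\ref{eqn:PICondY}) both have degree at most $|U_E|$. Restricting attention to polynomials $\Lambda(x)$ with $\deg\Lambda(x)\leq |U_E|$, the same computation as above gives $\deg\bigl(C^{(i)}(x)\Lambda(x)\bigr)<k^{(i)}+|U_E|\leq n$, whence
\begin{equation*}
Y^{(i)}(x)\Lambda(x)\bmod m(x) = C^{(i)}(x)\Lambda(x) + \bigl(E^{(i)}(x)\Lambda(x)\bmod m(x)\bigr).
\end{equation*}
Since the first summand already has degree $<k^{(i)}+|U_E|$, the sum satisfies the bound $<k^{(i)}+|U_E|$ if and only if $\deg\bigl(E^{(i)}(x)\Lambda(x)\bmod m(x)\bigr)<k^{(i)}+|U_E|$. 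In other words, (\ref{eqn:PICond}) and (\ref{eqn:PICondY}) are equivalent for every $\Lambda(x)$ of degree at most $|U_E|$.

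Combining these facts, the two SPI problems have the same set of minimal-degree solutions, so by Proposition~\ref{propo:Uniqueness} their SPI solutions agree up to a scale factor, proving the proposition. The only point that requires care is the apparently circular use of the degree bound $|U_E|$: it is justified because $\Lambda_E(x)$ exhibits a feasible polynomial of degree $|U_E|$ for each of the two problems, so any SPI solution cannot exceed this degree, and thereby the equivalence of the two constraints on the restricted set $\{\Lambda: \deg\Lambda\leq |U_E|\}$ suffices. This is the one step that deserves to be spelled out explicitly in the write-up.
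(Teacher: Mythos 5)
Your proposal is correct and follows essentially the same route as the paper: both arguments use the feasibility of $\Lambda_E(x)$ to bound the degree of either SPI solution by $|U_E|$, and then observe that for $\deg\Lambda(x)\leq|U_E|$ the term $C^{(i)}(x)\Lambda(x)\bmod m(x)$ already has degree below $k^{(i)}+|U_E|$, so (\ref{eqn:PICond}) and (\ref{eqn:PICondY}) are equivalent on that restricted set. The only cosmetic slip is citing (\ref{IRS:KeyDecdoingDegCond}) for the feasibility of $\Lambda_E(x)$ in (\ref{eqn:PICond}); the direct justification is (\ref{eqn:ClassicalKeyEqn}).
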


\begin{proof}
Consider any fixed $E$ with $|U_E|\leq n-k_\text{max}$. 
From (\ref{eqn:ClassicalKeyEqn}), 
the solution of the SPI problem (\ref{eqn:PICond}) has degree at most $|U_E|$. 
From (\ref{IRS:KeyDecdoingDegCond}), 
the solution of the SPI problem (\ref{eqn:PICondY}) has degree at most $|U_E|$ as well.
But for any $\Lambda(x)$ with $\deg \Lambda(x) \leq |U_E|$, we have
\begin{equation}
\deg \!\Big( C^{(i)}(x) \Lambda(x) \bmod m(x) \Big) < k^{(i)} + |U_E|
\end{equation}
for $i=1,\ldots,L$. 
Thus $\Lambda(x)$ satisfies (\ref{eqn:PICond}) if and only if it satisfies (\ref{eqn:PICondY}).
\end{proof}

The conditions (\ref{eqn:PICondY}) can be simplified as follows.
For a given code and received word $Y=C+E$, let 
\begin{IEEEeqnarray}{rCl}
S^{(i)}(x) 
& \eqdef & Y^{(i)}_{k^{(i)}} + Y^{(i)}_{{k^{(i)}}+1}x + \ldots + Y^{(i)}_{n-1} x^{n-{k^{(i)}}-1}
           \IEEEeqnarraynumspace\label{eqn:Truncatebx}\\
& = & E^{(i)}_{k^{(i)}} + E^{(i)}_{{k^{(i)}}+1}x + \ldots + E^{(i)}_{n-1} x^{n-{k^{(i)}}-1}
\end{IEEEeqnarray}
(the syndrome polynomials) and
\begin{equation} \label{eqn:Truncatemx}
\tilde m^{(i)}(x)  \eqdef m_{k^{(i)}} + m_{k^{(i)}+1}x + \ldots + m_n x^{n-k^{(i)}}. 
\end{equation}

\begin{theorem}[Syndrome-Based Partial-Inverse Condition]\label{theorem:ReducedPICond}
Let $Y=C+E \in F^{L\times n}$ where $C$ is a codeword and
$|U_E|\leq n-k_\text{max}$. Then
$E$ satisfies the partial-inverse condition 
if and only if the SPI solution of 
\begin{equation} \label{eqn:ReducedPICond}
\deg \!\Big( S^{(i)}(x) \Lambda(x) \bmod \tilde m^{(i)}(x) \Big) < |U_E|,
\end{equation} 
$i = 1,\ldots,L$,
is $\Lambda_E(x)$.
Moreover, if $E$ satisfies the partial-inverse condition, then
\begin{equation}\label{eqn:ReducedPICondQu}
 Y^{(i)}(x) \Lambda_E(x) \odiv m(x) = S^{(i)}(x) \Lambda_E(x) \odiv  \tilde  m^{(i)}(x).
\end{equation}
\end{theorem}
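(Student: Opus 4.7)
The plan is to derive this theorem from the already-established Received-Word Partial-Inverse Condition (Proposition~\ref{proposition:PICond}) by a single application of the Degree Reduction Proposition~\ref{proposition:reducedPIproblem}. Concretely, I would view (\ref{eqn:PICondY}) as an SPI problem with $b^{(i)}(x)=Y^{(i)}(x)$, $m^{(i)}(x)=m(x)$, and $\tau^{(i)}=k^{(i)}+|U_E|$. From the proof of Proposition~\ref{proposition:PICond} we already know that every SPI solution of (\ref{eqn:PICondY}) has degree at most $|U_E|$, so the hypothesis $u\geq\deg\Lambda(x)$ of Proposition~\ref{proposition:reducedPIproblem} is satisfied by $u=|U_E|$.

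With this choice, the shift computes to $s^{(i)}=\max\{\tau^{(i)}-u,\,0\}=k^{(i)}$, and a quick coefficient-matching shows that the stripped polynomials defined in Proposition~\ref{proposition:reducedPIproblem} are precisely the syndrome polynomial $S^{(i)}(x)$ from (\ref{eqn:Truncatebx}) and the truncated modulus $\tilde m^{(i)}(x)$ from (\ref{eqn:Truncatemx}), while the reduced threshold is $\tilde\tau^{(i)}=\tau^{(i)}-s^{(i)}=|U_E|$. Proposition~\ref{proposition:reducedPIproblem} then asserts that the two SPI problems (\ref{eqn:PICondY}) and (\ref{eqn:ReducedPICond}) have the same solution $\Lambda(x)$. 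Combined with Proposition~\ref{proposition:PICond}, this yields the desired equivalence: the SPI solution of (\ref{eqn:ReducedPICond}) equals $\Lambda_E(x)$ if and only if $E$ satisfies the partial-inverse condition.

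The quotient identity (\ref{eqn:ReducedPICondQu}) drops out by specializing (\ref{eqn:QuoUnChangedReducedEquation}) to $\Lambda(x)=\Lambda_E(x)$. The only potential snag is the edge case $|U_E|=0$, where Proposition~\ref{proposition:reducedPIproblem} nominally requires $u>0$; but in that case $\Lambda_E(x)=1$ trivially satisfies both (\ref{eqn:PICondY}) and (\ref{eqn:ReducedPICond}), and the identity (\ref{eqn:ReducedPICondQu}) holds with both sides equal to $0$, so the statement is vacuous. The hypothesis $|U_E|\leq n-k_\text{max}$ is used only indirectly, through Proposition~\ref{proposition:PICond} and to ensure that $\tau^{(i)}=k^{(i)}+|U_E|\leq n=\deg m^{(i)}(x)$ is a legitimate threshold in the SPI problem. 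Overall there is no real obstacle here: the conceptual content is entirely carried by the earlier Degree Reduction proposition, and the proof reduces to verifying that its output coincides with the syndrome-based formulation.
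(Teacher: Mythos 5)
Your proposal is correct and follows essentially the same route as the paper, which likewise proves the theorem by applying Proposition~\ref{proposition:reducedPIproblem} with $u=|U_E|$ and $s^{(i)}=k^{(i)}$ to the received-word SPI problem (\ref{eqn:PICondY}) and disposes of the case $|U_E|=0$ separately. Your additional checks (that the degree bound $u\geq\deg\Lambda(x)$ holds via (\ref{IRS:KeyDecdoingDegCond}), and that the stripped polynomials coincide with $S^{(i)}(x)$ and $\tilde m^{(i)}(x)$) are exactly the details the paper leaves implicit.
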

Note that (\ref{eqn:ReducedPICondQu}) is the polynomial $Q^{(i)}(x)$ in (\ref{eqn:TheMamyQi}), 
which is required in Forney's formula (\ref{eqn:ForneyFormula}).

For $|U_E|>0$, the proof of Theorem~\ref{theorem:ReducedPICond}
follows from Proposition~\ref{proposition:reducedPIproblem}
with $u = |U_E|$ and $s^{(i)}=k^{(i)}$.
For $|U_E|=0$, (\ref{eqn:PICond}), (\ref{eqn:ReducedPICond}),
and (\ref{eqn:ReducedPICondQu})
are always satisfied.

The partial-inverse condition for general $m(x)$ can be reduced to
a partial-inverse condition for $m(x)=x^{n-{k^{(i)}}}$ as follows.
For a given code and received word $Y=C+E$, let 
\begin{equation} \label{eqn:MonomPICond}
\deg \!\Big( \breve S^{(i)}(x) \Lambda(x) \bmod x^{n-k^{(i)}} \Big) < |U_E|
\end{equation}
be the monomialization of (\ref{eqn:ReducedPICond}) 
according to Theorem~\ref{theorem:MonomializedSPI}
(with $b^{(i)}(x)=S^{(i)}(x)$, $\tilde b^{(i)}(x) = \breve S^{(i)}(x)$, 
$u=|U_E|$, $n^{(i)}=n - k^{(i)}$ and $n^{(i)} - \tau^{(i)} + u = n - k^{(i)}$).

Note that the computation of $\breve S^{(i)}(x)$ from $S^{(i)}(x)$ 
does not depend on $|U_E|$. 
Note also that the condition (\ref{eqn:MonPIPCondd}) translates to (\ref{eqn:kmaxsmallern}).

\begin{theorem}[Monomialized Partial-Inverse Condition] \label{theorem:MonomializedPICondQu}
Let $Y=C+E \in F^{L\times n}$ where $C$ is a codeword and 
$|U_E|\leq n-k_\text{max}$. Then 
$E$ satisfies the partial-inverse condition 
if and only if the SPI solution of (\ref{eqn:MonomPICond}) is $\Lambda_E(x)$.
Moreover, if $E$ satisfies the partial-inverse condition, then
\begin{equation}\label{eqn:MonomializedPICondQu}
 Y^{(i)}(x) \Lambda_E(x) \odiv m(x) = \breve S^{(i)}(x) \Lambda_E(x) \odiv  x^{n-k^{(i)}}.
\end{equation}
\end{theorem}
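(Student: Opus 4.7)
The plan is to derive Theorem~\ref{theorem:MonomializedPICondQu} by composing the two results that precede it: the syndrome-based reformulation of the partial-inverse condition (Theorem~\ref{theorem:ReducedPICond}) and the monomialization of a general SPI problem (Theorem~\ref{theorem:MonomializedSPI}). Since (\ref{eqn:MonomPICond}) is \emph{defined} as the monomialization of (\ref{eqn:ReducedPICond}), the heart of the argument is just to check that Theorem~\ref{theorem:MonomializedSPI} applies, so that the solution polynomial is literally the same for the two SPI problems.

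The first step would be to invoke Theorem~\ref{theorem:ReducedPICond}, which states that $E$ satisfies the partial-inverse condition if and only if the SPI solution of (\ref{eqn:ReducedPICond}) equals $\Lambda_E(x)$. The second step is to apply Theorem~\ref{theorem:MonomializedSPI} to the syndrome SPI problem, with the identifications $b^{(i)}(x) = S^{(i)}(x)$, $m^{(i)}(x) = \tilde m^{(i)}(x)$, $n^{(i)} = n-k^{(i)}$, $\tau^{(i)} = |U_E|$, and $u = |U_E|$. This requires verifying the two hypotheses. For (\ref{eqn:BoundDegLambda}), note that $\Lambda_E(x)$ satisfies (\ref{eqn:PICond}) by (\ref{eqn:ClassicalKeyEqn}), hence by Proposition~\ref{proposition:reducedPIproblem} it is also a feasible polynomial for (\ref{eqn:ReducedPICond}); since its degree is $|U_E|=u$, the (minimum-degree) SPI solution of (\ref{eqn:ReducedPICond}) has degree at most $u$. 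For (\ref{eqn:MonPIPCondd}), we have $n^{(i)} - \tau^{(i)} + u = n-k^{(i)}$, which is positive by the standing assumption (\ref{eqn:kmaxsmallern}). Theorem~\ref{theorem:MonomializedSPI} then guarantees that (\ref{eqn:ReducedPICond}) and (\ref{eqn:MonomPICond}) have the same SPI solution, establishing the first claim.

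For the quotient identity (\ref{eqn:MonomializedPICondQu}), the plan is to chain two already-proved quotient equations. Assuming $E$ satisfies the partial-inverse condition, (\ref{eqn:ReducedPICondQu}) from Theorem~\ref{theorem:ReducedPICond} gives
\begin{equation*}
Y^{(i)}(x)\Lambda_E(x)\odiv m(x) \,=\, S^{(i)}(x)\Lambda_E(x)\odiv\tilde m^{(i)}(x),
\end{equation*}
while (\ref{eqn:QuoUnChangedMonoEquation}) from Theorem~\ref{theorem:MonomializedSPI}, applied to the syndrome SPI problem with $\Lambda(x)=\Lambda_E(x)$, gives
\begin{equation*}
S^{(i)}(x)\Lambda_E(x)\odiv\tilde m^{(i)}(x) \,=\, \breve S^{(i)}(x)\Lambda_E(x)\odiv x^{n-k^{(i)}}.
\end{equation*}
Transitivity yields (\ref{eqn:MonomializedPICondQu}).

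I do not expect a serious obstacle: everything reduces to a careful verification of hypotheses. The only point that needs a separate remark is the degenerate case $|U_E|=0$, where $u=0$ is admissible in Theorem~\ref{theorem:MonomializedSPI} only when the SPI solution is $\Lambda_E(x)=1$; this is exactly the error-free case, in which the partial-inverse condition, the monomialized partial-inverse condition, and the quotient identity all hold trivially (both sides of (\ref{eqn:MonomializedPICondQu}) reduce to the same polynomial division of $Y^{(i)}(x)$).
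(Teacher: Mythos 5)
Your proof is correct and takes essentially the same route as the paper, which simply states that Theorem~\ref{theorem:MonomializedPICondQu} is immediate from Theorems~\ref{theorem:MonomializedSPI} and~\ref{theorem:ReducedPICond}; your proposal just makes explicit the hypothesis checks the paper leaves implicit, namely that $u=|U_E|$ bounds the degree of the SPI solution of (\ref{eqn:ReducedPICond}) because $\Lambda_E(x)$ is feasible by (\ref{eqn:ClassicalKeyEqn}), and that (\ref{eqn:MonPIPCondd}) reduces to (\ref{eqn:kmaxsmallern}). The chaining of (\ref{eqn:ReducedPICondQu}) with (\ref{eqn:QuoUnChangedMonoEquation}) for the quotient identity is exactly what is intended, and your handling of the degenerate case $|U_E|=0$ is consistent with the paper's treatment.
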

Note that (\ref{eqn:MonomializedPICondQu}) is the polynomial $Q^{(i)}(x)$ 
in (\ref{eqn:TheMamyQi}) and (\ref{eqn:ForneyFormula}).
The proof of Theorem~\ref{theorem:MonomializedPICondQu} 
is immediate from  Theorems \ref{theorem:MonomializedSPI}
and~\ref{theorem:ReducedPICond}. 

The complexity of computing $\breve S^{(i)}(x)$
is determined by the complexity of computing the inverse (\ref{eqn:TrancatedBarmx}) 
and the multiplication (\ref{eqn:NewSymseq}), both mod $x^{n-k^{(i)}}$. 
If the inverse is computed with the Euclidean algorithm 
(or with the partial-inverse algorithm of \cite{YuLoeliger2016IT}), 
the complexity of these computations is 
$O\big( L(n-k^{(i)})^2 \big)$ additions and/or multiplications in $F$.
(Asymptotical speed-ups are certainly possible, but outside the scope of this paper.)

In summary, for $|U_E|\leq n-k_\text{max}$, 
the SPI solutions of (\ref{eqn:PICond}), (\ref{eqn:PICondY}), 
(\ref{eqn:ReducedPICond}), and (\ref{eqn:MonomPICond}) coincide. 
In Section~\ref{sec:DecodingFailureAnalysis}, we will see that 
this SPI solution is very likely to be $\Lambda_E(x)$.

\subsection{Computing the Error Locator Polynomial}
\label{sec:ComputingEL}

If $E$ satisfies the partial-inverse condition,
$\Lambda_E(x)$ can be computed in many different ways. 
Let $S^{(i)}(x)$, $\tilde m^{(i)}(x)$, and $\breve S^{(i)}(x)$ be defined 
as in (\ref{eqn:Truncatebx}), (\ref{eqn:Truncatemx}), and (\ref{eqn:MonomPICond}), respectively.

\begin{proposition}[Key Equations] \label{proposition:KeyEqs}
If $E$ satisfies the partial-inverse condition,
then $\Lambda_E(x)$ is the unique (up to a scale factor) 
nonzero polynomial $\Lambda(x)$ of smallest degree such that
\begin{equation} \label{eqn:ReducedKeyEqn}
\deg \!\Big( S^{(i)}(x) \Lambda(x) \bmod \tilde m^{(i)}(x) \Big) < \deg\Lambda(x)
\end{equation}
or, equivalently, such that
\begin{equation} \label{eqn:MonKeyEqn}
\deg \!\Big( \breve S^{(i)}(x) \Lambda(x) \bmod x^{n-k^{(i)}} \Big) < \deg\Lambda(x)
\end{equation}
for all $i \in \{ 1,\ldots,L \}$.
\end{proposition}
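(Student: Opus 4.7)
The plan is to reduce the key-equation characterization to the SPI problem (\ref{eqn:ReducedPICond}) built into the partial-inverse condition, and to run the same argument for (\ref{eqn:MonKeyEqn}) using the monomialized SPI problem (\ref{eqn:MonomPICond}) via Theorem~\ref{theorem:MonomializedPICondQu}. The key observation is that, although (\ref{eqn:ReducedKeyEqn}) has a variable threshold $\deg\Lambda(x)$ while (\ref{eqn:ReducedPICond}) uses the fixed threshold $|U_E|$, any candidate $\Lambda(x)$ of degree at most $|U_E|$ turns the variable threshold into the fixed one automatically.

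First I would check that $\Lambda_E(x)$ is itself a solution of (\ref{eqn:ReducedKeyEqn}). By the partial-inverse condition (Definition~\ref{def:PIC} and Theorem~\ref{theorem:ReducedPICond}), $\Lambda_E(x)$ satisfies $\deg(S^{(i)}(x)\Lambda_E(x) \bmod \tilde m^{(i)}(x)) < |U_E|$ for every $i$, and by (\ref{eqn:ErrorNumAndLamdDegree}) we have $\deg\Lambda_E(x)=|U_E|$, so the fixed bound $|U_E|$ equals the variable bound $\deg\Lambda_E(x)$, giving (\ref{eqn:ReducedKeyEqn}) for $\Lambda=\Lambda_E$.

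Next I would handle minimality and uniqueness. Take any nonzero $\Lambda(x)$ that satisfies (\ref{eqn:ReducedKeyEqn}). If $\deg\Lambda(x)>|U_E|=\deg\Lambda_E(x)$, then $\Lambda(x)$ already loses the degree comparison against $\Lambda_E(x)$. Otherwise $\deg\Lambda(x)\leq |U_E|$, and then
\[
\deg\!\Big(S^{(i)}(x)\Lambda(x)\bmod \tilde m^{(i)}(x)\Big) < \deg\Lambda(x) \leq |U_E|
\]
for every $i$, so $\Lambda(x)$ is a nonzero candidate for the SPI problem (\ref{eqn:ReducedPICond}). By the partial-inverse condition, the SPI solution of (\ref{eqn:ReducedPICond}) is $\Lambda_E(x)$, which is the minimum-degree nonzero candidate and is unique up to a scale factor by Proposition~\ref{propo:Uniqueness}. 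Hence $\deg\Lambda(x)\geq \deg\Lambda_E(x)$, and in case of equality $\Lambda(x)$ is a scalar multiple of $\Lambda_E(x)$. This establishes the claim for (\ref{eqn:ReducedKeyEqn}).

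For (\ref{eqn:MonKeyEqn}) the same argument applies verbatim with $S^{(i)}(x)$ and $\tilde m^{(i)}(x)$ replaced by $\breve S^{(i)}(x)$ and $x^{n-k^{(i)}}$: Theorem~\ref{theorem:MonomializedPICondQu} gives that, under the partial-inverse condition, the SPI solution of (\ref{eqn:MonomPICond}) is again $\Lambda_E(x)$ with $\deg\Lambda_E(x)=|U_E|$, and the same variable-to-fixed threshold conversion yields the characterization. I do not expect any genuine obstacle here; the only bookkeeping point is to track that the minimality hypothesis forces $\deg\Lambda(x)\leq|U_E|$ for any competitor to $\Lambda_E(x)$, at which point the SPI uniqueness from Proposition~\ref{propo:Uniqueness} closes the argument.
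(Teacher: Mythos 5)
Your proposal is correct and takes essentially the same route as the paper, whose proof is just the one-line observation that under the partial-inverse condition $\Lambda_E(x)$ is the SPI solution of (\ref{eqn:ReducedPICond}) and (\ref{eqn:MonomPICond}); you have simply made explicit the variable-to-fixed threshold conversion and the appeal to SPI uniqueness that the paper leaves implicit.
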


\begin{proof}
Assume that 
$E$ satisfies the partial-inverse condition.
Then $\Lambda(x) = \Lambda_E(x)$ is the SPI solution of
(\ref{eqn:ReducedPICond}) and (\ref{eqn:MonomPICond}).
\end{proof}

Note that (\ref{eqn:MonKeyEqn}) is a standard key equation, 
which can be derived and solved in many different ways. 
The point of Proposition~\ref{proposition:KeyEqs}
is the guarantee from the partial-inverse condition.

In order to compute $\Lambda_E(x)$ with an MLFSR algorithm 
as in \cite{FengTzeng1989,FengTzeng1991,SchmidtSidorenko2006},
we need a slightly different formulation.

\begin{proposition}[MLFSR Problems] \label{proposition:MLFSREqns}
Assume that 
$E$ satisfies the partial-inverse condition.
Then the smallest integer $\kappa$ such that there exists 
a nonzero polynomial $\Lambda(x)$ with $\deg \Lambda(x)\leq \kappa$
such that
\begin{equation} \label{eqn:ReducedMLFSREqn}
\deg \!\Big( S^{(i)}(x) \Lambda(x) \bmod \tilde m^{(i)}(x) \Big) < \kappa
\end{equation}
or, equivalently, such that
\begin{equation} \label{eqn:MonMLFSREqn}
\deg \!\Big( \breve S^{(i)}(x) \Lambda(x) \bmod x^{n-k^{(i)}} \Big) < \kappa
\end{equation}
for all $i \in \{ 1,\ldots,L \}$
is $\kappa=\deg \Lambda_E(x)$.
Moreover, 
$\Lambda(x) = \Lambda_E(x)$ is the unique (up to a scale factor) such polynomial 
$\Lambda(x)$.
\end{proposition}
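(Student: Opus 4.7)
The plan is to reduce the MLFSR formulation directly to the SPI problem (\ref{eqn:ReducedPICond}) whose solution is, by the partial-inverse condition, known to be $\Lambda_E(x)$. The key observation is that a nonzero $\Lambda(x)$ satisfying the MLFSR constraint ``$\deg\Lambda(x)\leq\kappa$ and $\deg(S^{(i)}(x)\Lambda(x)\bmod\tilde m^{(i)}(x))<\kappa$'' with $\kappa\leq|U_E|$ automatically satisfies the looser SPI constraint (\ref{eqn:ReducedPICond}), so its degree must be at least $|U_E|=\deg\Lambda_E(x)$.

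First, I would establish feasibility at $\kappa=\deg\Lambda_E(x)$: take $\Lambda(x)=\Lambda_E(x)$; the inequality $\deg\Lambda_E(x)\leq\kappa$ is immediate, and by the partial-inverse condition via Theorem~\ref{theorem:ReducedPICond}, we have $\deg(S^{(i)}(x)\Lambda_E(x)\bmod\tilde m^{(i)}(x))<|U_E|=\kappa$ for all $i$. Hence the minimal $\kappa$ is at most $\deg\Lambda_E(x)$.

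Second, for the matching lower bound and for uniqueness, I would argue by contradiction: suppose $\Lambda(x)$ is any nonzero polynomial with $\deg\Lambda(x)\leq\kappa\leq\deg\Lambda_E(x)$ satisfying (\ref{eqn:ReducedMLFSREqn}). Then
\begin{equation}
\deg\!\Big(S^{(i)}(x)\Lambda(x)\bmod\tilde m^{(i)}(x)\Big)<\kappa\leq|U_E|
\end{equation}
for all $i$, so $\Lambda(x)$ is a feasible point of the SPI problem (\ref{eqn:ReducedPICond}). By Definition~\ref{def:PIC} and Theorem~\ref{theorem:ReducedPICond}, the unique minimum-degree solution of that SPI problem is $\Lambda_E(x)$, hence $\deg\Lambda(x)\geq\deg\Lambda_E(x)$. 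Combined with $\deg\Lambda(x)\leq\deg\Lambda_E(x)$, this forces $\deg\Lambda(x)=\deg\Lambda_E(x)=\kappa$, and then Proposition~\ref{propo:Uniqueness} yields that $\Lambda(x)$ is a scalar multiple of $\Lambda_E(x)$. This simultaneously rules out $\kappa<\deg\Lambda_E(x)$ and proves uniqueness at $\kappa=\deg\Lambda_E(x)$.

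Finally, the equivalence of (\ref{eqn:ReducedMLFSREqn}) and (\ref{eqn:MonMLFSREqn}) follows because the monomialization construction of Theorem~\ref{theorem:MonomializedSPI} preserves the set of feasible $\Lambda(x)$ (this is precisely what was used in passing from Theorem~\ref{theorem:ReducedPICond} to Theorem~\ref{theorem:MonomializedPICondQu}); applying that same equivalence with the bound $u=\kappa$ in place of $u=|U_E|$ gives the monomialized MLFSR version. No step looks like a serious obstacle: the whole argument is essentially a bookkeeping reduction to the already-proved SPI uniqueness. The one place to be careful is ensuring that the monomialization of Theorem~\ref{theorem:MonomializedSPI} is legitimately applied for every candidate $\kappa$ and not only for the specific value $\kappa=|U_E|$, which is why I would phrase the equivalence of (\ref{eqn:ReducedMLFSREqn}) and (\ref{eqn:MonMLFSREqn}) as a direct consequence of the construction of $\breve S^{(i)}(x)$ from $S^{(i)}(x)$, noting that this transformation does not depend on $\kappa$.
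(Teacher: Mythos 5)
Your proposal is correct and takes essentially the same route as the paper, whose entire proof is the observation that, under the partial-inverse condition, $\Lambda_E(x)$ is the SPI solution of (\ref{eqn:ReducedPICond}) and of (\ref{eqn:MonomPICond}); your feasibility/lower-bound/uniqueness argument is just that observation spelled out. For the equivalence of (\ref{eqn:ReducedMLFSREqn}) and (\ref{eqn:MonMLFSREqn}), it is cleaner to rerun the same argument on (\ref{eqn:MonomPICond}) via Theorem~\ref{theorem:MonomializedPICondQu} than to claim that Theorem~\ref{theorem:MonomializedSPI} preserves the whole feasible set, since its statement only concerns the minimal-degree solution (although its proof does yield the degree-$\leq\kappa$ correspondence you need).
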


\begin{proof}
Again, the proof follows from noting that 
$\Lambda(x) = \Lambda_E(x)$ is the SPI solution of
(\ref{eqn:ReducedPICond}) and (\ref{eqn:MonomPICond}).
\end{proof}

Note that (\ref{eqn:MonMLFSREqn}) is an MLFSR problem 
as in \cite{FengTzeng1989,FengTzeng1991,SchmidtSidorenko2006}. 
The point of Proposition~\ref{proposition:MLFSREqns} 
is the guarantee from the partial-inverse condition.

Instead of using an MLFSR algorithm,
we can solve either of the key equations (\ref{eqn:ReducedKeyEqn}) and (\ref{eqn:MonKeyEqn}) 
by solving SPI problems as shown in Algorithms \ref{alg:SPIErrorLocatingAlgo} 
and~\ref{alg:MonSPIErrorLocatingAlgo}, respectively (shown in framed boxes).
Note that line~\ref{line:ErrLocSPISetTau} initializes $\tau$ to 
$\max_i \{ \deg \tilde m^{(i)}(x) \}$.
Clearly, these algorithms will always terminate, 
and they will return the correct error locator $\Lambda_E(x)$ 
if $\Lambda_E(x)$ satisfies the partial-inverse condition.

\begin{table}[tp]
\framebox[\linewidth]{%
\normalsize%
\begin{minipage}{0.95\linewidth}
\begin{algorithm}[Error Location by SPI Algorithm]\label{alg:SPIErrorLocatingAlgo}\\
{Input:} $S^{(i)}(x)$ for $i=1,\ldots,L$.\\
{Output:} nonzero $\Lambda(x)\in F[x]$, a candidate for\\ 
\phantom{Output:} the error locator $\Lambda_E(x)$ (up to a scale factor).
\begin{pseudocode}
\npcl[line:ErrLocSPISetTau] $\tau := n-k_\text{min}$\\
\npcl \pkw{loop}\\
\npcl[line:callSPI] \> solve the SPI problem%
                \footnote{omitting indices $i$ with $\deg \tilde m^{(i)}(x) < \tau$} 
                with $\tau^{(i)} = \tau$, \\
      \> \> \>  $m^{(i)}(x) = \tilde m^{(i)}(x)$, and $b^{(i)}(x) = S^{(i)}(x)$ \\
\npcl[line:ErrLocSPIBound] \> \pkw{if} $\deg \Lambda(x) > n-k_\text{max}$, \\
      \> \> \> \> \> declare ``decoding failure''\\
\npcl \> \pkw{if} (\ref{eqn:ReducedKeyEqn}) holds, \pkw{return} $\Lambda(x)$\\
\npcl \> $\tau := \tau-1$\\
\npcl \pkw{end}
\end{pseudocode}
\end{algorithm}
\end{minipage}
}
\vspace{\dblfloatsep}

\framebox[\linewidth]{%
\normalsize%
\begin{minipage}{0.95\linewidth}
\begin{algorithm}[Error Location by Monomial-SPI Alg.]\label{alg:MonSPIErrorLocatingAlgo}\\[1ex]
Same as Algorithm~\ref{alg:SPIErrorLocatingAlgo},
but with $\breve S^{(i)}(x)$ instead of $S^{(i)}(x)$,
$x^{n-k^{(i)}}$ instead of $\tilde m^{(i)}(x)$, 
and checking (\ref{eqn:MonKeyEqn}) instead of (\ref{eqn:ReducedKeyEqn}).
\end{algorithm}
\end{minipage}
}
\end{table}

It might seem that Algorithms \ref{alg:SPIErrorLocatingAlgo} 
and~\ref{alg:MonSPIErrorLocatingAlgo} are inefficient. 
However, the SPI algorithms of this paper (cf.\ Section~\ref{sec:SPIAlg}) 
have this property: 
computing the solution for $\tau^{(i)}=\tau=t$ 
is effected by first computing the solution for $\tau^{(i)}=t+1$. 
In other words, 
line~\ref{line:callSPI} of Algorithms \ref{alg:SPIErrorLocatingAlgo} 
and~\ref{alg:MonSPIErrorLocatingAlgo} for $\tau=t$ 
is naturally implemented as continuing the computation for $\tau=t+1$. 
Moreover, the check of (\ref{eqn:ReducedKeyEqn}) 
(or of (\ref{eqn:MonKeyEqn})) can be naturally integrated 
into these algorithms (cf.\ Section~\ref{sec:SPIDecoding}).
When implemented in this way, 
the complexity of Algorithm~\ref{alg:MonSPIErrorLocatingAlgo} is 
$O(L(n-k_\text{min})(n-k_\text{max}))$ additions and/or multiplications in $F$, 
which agrees with the complexity of the MLFSR algorithm of 
\cite{FengTzeng1991,SchmidtSidorenko2006}.

No matter which algorithm is used to produce a candidate 
$\Lambda(x)$ for $\Lambda_E(x)$, 
it may be helpful to check the condition
\begin{equation} \label{eqn:mxMultipleLambdaCheck}
m(x) \bmod \Lambda(x) = 0,
\end{equation}
which guarantees that $\Lambda(x)$ 
is a valid error locator polynomial
(i.e., a product of different factors $(x-\beta_\ell)$, $\ell\in \{ 0,\ldots, n-1\}$, 
up to a scale factor). 
If this condition is not satisfied, then 
decoding failure should be declared.

\section{Successful-Decoding Guarantees and Probabilities}
\label{sec:DecodingFailureAnalysis}

We have seen in Section~\ref{sec:ComputingEL} that decoding 
(by any of the mentioned methods) will certainly succeed if the error pattern $E$
satisfies the partial-inverse condition. 
In this section, we analyze conditions and probabilities for this to happen.
Our main results are Theorems \ref{theorem:IRSSufficientCond} 
and~\ref{theorem:ProboffailuredecodingIRS} below.

\subsection{Roth--Vontobel Bound}
\label{sec:GenRothVontobel}

Theorem~\ref{theorem:IRSSufficientCond} generalizes a result from \cite{RothVontobel2014} 
(paraphrased in (\ref{IRS:guaranteedcorrectingbound}))
from the specific decoding algorithm of \cite{RothVontobel2014}
to any decoding algorithm as in Section~\ref{sec:ComputingEL}.

\begin{theorem}
\label{theorem:IRSSufficientCond}
Let $r_E$ be the rank of the error pattern $E \in F^{L\times n}$ as a matrix over $F$.
If
\begin{equation}\label{eqn:guaranteederrorcottection}
|U_E|\leq \frac{n-k_\text{max}+r_E-1}{2}
\end{equation}
then $E$ satisfies the partial-inverse condition.
\end{theorem}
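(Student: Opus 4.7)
The plan is to verify both parts of Definition~\ref{def:PIC}. The first condition, $|U_E|\leq n-k_\text{max}$, is immediate from $r_E\leq |U_E|$ (the rank of a matrix is bounded by its number of nonzero columns): substituting into the hypothesis gives $2|U_E|\leq n-k_\text{max}+|U_E|-1$, whence $|U_E|\leq n-k_\text{max}-1$. The substantive task is to show that $\Lambda_E(x)$ is the SPI solution of (\ref{eqn:PICond}). Since $\Lambda_E(x)$ satisfies (\ref{eqn:PICond}) by (\ref{eqn:ClassicalKeyEqn}) and the SPI solution is unique up to a scale factor (Proposition~\ref{propo:Uniqueness}), it suffices to rule out the existence of a nonzero $\Lambda(x)$ with $d:=\deg\Lambda(x)<t:=|U_E|$ satisfying (\ref{eqn:PICond}).

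To set up a contradiction, I would suppose such a $\Lambda(x)$ exists and let $r^{(i)}(x):=E^{(i)}(x)\Lambda(x)\bmod m(x)$. The idea is to factor $r^{(i)}(x)$ in two stages so as to squeeze the degree against the rank of $E$. Because $E^{(i)}(\beta_\ell)=e_\ell^{(i)}=0$ for $\ell\notin U_E$, $r^{(i)}(x)$ vanishes on those $n-t$ points, so it is divisible by $\mu_E(x):=m(x)/\Lambda_E(x)=\prod_{\ell\notin U_E}(x-\beta_\ell)$. Writing $r^{(i)}(x)=\mu_E(x)\tilde r^{(i)}(x)$, the degree bound in (\ref{eqn:PICond}) combined with the hypothesis $2t\leq n-k_\text{max}+r_E-1$ yields $\deg\tilde r^{(i)}(x)\leq r_E-2$. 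Next, let $Z_1:=\{\ell\in U_E:\Lambda(\beta_\ell)=0\}$, so that $|Z_1|\leq d<t$. Since $\tilde r^{(i)}(\beta_\ell)=e_\ell^{(i)}\Lambda(\beta_\ell)/\mu_E(\beta_\ell)$ for $\ell\in U_E$ and this vanishes for $\ell\in Z_1$, the polynomial $\Lambda_{Z_1}(x):=\prod_{\ell\in Z_1}(x-\beta_\ell)$ divides $\tilde r^{(i)}(x)$, and I can write $\tilde r^{(i)}(x)=\Lambda_{Z_1}(x)\hat r^{(i)}(x)$ with $\deg\hat r^{(i)}(x)\leq r_E-2-|Z_1|$.

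The contradiction then splits into two cases. If $|Z_1|\geq r_E-1$, the degree bound forces $\hat r^{(i)}(x)=0$, hence $\tilde r^{(i)}(x)=0$; evaluating at any $\ell\in U_E\setminus Z_1$ would then force $e_\ell^{(i)}=0$ for every $i$ (because $\Lambda(\beta_\ell)/\mu_E(\beta_\ell)\neq 0$ there), emptying $U_E\setminus Z_1$ and contradicting $|Z_1|\leq d<t$. Otherwise $|Z_1|\leq r_E-2$, and I form the $L\times(t-|Z_1|)$ matrix $\hat M$ with entries $\hat r^{(i)}(\beta_\ell)$ for $\ell\in U_E\setminus Z_1$. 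Each row of $\hat M$ is the evaluation of a polynomial of degree at most $r_E-2-|Z_1|$ at $t-|Z_1|$ distinct points, so $\rank\hat M\leq r_E-1-|Z_1|$. On the other hand, $\hat M_{i,\ell}$ is a nonzero scalar multiple of $e_\ell^{(i)}$, so $\hat M$ has the same rank as the restriction of $E$ to the columns indexed by $U_E\setminus Z_1$, which is at least $r_E-|Z_1|$. The resulting inequality $r_E-|Z_1|\leq r_E-1-|Z_1|$ is impossible.

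The step I expect to require the most care is the coordinated bookkeeping of the two successive factorizations: peeling $\mu_E(x)$ off $r^{(i)}(x)$ yields the tight degree bound $\deg\tilde r^{(i)}(x)\leq r_E-2$ precisely because of the arithmetic of the hypothesis, while peeling $\Lambda_{Z_1}(x)$ off $\tilde r^{(i)}(x)$ removes degree exactly equal to the number of columns discarded from $E|_{U_E}$, so that the rank upper bound $r_E-1-|Z_1|$ and the rank lower bound $r_E-|Z_1|$ differ by exactly one and close the argument.
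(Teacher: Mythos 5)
Your proof is correct, but it takes a genuinely different route from the paper's. The paper establishes a slightly stronger lemma---any nonzero $\Lambda(x)$ satisfying (\ref{eqn:PICond}) is a \emph{multiple} of $\Lambda_E(x)$---by constructing, for each $s\in U_E$, a linear combination $\tilde E_s(x)$ of the rows $E^{(1)}(x),\ldots,E^{(L)}(x)$ that vanishes at all $\beta_\ell$ with $\ell$ outside a set $\tilde U_s\subseteq U_E$ of size $|U_E|-r_E+1$ and equals $1$ at $\beta_s$; the product $\tilde E_s(x)\Lambda(x)\bmod m(x)$ then has more prescribed zeros than its degree permits, hence vanishes, forcing $\Lambda(\beta_s)=0$ for every $s$. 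You instead keep all $L$ rows and run a rank-versus-degree comparison: after peeling off $m(x)/\Lambda_E(x)$ and then the roots of $\Lambda(x)$ inside $U_E$, the reduced remainders $\hat r^{(i)}(x)$ have degree at most $r_E-2-|Z_1|$, so the matrix of their evaluations on $U_E\setminus Z_1$ has rank at most $r_E-1-|Z_1|$, while its column-scaling equivalence with the submatrix of $E$ on those columns forces rank at least $r_E-|Z_1|$. Both arguments spend the $r_E-1$ extra degrees of freedom granted by the hypothesis in the same place (your bound $\deg\tilde r^{(i)}(x)\leq r_E-2$ plays the role of the paper's zero count $n-|\tilde U_s|\geq |U_E|+k_\text{max}$), but yours avoids the combinatorial construction of the sets $\tilde U_s$ and the vectors $\tilde e$, at the price of the two-stage factorization bookkeeping and the case split on $|Z_1|$; it also proves only the non-existence of lower-degree solutions rather than the divisibility statement, which together with Proposition~\ref{propo:Uniqueness} is indeed enough for the theorem. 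The edge cases are handled correctly: $E=0$ is vacuous since no nonzero polynomial has degree $<0$, and $U_E\setminus Z_1\neq\emptyset$ because $|Z_1|\leq\deg\Lambda(x)<|U_E|$.
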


The partial-inverse condition is thus implied by (\ref{eqn:guaranteederrorcottection}). 
In particular, any nonzero error pattern $E$ with $|U_E| \leq (n-k_\text{max})/2$ satisfies 
the partial-inverse condition.

Note that (\ref{eqn:guaranteederrorcottection}) implies 
$|U_E| < n - k_\text{max}$ (since $r_E \leq |U_E|$).
Theorem~\ref{theorem:IRSSufficientCond} then
follows immediately from the following lemma,
which is inspired by \cite{Metzner,Haslach19992001}
and, especially, \cite{RothVontobel2014}.

\begin{lemma}
Let $E\in F^{L\times n}$ be an error pattern with rank $r_E$ and
\begin{equation} \label{eqn:RankGuaranteeLemmaSupport}
2 |U_E| < n - k_\text{max} + r_E.
\end{equation}
Then any nonzero polynomial $\Lambda(x)$ such that 
\begin{equation} \label{eqn:RankGuaranteeLemmaSPI}
\deg \!\Big( E^{(i)}(x) \Lambda(x) \bmod m(x) \Big) < k^{(i)} + |U_E|
\end{equation} 
for all $i\in \{ 1,\ldots, L \}$ 
is a multiple of $\Lambda_E(x)$.
\end{lemma}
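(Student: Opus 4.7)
The plan is to argue by contradiction: I will show that, under the degree hypothesis (\ref{eqn:RankGuaranteeLemmaSPI}) together with the rank/support bound (\ref{eqn:RankGuaranteeLemmaSupport}), the polynomial $\Lambda(x)$ must vanish at every point $\beta_\ell$ with $\ell\in U_E$; since those $\beta_\ell$ are distinct, this is exactly $\Lambda_E(x)\mid\Lambda(x)$.

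To turn the polynomial congruences into a matrix statement, I set $r^{(i)}(x) \eqdef E^{(i)}(x)\Lambda(x) \bmod m(x)$, so $\deg r^{(i)}(x) < k^{(i)} + |U_E| \leq k_\text{max} + |U_E|$. Evaluating at $\beta_\ell$ (where $m(\beta_\ell)=0$) gives $r^{(i)}(\beta_\ell) = E^{(i)}(\beta_\ell)\,\Lambda(\beta_\ell) = e^{(i)}_\ell\,\Lambda(\beta_\ell)$. Assembling the matrix $R \in F^{L\times n}$ with entries $R_{i,\ell} = r^{(i)}(\beta_\ell)$ therefore yields $R = E\,D$, where $D = \diag\!\big(\Lambda(\beta_0),\ldots,\Lambda(\beta_{n-1})\big)$. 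Define $U' \eqdef \{\,\ell\in U_E : \Lambda(\beta_\ell)\neq 0\,\}$. The conclusion of the lemma is equivalent to $U'=\emptyset$, so I assume $U'\neq\emptyset$ and derive a contradiction by sandwiching $\rank R$.

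For the lower bound, the column space of $R = ED$ is spanned exactly by $\{\mathbf{e}_\ell : \ell\in U'\}$, so $\rank R = \rank E_{U'}$, where $E_{U'}$ is the column submatrix of $E$ on indices $U'$. Removing the $|U_E|-|U'|$ remaining nonzero columns of $E$ drops the rank by at most $|U_E|-|U'|$, giving
\begin{equation}
\rank R \;\geq\; r_E - |U_E| + |U'|.
\end{equation}
For the upper bound I use the Reed--Solomon structure of the rows: each $r^{(i)}(x)$ has degree $<k_\text{max}+|U_E|$ and vanishes at the $n-|U'|$ distinct points $\{\beta_\ell:\ell\notin U'\}$ (for $\ell\notin U_E$ because $e^{(i)}_\ell=0$, and for $\ell\in U_E\setminus U'$ because $\Lambda(\beta_\ell)=0$). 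Hence $r^{(i)}(x)$ is divisible by $\prod_{\ell\notin U'}(x-\beta_\ell)$, and the quotient has degree $<k_\text{max}+|U_E|+|U'|-n$. Thus every row of $R$ lies in a common subspace of dimension $\max\!\big(0,\,k_\text{max}+|U_E|+|U'|-n\big)$, so
\begin{equation}
\rank R \;\leq\; \max\!\big(0,\,k_\text{max}+|U_E|+|U'|-n\big).
\end{equation}

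Since $U'\neq\emptyset$, any $\ell\in U'$ gives a nonzero column $\Lambda(\beta_\ell)\mathbf{e}_\ell$ of $R$, so $\rank R\geq 1$; thus the max on the right is attained by its nontrivial term and the two bounds combine to $r_E-|U_E|+|U'| \leq k_\text{max}+|U_E|+|U'|-n$, i.e., $n+r_E-k_\text{max} \leq 2|U_E|$, which directly contradicts the hypothesis $2|U_E|<n-k_\text{max}+r_E$. The main obstacle is getting the upper bound on $\rank R$ sharp; the key observation that all rows sit in the \emph{same} shortened Reed--Solomon code (because they all vanish outside $U'$) is what makes the argument tight, and it is precisely the additive slack $r_E$ coming from the column-rank bound that upgrades the classical half-the-minimum-distance guarantee to (\ref{eqn:RankGuaranteeLemmaSupport}).
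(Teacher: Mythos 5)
Your proof is correct, and it takes a genuinely different route from the paper's. The paper argues pointwise: for each error position $s\in U_E$ it uses the rank $r_E$ to construct a linear combination $\tilde E_s(x)$ of the rows of $E$ whose support inside $U_E$ has size only $|U_E|-r_E+1$ and which is nonzero at $\beta_s$; a single count of zeros versus degree then forces $\tilde E_s(x)\Lambda(x)\bmod m(x)=0$ and hence $\Lambda(\beta_s)=0$. You instead make one global linear-algebra argument: form $R=ED$ with $D=\diag\big(\Lambda(\beta_0),\ldots,\Lambda(\beta_{n-1})\big)$ and sandwich $\rank R$ between a column-space lower bound $r_E-|U_E|+|U'|$ (deleting the columns killed by $\Lambda$) and a row-space upper bound $k_\text{max}+|U_E|+|U'|-n$ (all remainders lie in the same shortened code), which contradicts the hypothesis unless $U'=\emptyset$. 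All the individual steps check out: the identification $\rank R=\rank E_{U'}$, the rank drop of at most one per deleted column, the injectivity of evaluation on polynomials of degree less than $n$, and the handling of the degenerate case via $\rank R\geq 1$ when $U'\neq\emptyset$. The paper's construction is more hands-on and makes visible exactly which row combination certifies each error location; your version is more symmetric and arguably cleaner, packaging the same two ingredients (column rank of $E$, degree constraint on the remainders) into a single dimension count, and it localizes the role of $r_E$ purely in the lower bound on $\rank R$. Both arguments are tight at the same inequality, so neither yields a stronger conclusion than the other.
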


\begin{proof}
The lemma trivially holds for $E=0$. 
We now assume $U_E \neq \emptyset$. 
For any $s\in U_E$, 
let $\tilde U_s$ be a subset of $U_E$ that contains $s$
such that
$|U_E| - |\tilde U_s| = r_E -1$
and, in addition,
the columns $e_\ell$ of $E$ with $\ell \in (U_E \setminus \tilde U_s) \cup \{ s \}$
are linearly independent. (Note that such a set $\tilde U_s$ exists for any $s\in U_E$.)
Let $\tilde e \in F^n$ be a linear combination of rows of $E$ 
such that 
$\tilde e_\ell = 0$ for $\ell \not\in \tilde U_s$
and $\tilde e_s = 1$.
(Note that such $\tilde e$ exists.)
Finally, we define 
$\tilde E_s(x) \eqdef \psi^{-1}(\tilde e)$. 
We thus have $\tilde E_s(\beta_\ell)=0$ for $\ell \not\in\tilde U_s$
and $\tilde E_s(\beta_s)=1$.
Note also that 
$\tilde E_s(x)$ is a linear combination of $E^{(1)}(x),\ldots,E^{(L)}(x)$.

Now assume that some nonzero $\Lambda(x) \in F[x]$ 
satisfies (\ref{eqn:RankGuaranteeLemmaSPI}).
It follows that we also have
\begin{equation} \label{eqn:ProofRankGuaranteeLemmaModDeg}
\deg \!\Big( \tilde E_s(x) \Lambda(x) \bmod m(x) \Big) < k_\text{max} + |U_E|
\end{equation}
for $i=1,\ldots,L$.
We then write 
\begin{equation}
\tilde E_s(x)\Lambda(x)=g(x)m(x)+\tilde E_s(x)\Lambda(x)\bmod m(x)
\end{equation}
according to the division theorem.
But $\tilde E_s(x)$, and thus $\tilde E_s(x)\Lambda(x)$, has at least 
$n - |\tilde U_s|$ 
zeros in the set $\{ \beta_0, \ldots, \beta_{n-1} \}$. 
Using $|\tilde U_s| = |U_E| - r_E + 1$
and (\ref{eqn:RankGuaranteeLemmaSupport}), we obtain
\begin{equation}
n - |\tilde U_s| \geq |U_E| + k_\text{max}.
\end{equation}
Thus $\tilde E_s(x)\Lambda(x)\bmod m(x)$ has at least 
$|U_E| + k$ zeros in the set $\{ \beta_0, \ldots, \beta_{n-1} \}$,
which contradicts (\ref{eqn:ProofRankGuaranteeLemmaModDeg}) unless 
$\tilde E_s(x)\Lambda(x)\bmod m(x) = 0$.
Thus $\tilde E_s(x)\Lambda(x)\bmod m(x) = 0$ 
for all $s\in U_E$. 
It follows that $\Lambda(\beta_\ell)=0$ for all $s\in U_E$,
which implies that $\Lambda(x)$ is a multiple of $\Lambda_E(x)$.
\end{proof}

It is instructive to consider Theorem~\ref{theorem:IRSSufficientCond}
for random errors.
For any fixed $U_E$, 
assume that 
the nonzero columns of $E$ are uniformly 
and independently distributed over all possible nonzero columns. 
In the special case where $|U_E| \leq L$, 
it is then very likely that $E$ has rank $|U_E|$ 
(as quantified by Proposition~\ref{proposition:FullRankProbability} below),
in which case (\ref{eqn:guaranteederrorcottection}) reduces to $|U_E| < n-k_\text{max}$. 
In other words, $E$ is very likely to satisfy the partial-inverse condition
provided that 
\begin{equation} \label{eqn:FullRankLimits}
|U_E| \leq \min \{ L, n-k_\text{max}-1 \}.
\end{equation}
If (\ref{eqn:FullRankLimits}) holds, 
the probability that $E$ does not satisfy the partial-inverse condition
is bounded by
\begin{proposition}[Full-Rank Probability]\label{proposition:FullRankProbability}
Assume $0 < |U_E| \leq L$. If the $|U_E|$ nonzero columns of $E \in F^{L\times n}$ 
are uniformly and independently distributed over $F^L\setminus\{0\}$, then
\begin{equation}
\Pr\!\big(r_E\neq |U_E|\big)<\frac{q^{-L+|U_E|}}{q-1} \label{eqn:FullRankProb}
\end{equation}
with $q=|F|$.
\end{proposition}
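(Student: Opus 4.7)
Write $t \eqdef |U_E|$, and let $v_1, \ldots, v_t$ denote the nonzero columns of $E$, viewed as i.i.d.\ uniform draws from $F^L \setminus \{0\}$. The event $r_E \neq t$ is exactly the event that $v_1, \ldots, v_t$ are linearly dependent over $F$, so the plan is to bound the probability of this dependence event.

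I would introduce the nested events $A_j \eqdef \{v_1, \ldots, v_j \text{ are linearly dependent}\}$ for $j = 1, \ldots, t$, so that $A_1 \subseteq A_2 \subseteq \cdots \subseteq A_t$, and the target probability is $\Pr(A_t)$. Since $v_1 \neq 0$, we have $\Pr(A_1) = 0$, and the incremental bound $\Pr(A_t) \leq \sum_{j=2}^{t} \Pr(A_j \mid \neg A_{j-1})$ holds by a telescoping union argument. The key conditional estimate is that, given $v_1, \ldots, v_{j-1}$ are linearly independent, they span a subspace of $F^L$ of dimension $j-1$, which contains exactly $q^{j-1} - 1$ nonzero vectors; hence $\Pr(A_j \mid \neg A_{j-1}) \leq (q^{j-1}-1)/(q^L-1)$, using that $v_j$ is uniform on $F^L \setminus \{0\}$ and independent of the prior columns.

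Combining these estimates yields
\begin{equation}
\Pr(r_E \neq t) \leq \sum_{j=2}^{t} \frac{q^{j-1} - 1}{q^L - 1}.
\end{equation}
The remaining step is a clean arithmetic simplification: I would use the elementary inequality $(q^{j-1}-1)/(q^L-1) < q^{j-1}/q^L$, which holds whenever $j-1 < L$ (and we have $j \leq t \leq L$ throughout), to replace each summand. Geometrically summing $\sum_{j=2}^{t} q^{j-1-L} = q^{-L}(q^t - q)/(q-1)$ and dropping the $-q$ in the numerator gives the strict bound $q^{t-L}/(q-1)$, which is exactly (\ref{eqn:FullRankProb}).

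The argument is essentially routine; the only mild subtlety is that the columns live on $F^L \setminus \{0\}$ rather than on all of $F^L$, which forces the conditional probability to use $(q^{j-1}-1)/(q^L-1)$ instead of the cleaner $q^{j-1}/q^L$. Managing this mismatch without losing the exponent is what turns the final inequality into a strict one, and the strictness also uses $t \leq L$ so that the replacement $q^{j-1} - 1 < q^{j-1} \cdot (q^L-1)/q^L$ is available for every index in the sum.
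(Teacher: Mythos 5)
Your proof is correct and is essentially the paper's argument in different packaging: the paper computes the exact product $\prod_{j}(q^L-q^{j-1})/(q^L-1)$ of your conditional success probabilities and then applies $\prod(1-b_j)\geq 1-\sum b_j$, which is precisely your telescoping union bound over the nested dependence events, and both proofs finish with the identical per-step estimate $(q^{j-1}-1)/(q^L-1)<q^{j-1-L}$ followed by the same geometric summation.
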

This is certainly standard but, 
for the convenience of the reader, 
we give a (short) proof.

\begin{proof}
Assume $0 < |U_E| \leq L$. 
It is easily seen that 
\begin{IEEEeqnarray}{rCl}
\Pr\!\big(r_E=|U_E|\big)
&=&\frac{(q^L-1)(q^L-q)\cdots (q^L-q^{|U_E|-1})}{(q^L-1)^{|U_E|}}\label{prob:frank} \IEEEeqnarraynumspace\\
&=&\frac{(q^L-q)\cdots (q^L-q^{|U_E|-1})}{(q^L-1)^{|U_E|-1}}\label{prob:frank.1}
\end{IEEEeqnarray}
where the numerator of (\ref{prob:frank}) 
is the number of ways of 
picking $|U_E|$ linearly independent column vectors.
The numerator of (\ref{prob:frank.1}) can be written as
\begin{equation}
q^{L(|U_E|-1)}\big(1-q^{-(L-1)}\big)\cdots \big(1-q^{-(L-|U_E|+1)}\big)
\end{equation}
and thus
\begin{IEEEeqnarray}{rCl}
\IEEEeqnarraymulticol{3}{l}{
\Pr\!\big(r_E=|U_E|\big) 
}\nonumber\\\quad
& > & \big(1-q^{-(L-1)}\big)\cdots \big(1-q^{-(L-|U_E|+1)}\big)
       \IEEEeqnarraynumspace\\
& > & 1 - \sum_{i=1}^{|U_E|-1} q^{-(L-i)} \\
& > & 1 - \frac{q^{-(L-|U_E|)}}{q-1}
\end{IEEEeqnarray}
\eproofnegspace
\end{proof}

\subsection{Schmidt--Sidorenko--Bossert Bound}
\label{sec:GenSSB}

We now turn to random errors 
beyond the full-rank case.
Theorem~\ref{theorem:ProboffailuredecodingIRS} generalizes a result 
from \cite{SchSidoBossert2009}
(paraphrased in (\ref{eqn:SchSidoBossert2009bound2})).

\begin{theorem}%
\label{theorem:ProboffailuredecodingIRS}
For $L>1$ and any error support set $U_E$ with $0<|U_E|\leq n-k_\text{max}$,
assume that the $|U_E|$ nonzero columns of $E \in F^{L\times n}$ 
are uniformly and independently distributed over $F^L\setminus\{0\}$.
Then the probability $P_\text{fpi}$ that $E$ does not satisfy the partial-inverse condition 
is bounded by 
\begin{equation} \label{eqn:LambdaeFailureProbBound}
P_\text{fpi}  <  \frac{q^{-L(n-k_\text{avg})+(L+1)|U_E|}}{q-1}  
\end{equation}
\eproofnegspace
\end{theorem}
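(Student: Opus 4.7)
The plan is to proceed by a union bound over candidate error-locator polynomials. By Proposition~\ref{propo:Uniqueness}, the SPI solution of~(\ref{eqn:PICond}) is unique up to a nonzero scalar in $F$, so $E$ fails the partial-inverse condition if and only if some monic polynomial $\Lambda(x)\in F[x]$ with $0\le\deg\Lambda(x)<|U_E|$ satisfies~(\ref{eqn:PICond}) for every $i$. Hence
\[ P_\text{fpi}\;\le\;\sum_{d=0}^{|U_E|-1}\;\sum_{\substack{\Lambda\text{ monic}\\\deg\Lambda=d}}\Pr\!\big(\Lambda\text{ satisfies~(\ref{eqn:PICond})}\big), \]
and the inner sum ranges over $q^d$ polynomials.

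For each fixed monic $\Lambda$ of degree $d<|U_E|$, the condition~(\ref{eqn:PICond}) decouples across rows. Writing $E^{(i)}(x)=\tilde m(x)f^{(i)}(x)$ with $\tilde m(x)=m(x)/\Lambda_E(x)$ and $\deg f^{(i)}(x)<|U_E|$ recasts the row-$i$ condition as the linear constraint $f^{(i)}(x)\Lambda(x)\bmod\Lambda_E(x)\in F[x]_{<\delta^{(i)}}$ with $\delta^{(i)}=\max(0,\,k^{(i)}+2|U_E|-n)$. Under the evaluation isomorphism $F[x]/\Lambda_E(x)\cong F^{|U_E|}$, $f\mapsto(f(\beta_\ell))_{\ell\in U_E}$, the nonzero-column hypothesis on $E$ becomes the requirement that $(f^{(i)}(\beta_\ell))_{i=1}^{L}\neq 0$ for every $\ell\in U_E$, and a dimension count then gives the per-$\Lambda$ bound $\Pr\!\big(\Lambda\text{ satisfies~(\ref{eqn:PICond})}\big)\le q^{-L(n-k_\text{avg})+L|U_E|}$. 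The hard part here is handling the interaction between the nonzero-column conditioning and the event in question: I would partition the monic $\Lambda$ according to $d_g=|\{\ell\in U_E:\Lambda(\beta_\ell)=0\}|$, observe that $\Lambda$ with $d_g\ge\max_i\delta^{(i)}$ force some column of $E$ to vanish and hence contribute zero to the count, and use the hypothesis $L>1$ to control the correction terms for the remaining $\Lambda$.

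Substituting the per-$\Lambda$ bound and using the geometric sum $\sum_{d=0}^{|U_E|-1}q^d=(q^{|U_E|}-1)/(q-1)<q^{|U_E|}/(q-1)$ then yields
\[ P_\text{fpi}\;<\;q^{-L(n-k_\text{avg})+L|U_E|}\cdot\frac{q^{|U_E|}}{q-1}\;=\;\frac{q^{-L(n-k_\text{avg})+(L+1)|U_E|}}{q-1}, \]
as claimed.
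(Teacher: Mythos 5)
Your overall strategy is the same as the paper's: reduce failure of the partial-inverse condition to the existence of a monic $\Lambda(x)$ of degree less than $|U_E|$ satisfying (\ref{eqn:PICond}) for all $i$, and then bound by summing over all such $\Lambda(x)$. However, there is a genuine gap at the central quantitative step. Your uniform per-$\Lambda$ bound $\Pr\big(\Lambda\text{ satisfies (\ref{eqn:PICond})}\big)\le q^{-L(n-k_\text{avg})+L|U_E|}$ is asserted via an unspecified ``dimension count'' and is in fact false for the worst case, namely $\Lambda(x)$ with no zeros among $\{\beta_\ell:\ell\in U_E\}$. There the map $f^{(i)}\mapsto f^{(i)}\Lambda\bmod\Lambda_E$ is bijective, the admissible $f^{(i)}$ form a subspace of size $q^{k^{(i)}+2|U_E|-n}$, and the sample space per column is $F^L\setminus\{0\}$, so the probability is bounded by $q^{L(2|U_E|-(n-k_\text{avg}))}/(q^L-1)^{|U_E|}$, which exceeds your claimed bound by the factor $\big(q^L/(q^L-1)\big)^{|U_E|}>1$. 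Carrying that factor through your geometric sum reproduces essentially the Schmidt--Sidorenko--Bossert bound \emph{with} the prefactor $\gamma>1$ of (\ref{eqn:SchSidoBossert2009boundPrefactor}) --- i.e., exactly the constant that Theorem~\ref{theorem:ProboffailuredecodingIRS} is claiming to remove.

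The part you defer (``the hard part\ldots I would partition\ldots and use the hypothesis $L>1$'') is precisely where the paper's proof lives, and it cannot be waved through. The paper partitions the monic $\Lambda$ by the number $t$ of zeros in $\calB_{U_E}$, proves the $t$-dependent count of admissible error patterns in Lemma~\ref{lemma:IRSNumErrorsFixedLambda} (with the crucial $q^{-Lt}$ gain), counts $|\calL_t|={|U_E| \choose t}(q-1)^{|U_E|-t-1}$ in Lemma~\ref{lemma:IRSNumLambdat} (the base $q-1$ rather than $q$ is essential), and then evaluates a binomial sum whose residual factor $\big((q^L-(q^{L-1}-q^{-1}))/(q^L-1)\big)^{|U_E|}$ is at most $1$ only because $L>1$. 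None of this is present in your sketch. Moreover, your parenthetical claim that $\Lambda$ with $d_g\ge\max_i\delta^{(i)}$ ``force some column of $E$ to vanish and hence contribute zero to the count'' is backwards: a zero of $\Lambda$ at $\beta_\ell$, $\ell\in U_E$, removes all constraints that (\ref{eqn:PICond}) places on column $\ell$; it does not force that column to be zero. So while the reduction to $F[x]/\Lambda_E(x)$ and the outer union bound are sound and consistent with the paper, the proof as proposed establishes at best the weaker bound with the $\gamma$ prefactor, not the stated inequality.
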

The proof is given in Section~\ref{secion:proveProboffailuredecodingIRS} below. 
The right side of (\ref{eqn:LambdaeFailureProbBound}) 
agrees with 
the bound (14) of \cite{SchSidoBossert2009}
except for the factor (\ref{eqn:SchSidoBossert2009boundPrefactor}).
(But \cite{SchSidoBossert2009} considers a specific decoding algorithm 
and only cyclic Reed--Solomon codes.)

For $|U_E| \leq L$, both (\ref{eqn:FullRankProb}) and (\ref{eqn:LambdaeFailureProbBound}) apply. 
In general, (\ref{eqn:LambdaeFailureProbBound}) is much stronger than (\ref{eqn:FullRankProb}), 
but the two bounds agree in the special case where $|U_E|=n-k_\text{avg}-1$.

Note that (\ref{eqn:LambdaeFailureProbBound})
implies that $E$ satisfies the partial-inverse condition 
(with high probability, if $q$ is large) 
as long as
\begin{equation}\label{bound:maximumELRadius}
|U_E| \leq
   \min \left\{ 
        n - k_\text{max},\ 
        \frac{L}{L+1}(n-k_\text{avg})
        \right\}.
\end{equation}
This cannot be improved:
\begin{proposition}
(\ref{bound:maximumELRadius}) is a necessary condition 
for $E$ to satisfy the partial-inverse condition.
\end{proposition}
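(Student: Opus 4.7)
The plan is to derive the two terms in (\ref{bound:maximumELRadius}) separately, both as direct consequences of the SPI structure of the problem. The first bound, $|U_E| \leq n - k_\text{max}$, is built into Definition~\ref{def:PIC} itself and therefore requires no further argument.

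For the more substantive bound $|U_E| \leq \frac{L}{L+1}(n - k_\text{avg})$, I would interpret (\ref{eqn:PICond}) as an SPI problem with moduli $m^{(i)}(x) = m(x)$ (so $\deg m^{(i)}(x) = n$) and thresholds $\tau^{(i)} = k^{(i)} + |U_E|$. Assuming that $E$ satisfies the partial-inverse condition, the SPI solution of (\ref{eqn:PICond}) is, by Definition~\ref{def:PIC}, equal to $\Lambda_E(x)$ up to a nonzero scale factor. Since the $\beta_\ell$ for $\ell \in U_E$ are distinct elements of $F$, (\ref{eq:IRSerrorlocator}) and (\ref{eqn:ErrorNumAndLamdDegree}) give $\deg \Lambda_E(x) = |U_E|$.

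The next step is to invoke the degree bound of Proposition~\ref{proposition:SolutionDegree}, which yields
\begin{equation*}
|U_E| \;=\; \deg \Lambda_E(x) \;\leq\; \sum_{i=1}^L \bigl(n - k^{(i)} - |U_E|\bigr) \;=\; L(n - k_\text{avg}) - L\,|U_E|,
\end{equation*}
using the definition (\ref{eq:defineAvgk}) of $k_\text{avg}$ to collapse $\sum_i k^{(i)} = L\,k_\text{avg}$. Rearranging yields $(L+1)|U_E| \leq L(n - k_\text{avg})$, which is exactly the second term in (\ref{bound:maximumELRadius}).

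There is no real obstacle here; the argument amounts to a single application of Proposition~\ref{proposition:SolutionDegree} once the PIC is recast as a statement about the degree of the SPI solution. The only point demanding a brief check is that $\deg \Lambda_E(x) = |U_E|$ holds as an equality (so that the degree bound, which is an inequality on the SPI solution, becomes an inequality on $|U_E|$), and this uses only the distinctness of the evaluation points $\beta_\ell$.
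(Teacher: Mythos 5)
Your proof is correct and follows essentially the same route as the paper: the first term of (\ref{bound:maximumELRadius}) is read off from Definition~\ref{def:PIC}, and the second term is obtained by applying the degree bound of Proposition~\ref{proposition:SolutionDegree} to the SPI problem (\ref{eqn:PICond}) with $\deg\Lambda_E(x)=|U_E|$ and rearranging $(L+1)|U_E|\leq L(n-k_\text{avg})$. No gaps.
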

\begin{proof}
Assume that $E$ satisfies the partial-inverse condition,
which means that 
$|U_E|\leq n - k_\text{max}$ and
$\Lambda(x) = \Lambda_E(x)$ is the SPI solution of (\ref{eqn:PICond}).
Applying the degree bound (Proposition~\ref{proposition:SolutionDegree}) 
to (\ref{eqn:PICond}) yields 
\begin{IEEEeqnarray}{rCl}
\deg \Lambda(x) & \leq & \sum_{i=1}^L \left( n - \big( k^{(i)} + |U_E| \big) \right)
                  \IEEEeqnarraynumspace\\
 & = & L \big( n - k_\text{avg} - |U_E| \big)
\end{IEEEeqnarray}
and thus 
$|U_E| (1+L) \leq L (n-k_\text{avg})$.
\end{proof}

For $k^{(1)} = \ldots = k^{(L)} = k_\text{avg}$, 
(\ref{bound:maximumELRadius}) reduces to
\begin{equation}
|U_E| \leq \frac{L}{L+1}(n-k_\text{avg}), 
\end{equation}
which cannot be improved by allowing general $k^{(1)},\ldots,k^{(L)}$.

\subsection{Proof of Theorem~\ref{theorem:ProboffailuredecodingIRS}}
\label{secion:proveProboffailuredecodingIRS}

Let $L>1$ and let $U$ be an arbitary, but fixed, subset of $\{ 0, \ldots, n-1\}$
such that $0<|U|\leq n-k_\text{max}$. 
Assume that the error pattern $E\in F^{L\times n}$ has support set $U_E=U$,
and the nonzero columns of $E$ 
are uniformly and independently distributed over $F^L\setminus\{0\}$.

If $E$ does not satisfy the partial-inverse condition,
then there exists a nonzero polynomial $\Lambda(x)$
with $\deg \Lambda(x) < |U_E|$ such that 
\begin{equation}\label{eqn:IRSerrorevent}
\deg \!\big( E^{(i)}(x) \Lambda(x) \bmod m(x) \big) < k^{(i)} + |U_E|
\end{equation} 
for all $i=1,\ldots,L$.

Let $\calE_U$
be the set of all the possible error patterns $E\in F^{L\times n}$ 
with support set $U_E = U$. 
Let $\calE_\text{fpi} \subset \calE_U$ 
be the set of all $E\in \calE_U$ that admit 
some $\Lambda(x)\in F[x]$ with $0\leq \deg \Lambda(x)<|U|$ 
that satisfies (\ref{eqn:IRSerrorevent}) for all $i\in\{1,\ldots,L\}$.
Then,
\begin{equation} \label{eqn:IRSProofPf1}
P_\text{fpi} \leq \frac{|\calE_\text{fpi}|}{|\calE_U|} = \frac{|\calE_\text{fpi}|}{(q^L-1)^{|U|}}
\end{equation}
It thus remains to bound $|\calE_\text{fpi}|$.

For \mbox{$t=0,\ldots,|U|-1$}, let $\calL_t$ be
the set of monic polynomials $\Lambda(x)\in F[x]$
with $\deg \Lambda(x) < |U|$ and with exactly $t$ zeros 
in the set $\calB_U \eqdef \{ \beta_\ell : \ell\in U \}$.

\begin{lemma} \label{lemma:IRSNumErrorsFixedLambda}
For any fixed $\Lambda(x) \in \calL_t$,
the number of error patterns $E\in \calE_U$ 
that satisfy (\ref{eqn:IRSerrorevent}) is upper bounded by 
$q^{L(2|U|-(n-k_\text{avg})-t)}$.
\end{lemma}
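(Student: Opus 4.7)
The plan is to convert the polynomial congruence~(\ref{eqn:IRSerrorevent}) into an evaluation condition via the ring isomorphism $\psi$, then count through a polynomial factorization. First, I set $R^{(i)}(x):=E^{(i)}(x)\Lambda(x)\bmod m(x)$; because $\psi$ is a ring isomorphism, $R^{(i)}(\beta_\ell)=e^{(i)}_\ell\,\Lambda(\beta_\ell)$ for every $\ell$. Let $B:=\{\ell\in U:\Lambda(\beta_\ell)=0\}$ (so $|B|=t$ by the defining property of $\calL_t$) and $U':=U\setminus B$; then $R^{(i)}(\beta_\ell)=0$ for every $\ell\notin U'$, since either $\ell\notin U$ (giving $e^{(i)}_\ell=0$) or $\ell\in B$ (giving $\Lambda(\beta_\ell)=0$). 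Consequently $R^{(i)}$ has at least $n-|U|+t$ prescribed zeros among the $\beta_\ell$'s.

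Second, I combine this with the degree bound $\deg R^{(i)}<k^{(i)}+|U|$ coming from~(\ref{eqn:IRSerrorevent}) and factor
\[
R^{(i)}(x)=\prod_{\ell\notin U'}(x-\beta_\ell)\cdot Q^{(i)}(x),
\]
where $\deg Q^{(i)}<(k^{(i)}+|U|)-(n-|U|+t)=2|U|+k^{(i)}-n-t$. Each $Q^{(i)}$ thus lies in an $F$-vector space of size at most $q^{2|U|+k^{(i)}-n-t}$, and so the $L$-tuple $(Q^{(1)},\ldots,Q^{(L)})$ ranges over at most
\[
\prod_{i=1}^{L}q^{2|U|+k^{(i)}-n-t}=q^{L(2|U|-(n-k_\text{avg})-t)}
\]
values, which matches the target bound.

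Third, I recover $E$ from $(Q^{(1)},\ldots,Q^{(L)})$: for $\ell\in U'$, $e^{(i)}_\ell=R^{(i)}(\beta_\ell)/\Lambda(\beta_\ell)$; for $\ell\notin U$, $e^{(i)}_\ell=0$; and for $\ell\in B$ the values $e^{(i)}_\ell$ are unconstrained by~(\ref{eqn:IRSerrorevent}) but must produce a nonzero column in $F^L$ because $E\in\calE_U$. The hard part is the bookkeeping at the $t$ columns in $B$, since naively each $Q$-tuple admits $(q^L-1)^t$ choices there, which would inflate the bound by that factor; the cleanest resolution is a finer union bound that jointly indexes $\Lambda$ together with the $B$-column configuration of $E$, so that the $(q^L-1)^t$ overhead is amortized against the $(q^L-1)^{|U|}$ factor in $|\calE_U|$ when the lemma is deployed in the proof of Theorem~\ref{theorem:ProboffailuredecodingIRS}.
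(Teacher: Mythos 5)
Your setup and your count of the quotient tuples are exactly the paper's: the paper sets $\tilde E^{(i)}(x) \eqdef E^{(i)}(x)\Lambda(x)\bmod m(x)$, notes via the evaluation isomorphism that $\tilde E^{(i)}$ has degree less than $k^{(i)}+|U|$ and at least $n-|U|+t$ prescribed zeros, invokes Proposition~\ref{proposition:IRSNumPolyWithZeros} to get at most $q^{2|U|-(n-k^{(i)})-t}$ choices per row, and multiplies over $i$; your explicit factorization through $Q^{(i)}$ is a cosmetic variant of that count. The difference lies in the last step. The paper writes ``putting all rows together yields the lemma,'' which tacitly treats $E\mapsto(\tilde E^{(1)},\ldots,\tilde E^{(L)})$ as injective on $\calE_U$. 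You correctly observe that it is not: the $t$ columns indexed by $B$ are invisible to the tuple, so each tuple has up to $(q^L-1)^t$ preimages in $\calE_U$ satisfying (\ref{eqn:IRSerrorevent}). Your ``hard part'' is therefore not a defect of your write-up relative to the paper --- it is a genuine gap in the paper's own proof, and you have located it precisely.

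Moreover, the gap cannot be closed, because Lemma~\ref{lemma:IRSNumErrorsFixedLambda} is false as stated. Take $L=2$, $n=5$, $k^{(1)}=k^{(2)}=1$, $U=\{0,1,2\}$, and $\Lambda(x)=x-\beta_0\in\calL_1$ (so $t=1$). The admissible tuples are $\tilde E^{(i)}(x)=c_i(x-\beta_0)(x-\beta_3)(x-\beta_4)$ with $(c_1,c_2)\neq(0,0)$, and column $0$ of $E$ is entirely free apart from being nonzero; hence the number of $E\in\calE_U$ satisfying (\ref{eqn:IRSerrorevent}) is $(q^2-1)^2$, which exceeds the claimed bound $q^{L(2|U|-(n-k_\text{avg})-t)}=q^2$. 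The correct count carries the extra factor $(q^L-1)^t$. Your proposed amortization does go through: substituting $q^{L(2|U|-(n-k_\text{avg})-t)}(q^L-1)^t$ into the sum over $t$ and $\calL_t$ and dividing by $|\calE_U|=(q^L-1)^{|U|}$ still gives a valid bound, but what emerges is exactly (\ref{eqn:SchSidoBossert2009bound2}) with the prefactor (\ref{eqn:SchSidoBossert2009boundPrefactor}), not with $\gamma=1$. So the repaired argument proves Theorem~\ref{theorem:ProboffailuredecodingIRS} only in the weaker form with $\gamma$ reinstated; the claimed improvement to $\gamma=1$ rests on the flawed lemma and would need a genuinely different argument.
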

The proof will be given below.
We then have
\begin{equation} \label{eqn:IRSCardSf}
|\calE_\text{fpi}| \leq \sum_{t=0}^{|U|-1} |\calL_t|\, q^{L(2|U|-(n-k_\text{avg})-t)}.
\end{equation}

\begin{lemma} \label{lemma:IRSNumLambdat}
\begin{equation}
|\calL_t| = {|U| \choose t} (q-1)^{|U|-t-1}.
\end{equation}
\eproofnegspace
\end{lemma}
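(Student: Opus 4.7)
The plan is to parametrize polynomials of degree less than $N := |U|$ via their values on the set $\calB_U$ and translate the zero-counting condition into a constraint on the value tuple.

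First, I would invoke the evaluation isomorphism: since $\calB_U = \{\beta_\ell : \ell \in U\}$ consists of $N$ distinct elements of $F$, evaluation
\[
\Lambda(x) \mapsto \big(\Lambda(\beta_\ell)\big)_{\ell \in U}
\]
is an $F$-linear bijection between the $N$-dimensional space of polynomials $\Lambda(x)\in F[x]$ with $\deg \Lambda(x) < N$ and the coordinate space $F^N$. Under this bijection, $\Lambda(x)$ has exactly $t$ zeros in $\calB_U$ if and only if the corresponding tuple $(\Lambda(\beta_\ell))_{\ell\in U}$ has exactly $t$ coordinates equal to $0$.

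Second, I would count such tuples directly: choose which $t$ of the $N$ coordinates vanish in $\binom{N}{t}$ ways, and let the remaining $N-t$ coordinates be arbitrary nonzero elements of $F$ in $(q-1)^{N-t}$ ways. For $t < N$ (which is the only case occurring in the lemma, since $t \in \{0,\dots,|U|-1\}$) no such tuple is the all-zero tuple, so every one of these $\binom{N}{t}(q-1)^{N-t}$ polynomials is nonzero.

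Finally, I would pass from nonzero polynomials to monic ones. The set just counted is closed under scalar multiplication by $F\setminus\{0\}$ (scaling preserves the zero set on $\calB_U$), and each orbit has exactly $q-1$ elements containing a unique monic representative (the one obtained by dividing through by the leading coefficient). Dividing by $q-1$ therefore yields
\[
|\calL_t| = \frac{\binom{N}{t}(q-1)^{N-t}}{q-1} = \binom{|U|}{t}(q-1)^{|U|-t-1},
\]
as claimed. There is no serious obstacle; the only subtlety worth noting is that the upper bound $t \le |U|-1$ in the lemma statement is precisely what rules out the zero polynomial and lets the scaling argument apply uniformly.
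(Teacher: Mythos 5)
Your proof is correct and follows essentially the same route as the paper's: both use the evaluation isomorphism onto $F^{|U|}$ to count nonzero polynomials with a given zero pattern on $\calB_U$ and then divide by $q-1$ to pass to monic representatives. The only cosmetic difference is that you count all $\binom{|U|}{t}(q-1)^{|U|-t}$ tuples at once, whereas the paper first fixes the $t$ prescribed zeros and multiplies by the binomial coefficient afterwards.
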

The proof is given below.
Thus (\ref{eqn:IRSCardSf}) becomes
\begin{IEEEeqnarray}{rCl}
|\calE_\text{fpi}|  
 & \leq & \sum_{t=0}^{|U|-1} {|U| \choose t} (q-1)^{|U|-t-1} q^{L(2|U|-(n-k_\text{avg})-t)} 
         \IEEEeqnarraynumspace\\
 & = &w\sum_{t=0}^{|U|-1}  {|U| \choose t}(q-1)^{-t} q^{-Lt}
         \IEEEeqnarraynumspace  \label{eqn:IRSbeginusew}\\
 & < &w \sum_{t=0}^{|U|} {|U| \choose t}\left( (q-1)^{-1} q^{-L} \right)^t
         \IEEEeqnarraynumspace\\
 & = & w\big( 1 + (q-1)^{-1} q^{-L} \big)^{|U|}    \IEEEeqnarraynumspace \label{eqn:IRSendusew}
\\
 & = &  \frac{q^{L(|U|-(n-k_\text{avg}))}}{q-1} \left( (q-1) q^L + 1 \right)^{|U|}
\end{IEEEeqnarray}
with 
\begin{equation}
w \eqdef (q-1)^{|U|-1} q^{L(2|U|-(n-k_\text{avg}))}
\end{equation}
in (\ref{eqn:IRSbeginusew})--(\ref{eqn:IRSendusew}). From (\ref{eqn:IRSProofPf1}), we then have
\begin{IEEEeqnarray}{rCl}
P_\text{fpi}
        & < & \frac{q^{L(|U|-(n-k_\text{avg}))}}{q-1} \left(\frac{q^{L+1}-q^L+1}{q^L-1}\right)^{|U|} \\
        & = &  \frac{q^{-L(n-k_\text{avg}-|U|)+|U|}}{q-1} \left(\frac{q^{L}-(q^{L-1}-q^{-1})}{q^L-1}\right)^{|U|}
       \IEEEeqnarraynumspace
\end{IEEEeqnarray}
and (\ref{eqn:LambdaeFailureProbBound}) follows if $L>1$.

For the proof of Lemma~\ref{lemma:IRSNumErrorsFixedLambda},
we will use the following elementary fact.

\begin{proposition}
\label{proposition:IRSNumPolyWithZeros}
The number of nonzero polynomials over $F$ of degree at most $\nu$
and with $\mu\leq\nu$ prescribed zeros in $F$ (and allowing additional zeros in $F$)
is \mbox{$|F|^{\nu-\mu+1}-1$}.
\end{proposition}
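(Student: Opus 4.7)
The plan is to exhibit a bijection between the polynomials to be counted and the nonzero polynomials of degree at most $\nu-\mu$.

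First I would fix the $\mu$ prescribed zeros $\alpha_1,\ldots,\alpha_\mu \in F$ (which must be distinct, since they are prescribed zeros) and let $g(x) \eqdef (x-\alpha_1)\cdots(x-\alpha_\mu)$, a polynomial of degree exactly $\mu$. A polynomial $p(x) \in F[x]$ has each of $\alpha_1,\ldots,\alpha_\mu$ as a zero if and only if $g(x)$ divides $p(x)$, i.e., $p(x) = g(x) q(x)$ for some $q(x) \in F[x]$. This follows from the fact that $F[x]$ is a principal ideal domain and the factors $(x-\alpha_j)$ are pairwise coprime.

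Next I would relate the degree constraint: since $\deg g(x) = \mu$, the condition $\deg p(x) \leq \nu$ translates via $p(x) = g(x) q(x)$ to $\deg q(x) \leq \nu - \mu$ (using that $\deg(g q) = \deg g + \deg q$ when $q\neq 0$, and allowing $q = 0$ which corresponds to $p = 0$). Hence the correspondence $q(x) \mapsto g(x)q(x)$ is a bijection between polynomials $q(x)$ of degree at most $\nu-\mu$ and polynomials $p(x)$ of degree at most $\nu$ having $\alpha_1,\ldots,\alpha_\mu$ among their zeros.

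Finally, the number of polynomials $q(x) \in F[x]$ of degree at most $\nu-\mu$ equals $|F|^{\nu-\mu+1}$, since each such polynomial is specified by its $\nu-\mu+1$ coefficients, each freely chosen in $F$. Excluding $q(x)=0$ (which corresponds to $p(x)=0$) gives $|F|^{\nu-\mu+1} - 1$ nonzero polynomials, establishing the claim. The argument is essentially routine; the only subtle point is to confirm that additional zeros in $F$ are accommodated by allowing $q(x)$ to be arbitrary rather than restricted, which is precisely the freedom the statement permits.
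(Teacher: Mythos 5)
Your argument is correct: the divisibility bijection $q(x)\mapsto g(x)q(x)$ with $g(x)=\prod_{j=1}^{\mu}(x-\alpha_j)$ is exactly the standard way to establish this count, and the degree bookkeeping and the exclusion of $q(x)=0$ are handled properly. Note that the paper states this proposition without proof (as an ``elementary fact''), so there is no authorial proof to compare against; your write-up simply supplies the routine justification the authors chose to omit.
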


\begin{proofof}{of Lemma~\ref{lemma:IRSNumErrorsFixedLambda}}
Consider the polynomial $E^{(i)}(x) = \psi^{-1}(e^{(i)})$ where $e^{(i)}$ is a row of $E$,
and let $\tilde E^{(i)}(x)\eqdef E^{(i)}(x)\Lambda(x)\bmod m(x)$.
From (\ref{eqn:DefMappingPsi}), we have
\begin{equation} \label{eqn:IRSTildeEIsomorph}
\tilde E^{(i)}(\beta_\ell) = e_{i,\ell} \Lambda(\beta_\ell)
\end{equation}
where $e_{i,\ell}$ denotes the element in row $i$ and column $\ell$ of $E$.
From (\ref{eqn:IRSerrorevent}), we have $\deg \tilde E^{(i)}(x) < k^{(i)}+|U|$.
But (\ref{eqn:IRSTildeEIsomorph})
implies that $\tilde E^{(i)}(x)$ has at least 
$n-|U|+t$ zeros in prescribed positions:
$e_{i,\ell}=0$ for $\ell\not\in U$ and
$\Lambda(x)$ has $t$ zeros in $\calB_U = \{ \beta_\ell: \ell\in U \}$.
By Proposition~\ref{proposition:IRSNumPolyWithZeros},
the number of such polynomials $\tilde E^{(i)}$ is bounded by
$q^{2|U| - (n-k^{(i)}) - t}$,  and 
putting all rows together yields the lemma. 
\end{proofof}

\begin{proofof}{of Lemma~\ref{lemma:IRSNumLambdat}}
Consider nonzero polynomials $\Lambda(x)\in F[x]$
with $\deg \Lambda(x)<|U|$ 
and with $t$ prescribed zeros in $\calB_U$ ($=\{ \beta_\ell: \ell \in U \}$)
and no other zeros in $\calB_U$.
The number of such polynomials $\Lambda(x)$
is $(q-1)^{|U|-t}$,
as is obvious from the ring isomorphism
\begin{equation}
F[x]/m_U(x) \rightarrow F^{|U|} : 
       \Lambda(x) \mapsto \big(\Lambda(\beta_1'),\ldots,\Lambda(\beta'_{|U|})\big)
\end{equation}
with $m_U(x)\eqdef \prod_{\ell\in U}(x-\beta_\ell)$
and $\{ \beta'_1, \ldots, \beta'_{|U|} \} \eqdef \calB_U$.
Lemma~\ref{lemma:IRSNumLambdat} then follows from noting
that it counts only monic polynomials.
\end{proofof}

\section{The Reverse--Berlekamp Massey SPI Algorithm}
\label{sec:SPIAlg}

\begin{table}[tp]
\framebox[\linewidth]{%
\normalsize%
\begin{minipage}{0.95\linewidth}
\begin{algorithm}[Basic SPI Algorithm]\label{alg:BasicSPIAlg}\\
(Reverse Berlekamp--Massey algorithm)\\[0.5ex]
{Input:} $b^{(i)}(x), m^{(i)}(x), \tau^{(i)}$ for $i=1,\ldots,L$.\\
{Output:} $\Lambda(x)$ as in the problem statement.
\begin{pseudocode}
\npcl[line:SPIforinitial] \pkw{for} $i=1,\ldots, L$ \pkw{begin}\\
\npcl \>$\Lambda^{(i)}(x):=0$\\
\npcl[line:SPIinitialdi] \>$d^{(i)}:=\deg m^{(i)}(x)$\\
\npcl \>$\kappa^{(i)}:=\lcf m^{(i)}(x)$\\
\npcl[line:SPIforendinitial] \pkw{end} \\
\npcl[line:SPIinitialLam] $\Lambda(x):=1$\\
\npcl[line:SPIinitialdelta] $\delta:=\max_{i\in\{1,\ldots,L\}}\big(\deg m^{(i)}(x)-\tau^{(i)}\big)$\\
\npcl[line:SPIinitialiddx] $i:=1$\\
\npcl[line:SPIloopbegin]   \pkw{loop begin} \\
\npcl[line:SPIrepeat] \> \pkw{repeat}\\
\npcl[line:SPIupdatei]\> \>  \pkw{if} $i>1$  \pkw{begin} $i:=i-1$ \pkw{end}\\
\npcl[line:SPIupdtatedelta]\> \>  \pkw{else begin} \\
\npcl[line:SPIstop] \>\>\> \pkw{if} $\delta\leq 0$  \pkw{return} $\Lambda(x)$\\
\npcl \>\>\>$i:=L$\\
\npcl \>\>\>$\delta:=\delta-1$\\
\npcl[line:SPIupdateiEnd]\> \> \pkw{end} \\
\npcl[line:SPIupdated]\>\> $d:=\delta+\tau^{(i)}$\\
\npcl[line:SPIkappa] \> \> $\kappa:= \text{coefficient of $x^{d}$ in}$ \\
      \> \> \> \> \> \> $b^{(i)}(x) \Lambda(x) \bmod m^{(i)}(x)$ \\
\npcl[line:SPIuntil] \> \pkw{until} $\kappa \neq 0$\\
\> \> \> {\grey\rule[0.5ex]{50mm}{1pt}}\\
\npcl[line:SPIifswap] \> \pkw{if} $ d<d^{(i)}$ \pkw{begin} \\
\npcl[line:SPIswapbegin] \>\> \pkw{swap} $(\Lambda(x),\Lambda^{(i)}(x))$\\
\npcl[line:SPIswapddi] \>\> \pkw{swap} $(d, d^{(i)})$\\
\npcl[line:SPIswapend] \>\> \pkw{swap} $(\kappa, \kappa^{(i)})$\\
\npcl[line:SPIresetdelta] \>\> $\delta:=d-\tau^{(i)}$ \\
\npcl[line:SPIifswapend] \> \pkw{end} \\
\> \> \> {\grey\rule[0.5ex]{50mm}{1pt}}\\
\npcl[line:SPIupdateLambda] \>
$\Lambda(x):= \kappa^{(i)} \Lambda(x)- \kappa x^{d-d^{(i)}} \Lambda^{(i)}(x)$ \\
\npcl[line:SPIloopend]   \pkw{end}
\end{pseudocode}
\end{algorithm}
See also the refinement in Algorithm~\ref{alg:BasicSPIASpec} below.
\end{minipage}%
}
\vspace{\dblfloatsep}

\framebox[\linewidth]{%
\normalsize%
\begin{minipage}{0.95\linewidth}
\begin{algorithm}[Monomial-SPI Algorithm]\label{alg:BasicSPIASpec}\\[0.5ex]
~In the special case $m^{(i)}(x)=x^{\nu_i}$,\\
line~\ref{line:SPIkappa} of Algorithm~\ref{alg:BasicSPIAlg}
amounts to 
\begin{pseudocode}
%
%
$\kappa :=
\left\{ \begin{array}{ll}
 0, & \text{if $d\geq \nu_i$}\\
 \multicolumn{2}{l}{%
 b_{d}^{(i)} \Lambda_0 + b_{d-1}^{(i)} \Lambda_1 + \ldots + b_{d-s}^{(i)} \Lambda_s,}\\
    & \text{if $d< \nu_i$,}
 \end{array}\right.$
\end{pseudocode}
with $s \eqdef \deg \Lambda(x)$ and $b_\ell^{(i)}  \eqdef 0$ for $\ell<0$. 
\end{algorithm}
\end{minipage}%
}
\end{table}

We now consider algorithms to solve the SPI problem. 
The basic algorithm (the reverse Berlekamp--Massey algorithm)
is stated as Algorithm~\ref{alg:BasicSPIAlg} in the framed box.
The important special case where $m^{(i)}(x) = x^{\nu_i}$ 
is stated as Algorithm~\ref{alg:BasicSPIASpec}. 
(Algorithm~\ref{alg:BasicSPIASpec} is strikingly similar to 
the MLFSR algorithm of \cite{FengTzeng1991} and \cite{SchmidtSidorenko2006}, 
but it is nonetheless a different algorithm.)
Since every SPI problem can be efficiently monomialized, 
Algorithm~\ref{alg:BasicSPIASpec} is the preferred SPI algorithm.

Two variations of Algorithm~\ref{alg:BasicSPIAlg} are given in 
Appendix~\ref{section:QuotientRemainderSavingAlgs}. 
These variations generalize the Quotient Saving Algorithm 
and the Remainder Saving Algorithm (a Euclidean algorithm) 
of \cite{YuLoeliger2016IT}
to $L>1$. However, for $L>1$, these algorithms are less attractive
than Algorithm~\ref{alg:BasicSPIASpec}.

\subsection{Beginning to Explain the Algorithm}
\label{sec:AlgSomeExplanations}

Lines~\ref{line:SPIforinitial}--\ref{line:SPIinitialiddx} 
of Algorithm~\ref{alg:BasicSPIAlg}
are for
initialization; 
the nontrivial part begins with line~\ref{line:SPIloopbegin}.
Note that lines
\ref{line:SPIswapbegin}--\ref{line:SPIswapend} simply swap
$\Lambda(x)$ with $\Lambda^{(i)}(x)$, $d$ with $d^{(i)}$, and
$\kappa$ with $\kappa^{(i)}$. The only actual computations are in
lines~\ref{line:SPIkappa} and \ref{line:SPIupdateLambda}. 

We now begin to explain the algorithm 
(but the actual proof of correctness will be deferred to 
Appendix~\ref{section:proofAlgo}).
To this end, we define the following quantities.
For any nonzero
$\Lambda(x)$ and any $i\in \{1,2,\ldots,L\}$, let
\begin{equation} \label{eqn:rdi}
\rd^{(i)}(\Lambda)\eqdef \deg\!\big(b^{(i)}(x)\Lambda(x)\bmod m^{(i)}(x)\big),
\end{equation}
\begin{equation} \label{eqn:deltamax}
\delta_{\text{max}}(\Lambda)\eqdef\max_{i\in \{1,\ldots,L\}}
\Big(\rd^{(i)}(\Lambda)-\tau^{(i)}\Big),
\end{equation}
and
\begin{equation} \label{eqn:imax}
i_\text{max}(\Lambda) \eqdef \max \argmax_{i\in
\{1,\ldots,L\}}\Big(\rd^{(i)}(\Lambda)-\tau^{(i)}\Big),
\end{equation}
the largest among the indices $i$ that maximize
$\rd^{(i)}(\Lambda)-\tau^{(i)}$, cf.\ Figure~1.

At any given time, the algorithm works on the polynomial
$\Lambda(x)$. %
The inner \pkw{repeat} loop
(lines~\ref{line:SPIrepeat}--\ref{line:SPIuntil})
computes the quantities defined in (\ref{eqn:rdi})--(\ref{eqn:imax}):
between lines~\ref{line:SPIuntil} and \ref{line:SPIifswap}, we have
\begin{equation} \label{eqn:repeatloopaimstodo}
i=i_{\text{max}}(\Lambda),~\delta=\delta_{\text{max}}(\Lambda),~
d=\rd^{(i)}(\Lambda),
\end{equation}
and also
\begin{equation} \label{eqn:repeatloopaimstodo.1}
\kappa=\lcf\!\big(b^{(i)}(x)\Lambda(x)\bmod m^{(i)}(x)\big).
\end{equation}
In particular, the very first execution of the \pkw{repeat} loop
(with $\Lambda(x)=1$)
yields
\begin{equation}
i=\max \argmax_{i\in \{1,\ldots,L\}}\Big(\deg
b^{(i)}-\tau^{(i)}\Big),
\end{equation}
$d=\deg b^{(i)}(x)$, and $\kappa=\lcf b^{(i)}(x)$  between
lines~\ref{line:SPIuntil} and \ref{line:SPIifswap}. 

In the special case $L=1$,
lines~\ref{line:SPIupdatei}--\ref{line:SPIupdated} (excluding
line~\ref{line:SPIstop}) amount to $d:=d-1$;
in this case, the algorithm reduces
to the partial-inverse algorithm of 
\cite{YuLoeliger2016IT}.

The only exit from the algorithm is line~\ref{line:SPIstop}. Since
$\delta \geq \delta_\text{max}(\Lambda)$, the condition $\delta
\leq 0$ guarantees that $\Lambda(x)$
satisfies~(\ref{eqn:SPI}).

\begin{figure}[t!]
\begin{center}
\setlength{\unitlength}{0.8mm}
\begin{picture}(70,50)(5,5)
\put(18,10){\vector(0,1){38}}
\put(18,50){\pos{cb}{$\rd^{(i)}(\Lambda) - \tau^{(i)}$}}
\put(18,40){\line(-1,0){1}}  \put(16,40){\pos{r}{$\delta_\text{max}$}}
\put(18,32){\line(-1,0){1}}  \put(16,32){\pos{r}{$\delta_\text{max}-1$}}
\put(18,10){\vector(1,0){55}}
\put(18,40){\line(1,0){20}}
\put(28,10){\line(0,1){30}}
\put(38,10){\line(0,1){30}}
\put(38,32){\line(1,0){30}} 
\put(48,10){\line(0,1){22}}
\put(58,10){\line(0,1){22}}
\put(68,10){\line(0,1){22}}
\put(23,8){\pos{t}{$1$}}
\put(33,8){\pos{t}{$i_\text{max}$}}
\put(43,8){\pos{t}{$3$}}
\put(53,8){\pos{t}{$4$}}
\put(63,8){\pos{t}{$5$}}
\put(75,8){\pos{t}{$i$}}
\end{picture}
\caption{\label{Fig:1}%
Illustration of (\ref{eqn:deltamax}) and
(\ref{eqn:imax}) for $i_\text{max}=2$.
}
\end{center}
\vspace{-0.2cm}
\end{figure}

The algorithm maintains the auxiliary polynomials
$\Lambda^{(i)}(x)$, $i=1,\ldots, L$, which are all initialized to
$\Lambda^{(i)}(x)=0$. Thereafter, however, $\Lambda^{(i)}(x)$
become nonzero (after their first respective execution of lines
\ref{line:SPIswapbegin}--\ref{line:SPIswapend}) and satisfy
\begin{equation}
i_{\text{max}}(\Lambda^{(i)})=i.
\end{equation}

The heart of the algorithm is line~\ref{line:SPIupdateLambda},
which cancels the leading term in
\begin{equation} \label{eqn:leadingtermcalcelling}
b^{(i)}(x)\Lambda(x)\bmod m^{(i)}(x)
\end{equation}
(except for the first execution for each index $i$,
see below).
Line~\ref{line:SPIupdateLambda} is explained by the
following lemma.
\begin{lemma}[Remainder Decreasing Lemma]\label{lemma:SPIAlgcore}
Let $\Lambda'(x)$ and $\Lambda''(x)$ be nonzero polynomials such
that $i\eqdef i_\text{max}(\Lambda')=i_\text{max}(\Lambda'')$ and
$\rd^{(i)}(\Lambda')\geq \rd^{(i)}(\Lambda'')$. Then
$\delta_\text{max}(\Lambda')\geq \delta_\text{max}(\Lambda'')$ and
the polynomial
\begin{equation} \label{eqn:2WrongsMakeRight}
\Lambda(x) \eqdef \kappa'' \Lambda'(x)- \kappa' x^{d'-d''}
\Lambda''(x)
\end{equation}
with $d'\eqdef \rd^{(i)}(\Lambda')$, $\kappa'\eqdef
\lcf(b^{(i)}(x)\Lambda'(x)\bmod m^{(i)}(x))$, $d''\eqdef
\rd^{(i)}(\Lambda'')$, and
$\kappa''\eqdef\lcf(b^{(i)}(x)\Lambda''(x)\bmod m^{(i)}(x))$
satisfies both
\begin{equation} \label{eqn:rdsmaller}
\rd^{(i)}(\Lambda)<\rd^{(i)}(\Lambda')
\end{equation}
and
\begin{equation}
\delta_\text{max}(\Lambda) \leq  \delta_\text{max}(\Lambda')
\label{eqn:dmaxsmallerorequal}
\end{equation}
and either
\begin{equation}
i_\text{max}(\Lambda)<  i_\text{max}(\Lambda'),
\label{eqn:imaxsmaller}
\end{equation}
or
\begin{equation}
\delta_\text{max}(\Lambda) <  \delta_\text{max}(\Lambda').
\label{eqn:dmaxsmaller}
\end{equation}
\eproofnegspace
\end{lemma}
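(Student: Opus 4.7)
The plan is to mimic the ring-homomorphism argument from the proof of Proposition~\ref{propo:Uniqueness}, applied to each modulus $m^{(j)}(x)$ in turn. First, the preliminary claim $\delta_{\text{max}}(\Lambda') \geq \delta_{\text{max}}(\Lambda'')$ is immediate: the common index $i = i_{\text{max}}(\Lambda') = i_{\text{max}}(\Lambda'')$ gives $\delta_{\text{max}}(\Lambda') = \rd^{(i)}(\Lambda') - \tau^{(i)}$ and $\delta_{\text{max}}(\Lambda'') = \rd^{(i)}(\Lambda'') - \tau^{(i)}$, so the hypothesis $\rd^{(i)}(\Lambda') \geq \rd^{(i)}(\Lambda'')$ yields it. I will write $\delta \eqdef \delta_{\text{max}}(\Lambda') = d' - \tau^{(i)}$ and note $\delta_{\text{max}}(\Lambda'') = d'' - \tau^{(i)}$.

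Next, set $r'^{(j)}(x) \eqdef b^{(j)}(x)\Lambda'(x) \bmod m^{(j)}(x)$ and $r''^{(j)}(x) \eqdef b^{(j)}(x)\Lambda''(x) \bmod m^{(j)}(x)$, and apply the natural ring homomorphism $F[x] \to F[x]/m^{(j)}(x)$ to~(\ref{eqn:2WrongsMakeRight}) to obtain
\[
r^{(j)}(x) \equiv \kappa'' r'^{(j)}(x) - \kappa' x^{d'-d''} r''^{(j)}(x) \pmod{m^{(j)}(x)},
\]
where $r^{(j)}(x) \eqdef b^{(j)}(x)\Lambda(x) \bmod m^{(j)}(x)$. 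For $j = i$, both summands on the right already have degree at most $d' < \deg m^{(i)}(x)$, so the congruence is an honest equality. Because the leading terms of $\kappa'' r'^{(i)}(x)$ and $\kappa' x^{d'-d''} r''^{(i)}(x)$ both equal $\kappa'\kappa'' x^{d'}$, they cancel, immediately giving~(\ref{eqn:rdsmaller}).

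For general $j$, I would combine the bounds $\rd^{(j)}(\Lambda') - \tau^{(j)} \leq \delta$ (from $i_{\text{max}}(\Lambda') = i$) and $\rd^{(j)}(\Lambda'') - \tau^{(j)} \leq d'' - \tau^{(i)}$ (from $i_{\text{max}}(\Lambda'') = i$) to conclude that both summands above have degree at most $\tau^{(j)} + \delta$. A case split then finishes: if $\tau^{(j)} + \delta < \deg m^{(j)}(x)$, the right-hand side is already a valid remainder and $\rd^{(j)}(\Lambda) - \tau^{(j)} \leq \delta$; otherwise the trivial bound $\rd^{(j)}(\Lambda) \leq \deg m^{(j)}(x) - 1$ together with $\tau^{(j)}+\delta \geq \deg m^{(j)}(x)$ still yields $\rd^{(j)}(\Lambda) - \tau^{(j)} \leq \delta$. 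Taking the maximum over $j$ gives~(\ref{eqn:dmaxsmallerorequal}). For part~(3), I plan to sharpen the bounds for indices $j > i$: since $i$ is by definition the \emph{largest} maximizer for both $\Lambda'$ and $\Lambda''$, strict inequality gains a factor of one throughout, and the same case split yields $\rd^{(j)}(\Lambda) - \tau^{(j)} < \delta$ for every $j > i$. Combined with~(\ref{eqn:rdsmaller}) at $j = i$, this forces any attainment of $\delta_{\text{max}}(\Lambda) = \delta$ to occur at some index $j < i$, hence $i_{\text{max}}(\Lambda) < i$.

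The main obstacle will be the bookkeeping in this degree case split: after subtracting the two summands, the naive bound $\tau^{(j)} + \delta$ on the degree may exceed $\deg m^{(j)}(x)$, and an additional reduction modulo $m^{(j)}(x)$ is required. Fortunately, precisely in that regime the trivial residue bound $\rd^{(j)}(\Lambda) \leq \deg m^{(j)}(x) - 1$ is already tight enough to close the argument, so no extra work beyond the split is needed.
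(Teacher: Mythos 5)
Your proposal is correct and follows essentially the same route as the paper's proof: reduce (\ref{eqn:2WrongsMakeRight}) modulo each $m^{(j)}(x)$ via the ring homomorphism, observe the leading-term cancellation at $j=i$, and bound $\rd^{(j)}(\Lambda)-\tau^{(j)}$ for the remaining indices (strictly for $j>i$, weakly for $j<i$) to deduce (\ref{eqn:dmaxsmallerorequal}) and the dichotomy (\ref{eqn:imaxsmaller})/(\ref{eqn:dmaxsmaller}). If anything, your explicit case split on whether $\tau^{(j)}+\delta$ reaches $\deg m^{(j)}(x)$ is more careful than the paper, which writes the reduced combination as an equality of remainders rather than a congruence; your handling closes that small gap without changing the argument.
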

\begin{proof}
First,
$\delta_\text{max}(\Lambda')\geq \delta_\text{max}(\Lambda'')$ is
obvious from the assumptions. 
For the rest of proof, we define
for every $\ell\in \{1,\ldots,L\}$
\begin{IEEEeqnarray}{rCl}
r'^{(\ell)}(x) & \eqdef & b^{(\ell)}(x) \Lambda'(x) \bmod m^{(\ell)}(x) \label{eqn:CoreLemmaR1} \IEEEeqnarraynumspace\\
r''^{(\ell)}(x) & \eqdef & b^{(\ell)}(x) \Lambda''(x) \bmod m^{(\ell)}(x)
\label{eqn:CoreLemmaR2}
\end{IEEEeqnarray}
and we obtain from (\ref{eqn:2WrongsMakeRight})
\begin{IEEEeqnarray}{rCl}
r^{(\ell)}(x)  & \eqdef &  b^{(\ell)}(x) \Lambda(x) \bmod m^{(\ell)}(x) \IEEEeqnarraynumspace\\
  & = & \kappa'' r'^{(\ell)}(x) - \kappa' x^{d'-d''} r''^{(\ell)}(x) \label{eqn:2WrongsMakeRightRemainder}
\end{IEEEeqnarray}
by the natural ring homomorphism $F[x] \rightarrow
F[x]/m^{(\ell)}(x)$. 
Moreover, we define 
\begin{IEEEeqnarray}{rCl}
\delta^{(\ell)}(\Lambda)&\eqdef& \rd^{(\ell)}(\Lambda)-\tau^{(\ell)} \IEEEeqnarraynumspace\\
\delta^{(\ell)}(\Lambda')&\eqdef& \rd^{(\ell)}(\Lambda')-\tau^{(\ell)} \\
\delta^{(\ell)}(\Lambda'')&\eqdef& \rd^{(\ell)}(\Lambda'')-\tau^{(\ell)}
\end{IEEEeqnarray}
(cf.\ Figure~1 for $\Lambda$, $\Lambda'$, and $\Lambda''$, respectively).
By the stated assumptions, we have for $\ell=i$
\begin{equation} \label{proof:degsum}
\delta^{(i)}(\Lambda')=d'-d''+\delta^{(i)}(\Lambda''),
\end{equation}
and we obtain from (\ref{eqn:2WrongsMakeRightRemainder})
\begin{equation}\label{proof:1fordmaximax}
\delta^{(i)}(\Lambda)<\delta^{(i)}(\Lambda'),
\end{equation}
\begin{equation} \label{proof:2fordmaximax}
\delta^{(\ell)}(\Lambda) \leq  \delta^{(i)}(\Lambda') \text{~for~}
\ell<i,
\end{equation}
and
\begin{equation} \label{proof:3fordmaximax}
\delta^{(\ell)}(\Lambda) <\delta^{(i)}(\Lambda') \text{~for~} \ell>i.
\end{equation}
Clearly, (\ref{proof:1fordmaximax}) implies (\ref{eqn:rdsmaller}); 
(\ref{proof:1fordmaximax})--(\ref{proof:3fordmaximax})
together imply both (\ref{eqn:dmaxsmallerorequal}) and either
(\ref{eqn:imaxsmaller}) or (\ref{eqn:dmaxsmaller}) (or both).
\end{proof}

It follows from (\ref{eqn:rdsmaller})--(\ref{eqn:dmaxsmaller})
that the algorithm makes progress and eventually terminates.

For each index~$i\in \{1,\ldots,L\}$,
when line~\ref{line:SPIupdateLambda}
is executed for the very first time,
it is
necessarily preceded by the swap in
lines~\ref{line:SPIswapbegin}--\ref{line:SPIswapend}. In this
case, line~\ref{line:SPIupdateLambda} reduces to
\begin{equation} \label{eqn:specialcaselambda}
\Lambda(x):=-\Big(\lcf m^{(i)}(x)\Big)x^{\deg
m^{(i)}(x)-\rd^{(i)}(\Lambda')}\Lambda'(x)
\end{equation}
where $\Lambda'(x)$ is the value of $\Lambda(x)$ before the swap.
It follows, in particular, that $\deg \Lambda(x)>\deg
\Lambda'(x)$.

In any case, we always have
\begin{equation}\label{algorithm:rdiLambda}
\deg\!\big(b^{(i)}(x)\Lambda(x)\bmod m^{(i)}(x)\big)<d
\end{equation}
after executing line~\ref{line:SPIupdateLambda}.

Finally, we note that every execution of the swap in
lines~\ref{line:SPIswapbegin}--\ref{line:SPIswapend} strictly
reduces $d^{(i)}$.
We also note that the execution of line~\ref{line:SPIresetdelta}
results in
\begin{equation} \label{eqn:deltavalue}
\delta= \left\{ \begin{array}{ll}
         \delta_\text{max}(\Lambda), & \text{if $\Lambda(x)\neq 0$} \\
         \deg m^{(i)}-\tau^{(i)}, & \text{if $\Lambda(x)= 0$,}
        \end{array}\right.
\end{equation}
where the second case happens only once---the very first time---%
for each index $i\in \{ 1,\ldots, L \}$.

\begin{theorem}
\label{theorem:CorrectnessOfBasicAlg}
Algorithm~\ref{alg:BasicSPIAlg} returns the solution of the SPI problem.
\end{theorem}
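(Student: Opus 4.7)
The plan is to establish three things in sequence for Algorithm~\ref{alg:BasicSPIAlg}: termination, that the returned $\Lambda(x)$ satisfies (\ref{eqn:SPI}), and that it has minimum degree among all nonzero solutions. The scaffolding is a set of loop invariants that hold at the top of the outer loop (line~\ref{line:SPIloopbegin}); feasibility and minimality will both fall out once the algorithm exits at line~\ref{line:SPIstop}.

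The invariants I would track are: (i)~$\Lambda(x)\neq 0$; (ii)~$\delta\geq\delta_\text{max}(\Lambda)$; (iii)~for every $i$ with $\Lambda^{(i)}(x)\neq 0$, one has $i_\text{max}(\Lambda^{(i)})=i$, $\rd^{(i)}(\Lambda^{(i)})=d^{(i)}$, and $\kappa^{(i)}=\lcf\!\big(b^{(i)}(x)\Lambda^{(i)}(x)\bmod m^{(i)}(x)\big)$; and, crucially, (iv)~a \emph{minimality invariant}: no nonzero polynomial of degree strictly less than $\deg\Lambda(x)$ attains a value of $(\delta_\text{max},i_\text{max})$ that is lexicographically better than the one attained by $\Lambda(x)$. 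These all hold trivially after initialization (lines~\ref{line:SPIforinitial}--\ref{line:SPIinitialiddx}), with $\Lambda(x)=1$ and every $\Lambda^{(i)}(x)=0$.

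The routine work is showing each outer-loop iteration preserves (i)--(iv). The inner \pkw{repeat} loop only scans coefficients of $b^{(i)}(x)\Lambda(x)\bmod m^{(i)}(x)$ from high to low degree, so (\ref{eqn:repeatloopaimstodo})--(\ref{eqn:repeatloopaimstodo.1}) follow straight from the definitions. The optional swap in lines~\ref{line:SPIifswap}--\ref{line:SPIifswapend} and the update in line~\ref{line:SPIupdateLambda} are the substantive step: the Remainder Decreasing Lemma (Lemma~\ref{lemma:SPIAlgcore}), together with the special-case identity (\ref{eqn:specialcaselambda}) for the first update at each index, ensures that $\rd^{(i)}(\Lambda)$ strictly decreases, $\delta_\text{max}(\Lambda)$ at worst stays the same, and the pair $(\delta,i)$ scanned by the repeat loop decreases lexicographically across outer-loop iterations. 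Since $\delta$ is bounded below by $0$, this gives termination.

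The main obstacle is preserving the minimality invariant (iv) across the combination in line~\ref{line:SPIupdateLambda}. I would argue by contradiction: suppose that after the update some nonzero $\Lambda^*(x)$ with $\deg\Lambda^*(x)<\deg\Lambda(x)$ attains a lexicographically better $(\delta_\text{max}(\Lambda^*),i_\text{max}(\Lambda^*))$. Let $d^*\eqdef\rd^{(i)}(\Lambda^*)$ and let $\kappa^*$ be the corresponding leading coefficient on coordinate $i$, and form
\begin{equation}
\widehat\Lambda(x)\eqdef \kappa^{(i)}\Lambda^*(x)-\kappa^* x^{d^*-d^{(i)}}\Lambda^{(i)}(x),
\end{equation}
which by the ring homomorphism $F[x]\to F[x]/m^{(i)}(x)$ (used in the proofs of Proposition~\ref{propo:Uniqueness} and Lemma~\ref{lemma:SPIAlgcore}) cancels the leading monomial of $b^{(i)}(x)\Lambda^*(x)\bmod m^{(i)}(x)$, and by (iii) has degree strictly less than $\deg\Lambda^{(i)}(x)$; this contradicts invariant (iv) as it stood at the moment $\Lambda^{(i)}(x)$ was last written, closing the induction. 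Finally, at the exit in line~\ref{line:SPIstop}, $\delta\leq 0$ combined with (ii) yields $\rd^{(i)}(\Lambda)<\tau^{(i)}$ for every $i$, so (\ref{eqn:SPI}) holds; and invariant (iv) then forbids any nonzero polynomial of smaller degree from doing likewise, so $\Lambda(x)$ is the unique (up to a scale factor) SPI solution guaranteed by Proposition~\ref{propo:Uniqueness}.
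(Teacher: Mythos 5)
Your treatment of termination and feasibility matches the paper's: the lexicographic decrease guaranteed by Lemma~\ref{lemma:SPIAlgcore} (together with the strict decrease of $d^{(i)}$ at each swap) gives termination, and $\delta\geq\delta_\text{max}(\Lambda)$ combined with the exit test $\delta\leq 0$ gives (\ref{eqn:SPI}). The gap is in the minimality argument, which is the heart of the theorem. First, your invariant~(iv) compares candidates against the pair $\big(\delta_\text{max}(\Lambda),i_\text{max}(\Lambda)\big)$ attained by the current $\Lambda(x)$ itself. At exit this is too weak: the returned $\hat\Lambda(x)$ may oversatisfy the constraints, say $\delta_\text{max}(\hat\Lambda)=-3$, and a nonzero $\Lambda^*(x)$ of smaller degree with $\delta_\text{max}(\Lambda^*)=-1$ is not ``lexicographically better'' than $\hat\Lambda(x)$, so (iv) does not exclude it---yet it satisfies (\ref{eqn:SPI}) and would contradict the theorem. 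The invariant must be anchored to the set of constraint positions already scanned (equivalently, to all positions with $\delta'\geq 0$ at exit), not to $\hat\Lambda$'s own first violation. Second, the exchange step does not go through as written: for $x^{d^*-d^{(i)}}$ to be a polynomial you need $d^*\geq d^{(i)}$, which is not justified for an arbitrary hypothetical $\Lambda^*(x)$, and when it does hold the term $\kappa^* x^{d^*-d^{(i)}}\Lambda^{(i)}(x)$ has degree at least $\deg\Lambda^{(i)}(x)$, so the claimed bound $\deg\widehat\Lambda(x)<\deg\Lambda^{(i)}(x)$ fails; if instead $d^*<d^{(i)}$ the combination is not even defined.

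The paper proves minimality by a different, global route that you would need to reconstruct. It records the swapped-out polynomials $\Lambda_1(x),\ldots,\Lambda_K(x)$ together with the degree gaps $\Delta_k$; by Proposition~\ref{SPIprop:degLamisincreasing} their degrees are strictly increasing and partition $\{0,1,\ldots,\deg\hat\Lambda(x)-1\}$ into consecutive blocks, so every nonzero $\tilde\Lambda(x)$ with $\deg\tilde\Lambda(x)<\deg\hat\Lambda(x)$ decomposes \emph{uniquely} as $\sum_{k}q_k(x)\Lambda_k(x)$ with $\deg q_k(x)<\Delta_k$. Assertion~(\assertref{SPIassert:diLdLtaui}) guarantees that each pivot satisfies $\rd^{(i_k)}(\Lambda_k)=d_k-\Delta_k=\delta+\tau^{(i_k)}\geq\tau^{(i_k)}$, so each nonzero term $q_k(x)\Lambda_k(x)$ violates constraint $i_k$ of (\ref{eqn:SPI}); Lemma~\ref{lemma:leadingqtLambdat} and Corollary~\ref{CorollaryForSPIproof} then show that the degrees of these violating remainders are strictly separated within each index class $S_i$, so the violations cannot cancel in the sum. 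That basis-decomposition step---and the fact that every pivot lies inside the constraint set of (\ref{eqn:SPI})---is the missing idea in your proposal.
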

The proof will be given in 
Appendix~\ref{section:proofAlgo}.

\subsection{Complexity of Algorithm~\ref{alg:BasicSPIAlg}}

Let 
\begin{equation}\label{def:Bbound}
\maxD \eqdef L \max_{i\in\{1,\ldots,L\}}\!\big(\deg m^{(i)}(x)-\tau^{(i)}\big).
\end{equation}
and note that $D$ as in (\ref{eqn:DegreeBoundsymbol}) satisfies 
$D\leq \maxD$.
\begin{theorem}[Number of Iterations]\label{theorem:NumIterations}
The number $N_\text{it}$ of executions of line~\ref{line:SPIkappa} 
of Algorithm~\ref{alg:BasicSPIAlg} is
\begin{IEEEeqnarray}{rCl}
N_\text{it} &=& \maxD + L\deg\Lambda(x)\label{Nit:ExactBound}\\
&\leq & \maxD + LD \label{Nit:UpperBound}
\end{IEEEeqnarray}
where $\Lambda(x)$ is the solution of the SPI problem.
\end{theorem}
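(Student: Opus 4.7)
The plan is to prove (\ref{Nit:ExactBound}) by a potential-function argument; (\ref{Nit:UpperBound}) will then follow at once from Proposition~\ref{proposition:SolutionDegree}. Introduce the potential
\[
\Phi := L\delta + (i-1)
\]
as a function of the algorithm's state variables. The first observation is that every iteration of the \pkw{repeat} loop (lines~\ref{line:SPIrepeat}--\ref{line:SPIuntil}) that reaches line~\ref{line:SPIkappa} decreases $\Phi$ by exactly~$1$: lines~\ref{line:SPIupdatei}--\ref{line:SPIupdateiEnd} either decrement $i$ by~$1$ (keeping $\delta$ fixed) or, when $i=1$, wrap $i$ to $L$ and decrement $\delta$ by~$1$; in either case the change in $\Phi$ is $-1$. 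Hence $N_\text{it}$ equals the total net decrement of $\Phi$ over the entire execution.

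Next I would pin down the boundary values and account for upward jumps. Initially $(i,\delta)=(1,\max_i(\deg m^{(i)}(x)-\tau^{(i)}))$, so $\Phi_\text{init}=\hat D$. A short induction shows that $\delta\geq 0$ whenever line~\ref{line:SPIkappa} is executed (a wrap happens only when $\delta>0$, line~\ref{line:SPIstop} intercepts the opposite case; the reset in line~\ref{line:SPIresetdelta} sets $\delta := d-\tau^{(i)}$ with $d$ equal to the pre-swap $d^{(i)}$, and the inequality $d^{(i)}\geq\tau^{(i)}$ is itself preserved inductively, since each swap resets $d^{(i)} := \delta+\tau^{(i)}\geq\tau^{(i)}$). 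Consequently termination in line~\ref{line:SPIstop} occurs at $(i,\delta)=(1,0)$, giving $\Phi_\text{final}=0$. The only step that can \emph{raise} $\Phi$ is line~\ref{line:SPIresetdelta}, where $\delta$ jumps upward by $d^{(i)}_\text{old}-d_\text{old}>0$ (positivity being the swap condition $d<d^{(i)}$), contributing $L(d^{(i)}_\text{old}-d_\text{old})$ to $\Phi$. Combining,
\[
N_\text{it} \;=\; \hat D + L\sum_{\text{resets}}\big(d^{(i)}_{\text{old}}-d_{\text{old}}\big).
\]

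The main obstacle is to show that the summed reset jumps telescope to $\deg\Lambda(x)$. My strategy is a matching invariant: every execution of line~\ref{line:SPIresetdelta} is immediately followed by an execution of line~\ref{line:SPIupdateLambda} that raises $\deg\Lambda(x)$ by exactly the same amount $d^{(i)}_\text{old}-d_\text{old}$ by which $\delta$ just jumped. For the first swap at each index $i$ this is explicit from (\ref{eqn:specialcaselambda}), where the update is a scalar times $x^{\deg m^{(i)}-\rd^{(i)}(\Lambda')}\Lambda'(x)$. For subsequent swaps, the hard part is to carry along the auxiliary invariant
\[
\deg\Lambda^{(i)}(x) \;<\; \deg\Lambda(x) + d^{(i)} - \rd^{(i)}(\Lambda),
\]
which guarantees that, after the swap, the leading term of $\kappa^{(i)}\Lambda(x) - \kappa\,x^{d-d^{(i)}}\Lambda^{(i)}(x)$ is contributed by the shifted-$\Lambda^{(i)}$ summand, so $\deg\Lambda(x)$ grows by precisely $d^{(i)}_\text{old}-d_\text{old}$.

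Verifying that this auxiliary invariant is preserved by both swap and non-swap executions of line~\ref{line:SPIupdateLambda} is the delicate bookkeeping---essentially the $L>1$ analogue of the length-invariant in the classical Berlekamp--Massey correctness proof, and presumably a by-product of the correctness argument deferred to Appendix~\ref{section:proofAlgo}. Once the matching invariant is in place, summing the per-reset degree increments telescopes: since $\Lambda(x)$ starts as the constant $1$ of degree $0$, the sum equals $\deg\Lambda(x)$, yielding (\ref{Nit:ExactBound}). The inequality (\ref{Nit:UpperBound}) then follows immediately from $\deg\Lambda(x)\leq D$ by Proposition~\ref{proposition:SolutionDegree}.
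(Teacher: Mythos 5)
Your proof is correct and follows essentially the same route as the paper's: your potential $\Phi=L\delta+(i-1)$, with initial value $\maxD$, final value $0$, unit decrease per execution of line~\ref{line:SPIkappa}, and upward jumps of $L\Delta_k$ at the swaps, is exactly the paper's count $N_\text{it}=\maxD+L\sum_i\big(\deg m^{(i)}(x)-\tilde d^{(i)}\big)$, and your telescoping of the per-swap degree increments is precisely how the paper establishes its invariant $\deg\Lambda(x)=\sum_i\big(\deg m^{(i)}(x)-d^{(i)}\big)$ (Assertion \assertref{SPIassert:Lamdeginvariant}). The degree bookkeeping you defer to the correctness argument is indeed proved there (see (\ref{eqn:degLameisLamkplusDeltak}) and the surrounding discussion of Assertions \assertref{SPIassert:altLamdeginvariant} and \assertref{SPIassert:degLamLLami.1}), so this deferral matches the paper's own reliance on those assertions.
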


The step from (\ref{Nit:ExactBound}) to (\ref{Nit:UpperBound}) is obvious from 
(\ref{eqn:MaxDegree}). The proof of
(\ref{Nit:ExactBound}) will be given in 
Appendix~\ref{sec:ProofOftheorem:NumIterations}.

In the special case addressed by Algorithm~\ref{alg:BasicSPIASpec},
with $m^{(i)}(x) = x^{\nu_i}$ for $i=1,\ldots,L$ and 
$\nu_\text{max} \eqdef \max_{i \in \{1,\ldots,L\}} \nu_i$, we have
\begin{theorem}
The complexity of Algorithm~\ref{alg:BasicSPIASpec} is bounded by
\begin{equation} \label{eqn:MonomialRevBMAComplexity}
O(N_\text{it} \deg \Lambda(x)) \leq
  O \min\!\left( 
  \maxD D + L D^2,\mbox{~}
  L \nu_\text{max}^2
\right)
\end{equation}
additions and multiplications in $F$.
\end{theorem}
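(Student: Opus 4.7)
The plan is to bound the per-iteration cost of Algorithm~\ref{alg:BasicSPIASpec} and then multiply by the iteration count from Theorem~\ref{theorem:NumIterations}. In one traversal of the main loop, the only arithmetic in $F$ takes place in line~\ref{line:SPIkappa} and line~\ref{line:SPIupdateLambda}. With the monomial formula of Algorithm~\ref{alg:BasicSPIASpec}, computing $\kappa$ costs at most $s+1\leq\deg\Lambda(x)+1$ multiplications and additions, where $s=\deg\Lambda(x)$ is the current working degree. The update $\Lambda(x):=\kappa^{(i)}\Lambda(x)-\kappa x^{d-d^{(i)}}\Lambda^{(i)}(x)$ in line~\ref{line:SPIupdateLambda} costs $O(\max(\deg\Lambda(x),\,d-d^{(i)}+\deg\Lambda^{(i)}(x)))$ additions and multiplications.

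To consolidate this into $O(\deg\Lambda(x))$ per iteration (with $\Lambda(x)$ denoting the \emph{final} solution), I would verify the invariant that throughout the execution both $\deg\Lambda(x)$ and every $\deg\Lambda^{(i)}(x)$ stay bounded by $\deg\Lambda_{\text{final}}(x)$. Lines~\ref{line:SPIswapbegin}--\ref{line:SPIswapend} only shuffle polynomials between storage slots and do not change any degree; for line~\ref{line:SPIupdateLambda}, the shift $x^{d-d^{(i)}}\Lambda^{(i)}(x)$ is, by construction (and the invariants maintained in the proof of Theorem~\ref{theorem:CorrectnessOfBasicAlg}), exactly aligned to cancel the leading term of the remainder without pushing the working polynomial past the degree of the eventual solution. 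This degree invariant is the only nontrivial step, and I expect it to be the main obstacle; it should follow by a straightforward induction on iterations combined with the uniqueness (Proposition~\ref{propo:Uniqueness}) and monomial degree bound (Proposition~\ref{proposition:MonomialDegreeBound}) of the SPI solution.

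Once this invariant is in place, the total complexity is $O(N_\text{it}\deg\Lambda(x))$, which is the leading expression in (\ref{eqn:MonomialRevBMAComplexity}). For the first term inside the $\min$, combine $N_\text{it}\leq\maxD+LD$ from Theorem~\ref{theorem:NumIterations} with $\deg\Lambda(x)\leq D$ from Proposition~\ref{proposition:SolutionDegree} to obtain
\begin{equation}
O(N_\text{it}\deg\Lambda(x))\leq O\!\bigl((\maxD+LD)\,D\bigr)=O(\maxD D+LD^2).
\end{equation}
For the second term, apply Proposition~\ref{proposition:MonomialDegreeBound} to get $\deg\Lambda(x)\leq\nu_\text{max}$, and note that $\maxD\leq L\nu_\text{max}$ directly from the definition (\ref{def:Bbound}). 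Hence $N_\text{it}\leq\maxD+L\deg\Lambda(x)\leq 2L\nu_\text{max}$, so $O(N_\text{it}\deg\Lambda(x))\leq O(L\nu_\text{max}^2)$. Combining these two bounds yields (\ref{eqn:MonomialRevBMAComplexity}).
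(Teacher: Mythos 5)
Your proposal is correct and follows essentially the same route as the paper: the paper likewise dismisses the left-hand side $O(N_\text{it}\deg\Lambda(x))$ as immediate from Algorithm~\ref{alg:BasicSPIASpec}, obtains the first term in the $\min$ from (\ref{Nit:UpperBound}) together with Proposition~\ref{proposition:SolutionDegree}, and the second from (\ref{Nit:ExactBound}), $\maxD\leq L\nu_\text{max}$, and Proposition~\ref{proposition:MonomialDegreeBound}. The degree invariant you flag as the main obstacle is indeed the content of the assertions in Appendix~\ref{section:proofAlgo} (the working degree is nondecreasing and dominates every $\deg\Lambda^{(i)}(x)$), so your per-iteration bound is justified.
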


\begin{proof}
The left side of (\ref{eqn:MonomialRevBMAComplexity})
is obvious from Algorithm~\ref{alg:BasicSPIASpec}. 
The first term on the right side 
follows from (\ref{Nit:UpperBound}) and Proposition~\ref{proposition:SolutionDegree}.
The second term on the right side 
follows from (\ref{Nit:ExactBound}), $\maxD \leq L\nu_\text{max}$, and
Proposition~\ref{proposition:MonomialDegreeBound}. 
\end{proof}

The complexity for decoding (as in Algorithm~\ref{alg:SPIErrorLocatingAlgo})
will be addressed in Section~\ref{sec:SPIDecoding}.

\section{Using the SPI Algorithm for Decoding}
\label{sec:SPIDecoding}

\begin{table}[tp]
\framebox[\linewidth]{%
\normalsize%
\begin{minipage}{0.95\linewidth}
\begin{algorithm}[Basic SPI Error-Locating Algorithm]\label{alg:SPIErrorLocatingAlg}\\
(an implementation of Algorithm~\ref{alg:SPIErrorLocatingAlgo})\\[1ex]
{Input:} $S^{(i)}(x)$, $\tilde m^{(i)}(x)$, $n$, and $k^{(i)}$\\
{Output:} nonzero \mbox{$\Lambda(x)\in F[x]$}, same as Algorithm~\ref{alg:SPIErrorLocatingAlgo}.%
\vspace{1ex}

Use Algorithm~\ref{alg:BasicSPIAlg}
with $b^{(i)}(x)=S^{(i)}(x)$, $m^{(i)}(x)=\tilde m^{(i)}(x)$, $\tau^{(i)} = n-k_\text{min}$, 
and with the following 
three modifications: 
first, \mbox{$\tau^{(i)} = \tau$} does not depend on $i$ 
(but it is decreased during the algorithm, see below);
second, initialize also \mbox{$d:= n-k_\text{min}$};
third, replace line~\ref{line:SPIstop} of Algorithm~\ref{alg:BasicSPIAlg}
with the following lines:
\begin{pseudocode}
\setcounter{proglinecounter}{70}
\npcl \pkw{if} $\delta\leq 0$ \pkw{begin} \\
\npcl[line:SPIErrLocFail] \> \pkw{if} $\deg \Lambda(x) > n-k_\text{max}$ \pkw{return} ``decod.\ failure''\\
\npcl[GEL:stopcond]\> \pkw{if} $d\leq\deg \Lambda(x)$  \pkw{return} $\Lambda(x)$\\
\npcl\> \pkw{else begin}\\
\npcl\>\> $\tau:=\tau-1$\\
\npcl\>\> $\delta:=\delta+1$\\
\npcl\> \pkw{end}\\
\npcl\pkw{end}
\end{pseudocode}
\end{algorithm}
\end{minipage}
}
\vspace{\dblfloatsep}

\framebox[\linewidth]{%
\normalsize%
\begin{minipage}{0.95\linewidth}
\begin{algorithm}[Monomial-SPI Error-Locating Algorithm]\label{alg:MonSPIErrorLocatingAlg}\\
(an implementation of Algorithm~\ref{alg:MonSPIErrorLocatingAlgo})\\[1ex]
{Input:} $\breve S^{(i)}(x)$, $n$, and $k^{(i)}$\\
{Output:} nonzero \mbox{$\Lambda(x)\in F[x]$}, same as Algorithm~\ref{alg:MonSPIErrorLocatingAlgo}.%
\vspace{1ex}

The algorithm is Algorithm~\ref{alg:BasicSPIASpec} 
with $b^{(i)}(x) = \breve S^{(i)}(x)$, $m^{(i)}(x)=x^{n-k^{(i)}}$, $\tau^{(i)} = n-k_\text{min}$,
and with modifications as in Algorithm~\ref{alg:SPIErrorLocatingAlg}.
\end{algorithm}
\end{minipage}
}
\end{table}

\begin{table}[tp]
\framebox[\linewidth]{%
\normalsize%
\begin{minipage}{0.95\linewidth}
\begin{algorithm}[Fixed-Iterations Algorithm]\label{Altalg:SPIErrorLocatingAlg}\\
{Input:} $S^{(i)}(x)$, $\tilde m^{(i)}(x)$, $n$, and $k^{(i)}$\\
{Output:} nonzero \mbox{$\Lambda(x)\in F[x]$}, same as%
\footnote{Except that the condition $\deg \Lambda(x) \leq n - k_\text{max}$
is not checked inside the algorithm, but should be added as an external check.}
Algorithm~\ref{alg:SPIErrorLocatingAlgo}.%
\vspace{1ex}

The algorithm is Algorithm~\ref{alg:BasicSPIAlg} 
with \mbox{$b^{(i)}(x)=S^{(i)}(x)$}, \mbox{$m^{(i)}(x)=\tilde m^{(i)}(x)$}, 
\mbox{$\tau^{(i)} = 0$},
and with the following modifications:
first, there is an extra integer variable $N_\text{it}$ 
that is initialized to zero;
second, line~\ref{line:SPIstop} is replaced with
\begin{pseudocode}
\setcounter{proglinecounter}{80}
\npcl[nitnk] \pkw{if} $N_\text{it} = L(n-k_\text{min})$ \pkw{return} $\Lambda(x)$
\end{pseudocode}
and third, the extra line
\begin{pseudocode}
\setcounter{proglinecounter}{90}
\> \>  $N_\text{it}:=N_\text{it}+1$
\end{pseudocode}
is inserted between lines~\ref{line:SPIkappa} and \ref{line:SPIuntil}.

\end{algorithm}
\end{minipage}
}
\vspace{\dblfloatsep}

\framebox[\linewidth]{%
\normalsize%
\begin{minipage}{0.95\linewidth}
\begin{algorithm}[Monomial-SPI Fixed-Iterations Algorithm]\label{Altalg:MonSPIErrorLocatingAlg}\\
{Input:} $\breve S^{(i)}(x)$, $n$, and $k^{(i)}$\\
{Output:} nonzero \mbox{$\Lambda(x)\in F[x]$}, same as%
\footnote{See the footnote in Algorithm~\ref{Altalg:SPIErrorLocatingAlg}.}
Algorithm~\ref{alg:MonSPIErrorLocatingAlgo}.%
\vspace{1ex}

The algorithm is Algorithm~\ref{alg:BasicSPIASpec} with 
$b^{(i)}(x) = \breve S^{(i)}(x)$, $m^{(i)}(x)=x^{n-k^{(i)}}$, $\tau^{(i)} = 0$,
and with modifications as in Algorithm~\ref{Altalg:SPIErrorLocatingAlg}.
\end{algorithm}
\end{minipage}
}
\end{table}

As described in Section~\ref{sec:ComputingEL},
the SPI algorithm can be used to compute (an estimate of) 
the error locator polynomial of an interleaved 
Reed--Solomon code as in Section~\ref{sec:SPIProblem4Decoding}.
The preferred version of such a decoding algorithm is Algorithm~\ref{Altalg:MonSPIErrorLocatingAlg},
which is the final result of this section.
We get there step by step, beginning with Algorithm~\ref{alg:SPIErrorLocatingAlg}
(see the framed boxes).

Algorithm~\ref{alg:SPIErrorLocatingAlg} implements 
Algorithm~\ref{alg:SPIErrorLocatingAlgo} of Section~\ref{sec:ComputingEL}
using Algorithm~\ref{alg:BasicSPIAlg}, 
and 
Algorithm~\ref{alg:MonSPIErrorLocatingAlg} 
implements Algorithm~\ref{alg:MonSPIErrorLocatingAlgo}
using Algorithm~\ref{alg:BasicSPIASpec}.

Line~\ref{GEL:stopcond} makes sure that Algorithm~\ref{alg:SPIErrorLocatingAlg}
stops only when (\ref{eqn:ReducedKeyEqn}) is satisfied for all $i\in \{1,\ldots,L\}$: 
when this line is executed, we always have $d = \tau$ and
\begin{equation}
\deg\big( b^{(i)}(x) \Lambda(x) \bmod \tilde m^{(i)}(x) \big) < d.
\end{equation}
Because line~\ref{GEL:stopcond} checks the condition $d\leq \deg\Lambda(x)$ 
rather than (\ref{eqn:ReducedKeyEqn}), 
Algorithm~\ref{alg:SPIErrorLocatingAlg} may terminate later 
(with a smaller value of $\tau$) 
than Algorithm~\ref{alg:SPIErrorLocatingAlgo}.
In fact, from the moment where (\ref{eqn:ReducedKeyEqn}) holds 
(and assuming that the partial-inverse condition is satisfied),
Algorithm~\ref{alg:SPIErrorLocatingAlg} continues to decrease both $d$ and $\tau$
(without changing $\Lambda(x)=\Lambda_E(x)$)
until $d=|U_E|$.
\begin{lemma} \label{lemma:SPIErrorLocNit}
Assume that $E$ satisfies the partial-inverse condition.
Then Algorithm~\ref{alg:SPIErrorLocatingAlg} stops with $\tau = |U_E|$.
Moreover, the number $N_\text{it}$ of executions of line~\ref{line:SPIkappa}
of Algorithm~\ref{alg:BasicSPIAlg} is $L(n-k_\text{min})$.
\end{lemma}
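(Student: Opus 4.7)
The plan combines three ingredients: the correctness of the underlying basic SPI algorithm (Theorem~\ref{theorem:CorrectnessOfBasicAlg}), the iteration-count formula (Theorem~\ref{theorem:NumIterations}), and the uniqueness part of Proposition~\ref{proposition:KeyEqs} under the partial-inverse condition.

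First I would analyze when the stop condition in line~\ref{GEL:stopcond} can fire. Whenever control reaches this line, the preceding iterations of the \pkw{repeat} loop have verified that the coefficient of $x^{d'}$ in $S^{(i)}(x)\Lambda(x) \bmod \tilde m^{(i)}(x)$ vanishes for every pair $(i,d')$ with $d' \geq d$ at the current $\Lambda$; equivalently, $\rd^{(i)}(\Lambda) \leq d-1$ for every~$i$. Combined with $d \leq \deg \Lambda(x)$, this yields $\rd^{(i)}(\Lambda) < \deg \Lambda(x)$ for every~$i$, so $\Lambda(x)$ satisfies the key equation~(\ref{eqn:ReducedKeyEqn}). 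Under the partial-inverse hypothesis, Proposition~\ref{proposition:KeyEqs} then identifies $\Lambda(x)$ (up to a scalar) with $\Lambda_E(x)$, and in particular $\deg \Lambda(x) = |U_E|$.

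Next I would pin down the exact termination value of $\tau$. By Theorem~\ref{theorem:CorrectnessOfBasicAlg}, at any visit of the $\delta \leq 0$ branch the current $\Lambda$ is the SPI solution of the problem at the current $\tau$. Because $\Lambda_E(x)$ is feasible for every $\tau \geq |U_E|$, this SPI solution has degree at most $|U_E|$. For $\tau > |U_E|$ the stop condition cannot yet fire: either $\Lambda \neq \Lambda_E$ and the first step rules the condition out, or $\Lambda = \Lambda_E$ and one checks from the coefficient scan that $d \geq \tau > |U_E| = \deg \Lambda$. Once $\tau = |U_E|$, the SPI solution must be $\Lambda_E$, the coefficient scan drives $d$ down to $|U_E|$, and the stop condition triggers. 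Hence the algorithm halts with $\tau = |U_E|$.

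For the iteration count, I would view Algorithm~\ref{alg:SPIErrorLocatingAlg} as a continuation of Algorithm~\ref{alg:BasicSPIAlg}: the paired updates $\tau := \tau - 1$, $\delta := \delta + 1$ leave the rest of the state (the polynomial $\Lambda$, the auxiliary $\Lambda^{(i)}$, the indices, and $d$) intact, so that the net effect across all $\tau$-decrements is a single run of Algorithm~\ref{alg:BasicSPIAlg} with final parameter $\tau^{(i)} = |U_E|$ for all~$i$. Theorem~\ref{theorem:NumIterations} then gives
\begin{equation}
N_\text{it} = \maxD + L \deg \Lambda = L(n - k_\text{min} - |U_E|) + L|U_E| = L(n - k_\text{min}),
\end{equation}
using $\maxD = L \max_i(\deg \tilde m^{(i)}(x) - |U_E|) = L(n - k_\text{min} - |U_E|)$ and $\deg \Lambda_E = |U_E|$.

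The main obstacle is formalizing the continuation viewpoint in the second and third steps, in particular handling the case where $\delta$ becomes strictly negative after a swap in line~\ref{line:SPIresetdelta} and verifying that no $\kappa$-evaluation is duplicated or skipped across a $\tau$-decrement. This can be done by induction on the number of $\tau$-decrements, checking that the loop invariants behind Theorems \ref{theorem:CorrectnessOfBasicAlg} and \ref{theorem:NumIterations} are preserved at each such step.
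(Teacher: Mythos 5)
Your proposal is correct and follows essentially the same route as the paper: the termination check implies the key equation (\ref{eqn:ReducedKeyEqn}), the partial-inverse condition then forces $\Lambda(x)=\Lambda_E(x)$ so that the algorithm keeps decrementing $\tau$ (and $d$) until $\tau=d=|U_E|$, and the count $N_\text{it}=\maxD+L\deg\Lambda_E(x)=L(n-k_\text{min}-|U_E|)+L|U_E|$ is exactly the paper's application of Theorem~\ref{theorem:NumIterations}. The paper compresses your first two steps into the informal discussion preceding the lemma and cites it as "the first claim follows from the discussion above," so your write-up is simply a more explicit version of the same argument.
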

\begin{proof}
The first claim follows from the discussion above.
From Theorem~\ref{theorem:NumIterations},
we have 
\begin{equation}
N_\text{it} = \hat D + L \deg \Lambda_E(x).
\end{equation}
But $\hat D=L(n-k_\text{min}-\tau)$ with $\tau=|U_E|$ when the algorithm stops.
Thus $N_\text{it} = L(n-k_\text{min})$.
\end{proof}
An immediate consequence is
\begin{proposition}[Monom.-SPI Error Locating Complexity] \label{proposition:SPIErrorLocComplexity}
Assume that $E$ satisfies the partial-inverse condition.
Then the complexity of Algorithm~\ref{alg:MonSPIErrorLocatingAlg} 
is $O\big( L(n-k_\text{min})(n-k_\text{max}) \big)$ additions and multiplications in $F$.
\end{proposition}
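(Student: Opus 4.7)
The plan is to multiply the iteration count provided by Lemma~\ref{lemma:SPIErrorLocNit} by a uniform per-iteration cost of $O(n-k_\text{max})$. Since $E$ satisfies the partial-inverse condition, Lemma~\ref{lemma:SPIErrorLocNit} immediately delivers $N_\text{it} = L(n-k_\text{min})$ executions of line~\ref{line:SPIkappa} of the underlying Algorithm~\ref{alg:BasicSPIASpec}. It therefore suffices to show that each iteration of the main loop (line~\ref{line:SPIkappa}, and, when $\kappa\neq 0$, the subsequent line~\ref{line:SPIupdateLambda}) costs $O(n-k_\text{max})$ operations in $F$; the claim will then follow by multiplication.

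For the per-iteration cost, I would unfold Algorithm~\ref{alg:BasicSPIASpec}. In the monomial case, line~\ref{line:SPIkappa} is an inner product of length at most $\deg\Lambda(x)+1$, costing $O(\deg\Lambda(x)+1)$ operations; line~\ref{line:SPIupdateLambda} forms $\kappa^{(i)}\Lambda(x)-\kappa\, x^{d-d^{(i)}}\Lambda^{(i)}(x)$ in $O(\max(\deg\Lambda(x),\deg\Lambda^{(i)}(x)+d-d^{(i)}))$ operations. Hence the per-iteration cost is controlled by uniform upper bounds on $\deg\Lambda(x)$, on the degrees of the auxiliary polynomials $\Lambda^{(i)}(x)$, and on the shifts $d-d^{(i)}$.

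The main obstacle is establishing that all of these quantities stay $O(n-k_\text{max})$ throughout the execution. My plan is as follows. Since $E$ satisfies the partial-inverse condition, the failure exit at line~\ref{line:SPIErrLocFail} never triggers, so at every moment at which $\delta\leq 0$ the current $\Lambda(x)$ is the SPI solution for the current value of $\tau$ and thus satisfies $\deg\Lambda(x)\leq n-k_\text{max}$. Between two such checkpoints, $\Lambda(x)$ is modified only via line~\ref{line:SPIupdateLambda}; by Lemma~\ref{lemma:SPIAlgcore} each such update is a controlled combination whose resulting degree does not exceed the degree of the SPI solution for the next value of $\tau$, and this degree is itself monotone in $\tau$ and bounded above by $\deg\Lambda_E(x)=|U_E|\leq n-k_\text{max}$. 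The auxiliary polynomials $\Lambda^{(i)}(x)$ acquire their values only through the swap in lines~\ref{line:SPIswapbegin}--\ref{line:SPIswapend} and hence inherit the same $O(n-k_\text{max})$ bound, while the bookkeeping invariants $d=\delta+\tau^{(i)}$ and $d^{(i)}\geq\tau^{(i)}$ (together with $\tau\leq n-k_\text{min}$) keep $d-d^{(i)}$ under control.

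Combining the iteration count $L(n-k_\text{min})$ with the per-iteration cost $O(n-k_\text{max})$ then yields the claimed complexity of $O\big(L(n-k_\text{min})(n-k_\text{max})\big)$ additions and multiplications in $F$. The routine pieces (counting iterations, costing the convolution and the linear combination) are immediate from earlier results; essentially all of the real work lies in the degree-tracking argument described in the previous paragraph.
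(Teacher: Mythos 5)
Your proposal is correct and follows essentially the same route as the paper: the paper obtains this proposition as an immediate consequence of Lemma~\ref{lemma:SPIErrorLocNit} (which gives $N_\text{it}=L(n-k_\text{min})$) combined with the per-iteration cost $O(\deg\Lambda(x))$ already established for Algorithm~\ref{alg:BasicSPIASpec}, together with $\deg\Lambda_E(x)=|U_E|\leq n-k_\text{max}$. The only cosmetic difference is that your degree-tracking between checkpoints is more elaborate than necessary — the invariant (\assertref{SPIassert:Lamdeginvariant}) shows $\deg\Lambda(x)$ is non-decreasing throughout (the $d^{(i)}$ only decrease), so every intermediate degree is at most the final one, $|U_E|\leq n-k_\text{max}$, without appealing to Lemma~\ref{lemma:SPIAlgcore} (which controls remainder degrees rather than $\deg\Lambda(x)$).
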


In Lemma~\ref{lemma:SPIErrorLocNit} and Proposition~\ref{proposition:SPIErrorLocComplexity},
the conditioning on the partial-inverse condition is unsatisfactory.
But Lemma~\ref{lemma:SPIErrorLocNit} suggests 
a solution to this problem: if $N_\text{it}$ exceeds $L(n-k_\text{min})$,
the partial-inverse condition is not satisfied and it is pointless to continue.
In other words, we can use the condition $N_\text{it} = L(n-k_\text{min})$,
rather than line~\ref{GEL:stopcond}, to stop the algorithm. 
The resulting error-locating algorithm
(as a modification of Algorithm~\ref{alg:BasicSPIAlg})
is given as Algorithm~\ref{Altalg:SPIErrorLocatingAlg} (see the box).
The same modification can also be applied to 
Algorithm~\ref{alg:MonSPIErrorLocatingAlg},
resulting in Algorithm~\ref{Altalg:MonSPIErrorLocatingAlg}.
We then have
\begin{proposition}
The complexity of Algorithm~\ref{Altalg:MonSPIErrorLocatingAlg}
(for arbitrary input) 
is $O\big( L(n-k_\text{min})(n-k_\text{max}) \big)$ additions and multiplications in $F$.
\end{proposition}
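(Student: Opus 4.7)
The proof will combine an iteration count with a per-iteration cost estimate. First, by the modified stopping condition in Algorithm~\ref{Altalg:MonSPIErrorLocatingAlg} (inherited from Algorithm~\ref{Altalg:SPIErrorLocatingAlg}), line~\ref{line:SPIkappa} of Algorithm~\ref{alg:BasicSPIASpec} is executed exactly $L(n-k_\text{min})$ times, independently of the input. In the monomial setting, that line costs $O(1)$ when $d\geq\nu_i=n-k^{(i)}$ (since $\kappa=0$ is returned immediately) and $O(\deg\Lambda+1)$ otherwise; the subsequent line~\ref{line:SPIupdateLambda}, executed only when $\kappa\neq 0$ (and hence $d<\nu_i$), costs $O(\max(\deg\Lambda,\deg\Lambda^{(i)}))$.

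Next, with $\tau^{(i)}=0$ fixed and $d=\delta$ decreasing monotonically from $n-k_\text{min}-1$ to $0$, I will partition the $L(n-k_\text{min})$ iterations by the pair $(\delta,i)$ into ``dummy'' ($\delta\geq\nu_i$) and ``real'' ($\delta<\nu_i$) cases. For each $i$, the number of real iterations is exactly $n-k^{(i)}$ (one per $\delta\in\{0,\dots,n-k^{(i)}-1\}$), summing to $\sum_i(n-k^{(i)})=L(n-k_\text{avg})\leq L(n-k_\text{min})$; the remaining $L(k_\text{avg}-k_\text{min})$ dummy iterations contribute only $O(L(n-k_\text{min}))$ work in aggregate.

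The heart of the proof will be a degree invariant, namely that throughout the algorithm $\deg\Lambda(x)\leq n-k_\text{max}$ and $\deg\Lambda^{(i)}(x)\leq n-k_\text{max}$ hold. Granting this, each real iteration performs $O(n-k_\text{max})$ arithmetic, yielding total real work $O\bigl(L(n-k_\text{avg})(n-k_\text{max})\bigr)\leq O\bigl(L(n-k_\text{min})(n-k_\text{max})\bigr)$, and the aggregate dummy work is absorbed, giving the claim.

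The main obstacle is establishing the degree invariant for \emph{arbitrary} (possibly adversarial) input. The idea is to track the auxiliary variables $d^{(i)}$ and $\Lambda^{(i)}$ over the $L$ index tracks: a no-swap update $\Lambda\leftarrow\kappa^{(i)}\Lambda-\kappa x^{d-d^{(i)}}\Lambda^{(i)}$ preserves $\deg\Lambda$ (the leading term in $b^{(i)}(x)\Lambda(x)\bmod m^{(i)}(x)$ cancels by design of $\kappa$, $\kappa^{(i)}$), while a swap-then-update sets the new $\deg\Lambda$ to $d^{(i)}-d+\deg\Lambda^{\text{(old)}}$ and strictly reduces $d^{(i)}$ below its prior value. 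Unrolling these operations, using $\deg m^{(i)}(x)=n-k^{(i)}$ as the initial value of $d^{(i)}$, and exploiting the monotonic descent of $\delta$ should yield the invariant, in a manner parallel to the bookkeeping already developed for Theorem~\ref{theorem:NumIterations} in Appendix~\ref{sec:ProofOftheorem:NumIterations}.
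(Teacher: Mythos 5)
Your skeleton (iteration count times per-iteration cost) is the natural one, and the count is right: the modified stopping rule forces exactly $L(n-k_\text{min})$ executions of line~\ref{line:SPIkappa}. But two of your intermediate claims are false. First, $\delta$ does \emph{not} decrease monotonically: every swap resets $\delta:=d-\tau^{(i)}$ to the \emph{old} $d^{(i)}$ (line~\ref{line:SPIresetdelta}), which strictly exceeds the pre-swap $\delta$, so pairs $(\delta,i)$ are revisited repeatedly. Consequently your count of ``exactly $n-k^{(i)}$ real iterations per index'' is wrong; what is true is simply that the \pkw{repeat} loop cycles through the indices, so each index receives exactly $n-k_\text{min}$ executions of line~\ref{line:SPIkappa}. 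These errors alone would not sink the bound, since $L(n-k_\text{min})$ iterations at $O(n-k_\text{max})$ each would still give the claim.

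The genuine gap is the ``degree invariant'' $\deg\Lambda(x)\le n-k_\text{max}$, which carries the entire estimate: you never prove it (you say unrolling the updates ``should yield'' it), and for arbitrary input with unequal $k^{(i)}$ it is \emph{false}. The machinery of Appendix~\ref{section:proofAlgo} gives $\deg\Lambda(x)=\sum_i(\deg m^{(i)}(x)-d^{(i)})$ (assertion \assertref{SPIassert:Lamdeginvariant}), and combining this with the iteration cap (or invoking Proposition~\ref{proposition:MonomialDegreeBound} on the intermediate SPI solutions) yields only $\deg\Lambda(x)\le\max_i\nu_i=n-k_\text{min}$, not $\min_i\nu_i=n-k_\text{max}$. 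Concretely, take $L=2$, $\nu_1=n-k^{(1)}=10$, $\nu_2=n-k^{(2)}=2$, $\breve S^{(1)}(x)=x^{9}+1$, $\breve S^{(2)}(x)=0$: within the allotted $20$ iterations the algorithm returns $\Lambda(x)=x^9-1$ of degree $9\gg n-k_\text{max}=2$; for generic $\breve S^{(1)}$ a constant fraction of the iterations is spent with $\deg\Lambda=\Theta(n-k_\text{min})$, so the route through the invariant can only deliver $O\big(L(n-k_\text{min})^2\big)$. That weaker bound coincides with the stated one precisely when $k^{(1)}=\cdots=k^{(L)}$, the case of primary interest. For comparison: the paper states this proposition without proof, as a consequence of Lemma~\ref{lemma:SPIErrorLocNit} and Proposition~\ref{proposition:SPIErrorLocComplexity}, whose per-iteration cost estimate uses the partial-inverse condition to keep $\deg\Lambda=\deg\Lambda_E=|U_E|\le n-k_\text{max}$; removing that condition is exactly the step your proof would have to supply, and the invariant as you state it cannot be established.
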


\section{Conclusion}
\label{section:conclusion} 

We have introduced the SPI problem for polynomials 
and used it to generalize and to harmonize a number of ideas 
from the literature on decoding interleaved Reed--Solomon codes 
beyond half the minimum distance.
The SPI problem has a unique solution 
(up to a scale factor), which can be computed by 
a (new) multi-sequence reverse Berlekamp--Massey algorithm. 

The SPI problem with general moduli can always (and efficiently) 
be reduced to an SPI problem with monomial moduli. 
For monomial moduli, the reverse Berlekamp--Massey algorithm
looks very much like (and has the same complexity as)
the multi-sequence Berlekamp--Massey algorithm of 
\cite{FengTzeng1991,SchmidtSidorenko2006}.

The SPI problem can be used to analyze 
syndrome-based decoding of interleaved Reed--Solomon codes.
Specifically, we pointed out a natural partial-inverse condition 
for the error pattern, 
which is always satisfied up to half the minimum distance 
and very likely to be satisfied almost up to the full minimum distance. 
If that condition is satisfied, the (true) error locator polynomial 
is the unique solution of a standard key equation 
and can be computed in many different ways,
including 
the algorithm of \cite{FengTzeng1991,SchmidtSidorenko2006} 
and the reverse Berlekamp--Massey algorithm of this paper.
Two of the best performance bounds (for two different decoding algorithms) 
from the literature
were rederived and generalized so that they apply 
to the partial-inverse condition,
and thus simultaneously to many different decoding algorithms.

In Appendix~\ref{section:QuotientRemainderSavingAlgs},
we also give two easy variations of the reverse Berlekamp--Massey algorithm, 
one of which is a Euclidean algorithm. 
However, for $L>1$, these variations have higher complexity
than the reverse Berlekamp--Massey algorithm with monomial moduli.

\appendices

\section{Proof of the SPI Algorithm}
\label{section:proofAlgo}

In this appendix, we prove Theorems \ref{theorem:CorrectnessOfBasicAlg} 
and~\ref{theorem:NumIterations}.

\subsection{Assertions (Properties of the Algorithm)}

To prove the correctness of Algorithm~\ref{alg:BasicSPIAlg}, we augment it
with some extra variables and some assertions 
as shown in Algorithm~\ref{alg:AnnotatedSPIAlg}.
We will prove these assertions one by one, 
except that the proof of 
Assertion~(\assertref{SPIassert:Lamdeginvariant}) 
is deferred to the end of this section.

\begin{table}[tp]
\framebox[\linewidth]{%
\normalsize%
\begin{minipage}{0.95\linewidth}
\begin{algorithm}[Annotated SPI Algorithm]\label{alg:AnnotatedSPIAlg}
\begin{pseudocode}
\npcl \pkw{for} $i=1,\ldots, L$ \pkw{begin}\\
\npcl \>$\Lambda^{(i)}(x):=0$\\
\npcl \>$d^{(i)}:=\deg m^{(i)}(x)$\\
\npcl \>$\kappa^{(i)}:=\lcf m^{(i)}(x)$\\
\npcl \pkw{end} \\
\npcl $\Lambda(x):=1$\\
\npcl $\delta:=\max_{i\in\{1,\ldots,L\}}\big(\deg m^{(i)}(x)-\tau^{(i)}\big)$\\
\npcl $i:=1$\\
\> \> \framebox[0.89\linewidth]{\begin{minipage}{0.86\linewidth}%
           \pkw{Extra:}\\
            $k:=0$   \extralabel{SPIextra:k}
          \end{minipage}}\\[0.5ex]
\npcl   \pkw{loop begin} \\

\> \> \> \framebox[0.83\linewidth]{\begin{minipage}{0.80\linewidth}%
           \pkw{Assertions:}\\
            $\deg \Lambda(x) =\sum_{i=1}^L \big(\deg m^{(i)}(x)-d^{(i)}\big)$
            \assertlabel{SPIassert:Lamdeginvariant}\\
            $\deg \Lambda(x) > \deg \Lambda^{(i)}(x),~~i=1,\ldots,L$ \assertlabel{SPIassert:degLamLLami}
          \end{minipage}}\\[0.5ex]

\npcl \> \pkw{repeat}\\
\npcl \> \>  \pkw{if} $i>1$  \pkw{begin} $i:=i-1$ \pkw{end}\\
\npcl \> \>  \pkw{else begin} \\
\npcl \>\>\> \pkw{if} $\delta\leq 0$  \pkw{return} $\Lambda(x)$\\
\npcl \>\>\>$i:=L$\\
\npcl \>\>\>$\delta:=\delta-1$\\
\npcl \> \> \pkw{end} \\
\npcl \>\> $d:=\delta+\tau^{(i)}$\\
\npcl \> \> $\kappa:= \text{coefficient of $x^{d}$ in}$ \\
      \> \> \> \> \> \> $b^{(i)}(x) \Lambda(x) \bmod m^{(i)}(x)$ \\
\npcl \> \pkw{until} $\kappa \neq 0$\\
\> \> \> \framebox[0.83\linewidth]{\begin{minipage}{0.80\linewidth}%
           \pkw{Assertion:}\\
            $i=i_\text{max}(\Lambda)$,$~\delta=\delta_\text{max}(\Lambda)\geq 0$
            \assertlabel{SPIassert:imaxLambda}
          \end{minipage}}\\[0.5ex]
\npcl \> \pkw{if} $ d<d^{(i)}$ \pkw{begin} \\
\> \> \> \> \framebox[0.765\linewidth]{\begin{minipage}{0.73\linewidth}%
            \pkw{Assertion:}\\
            $d^{(i)}> d = \delta + \tau^{(i)} \geq \tau^{(i)}$
            \assertlabel{SPIassert:diLdLtaui}\\
           \pkw{Extras:}\\
            $k:=k+1$,$~i_k\eqdef i$,$~\Lambda_k(x)\eqdef \Lambda(x)$, \\
            $\Delta_k\eqdef d^{(i)}-d$,$~d_k\eqdef d^{(i)}$  \extralabel{SPIextra:dk}
          \end{minipage}}\\[0.5ex]
\npcl \>\> \pkw{swap} $(\Lambda(x),\Lambda^{(i)}(x))$\\
\npcl \>\> \pkw{swap} $(d, d^{(i)})$\\
\npcl \>\> \pkw{swap} $(\kappa, \kappa^{(i)})$\\
\npcl \>\> $\delta:=d-\tau^{(i)}$ \\
\> \> \> \> \framebox[0.765\linewidth]{\begin{minipage}{0.73\linewidth}%
           \pkw{Assertions:}\\
           $d > d^{(i)} \geq \tau^{(i)}$      \assertlabel{SPIassert:d1Ld2} \\
           $\deg \Lambda^{(i)}(x)>\deg \Lambda(x)$
           \assertlabel{SPIassert:degLamiLLam}\\
           $\deg \Lambda^{(i)}(x) >\deg \Lambda^{(j)}(x) \text{~for~} j\neq i$ \assertlabel{SPIassert:degLambdai}\\
           $i_\text{max}(\Lambda^{(i)})=i$,$~\delta_\text{max}(\Lambda^{(i)})\geq 0$    \assertlabel{SPIassert:imaxLambdai}
          \end{minipage}}\\[0.5ex]
\npcl \> \pkw{end} \\
\npcl  \>
$\Lambda(x):= \kappa^{(i)} \Lambda(x)- \kappa x^{d-d^{(i)}} \Lambda^{(i)}(x)$ \\
\> \> \> \framebox[0.83\linewidth]{\begin{minipage}{0.80\linewidth}%
           \pkw{Assertions:}\\
            $\rd^{(i)}(\Lambda) < d =  \delta + \tau^{(i)}$ \assertlabel{SPIassert:rdilessd}\\
            $\deg \Lambda(x)= \Delta_k+\deg \Lambda_k(x)$
            \assertlabel{SPIassert:altLamdeginvariant}\\
            \phantom{$\deg \Lambda(x)$} $>\deg \Lambda^{(i)}(x),~~i=1,\dots,L$ \assertlabel{SPIassert:degLamLLami.1}
          \end{minipage}}\\[0.5ex]
\npcl   \pkw{end}
\end{pseudocode}
\end{algorithm}
\end{minipage}
}
\end{table}

Assertion (\assertref{SPIassert:degLamLLami}) is obvious both from the
initialization and from (\assertref{SPIassert:degLamLLami.1}).
Assertion (\assertref{SPIassert:imaxLambda}) 
is the result of the \pkw{repeat} loop,
as discussed at the beginning of Section~\ref{sec:AlgSomeExplanations}.

Assertion (\assertref{SPIassert:diLdLtaui}) is obvious.
Assertions (\assertref{SPIassert:d1Ld2})--(\assertref{SPIassert:imaxLambdai})
follow from
(\assertref{SPIassert:degLamLLami})--(\assertref{SPIassert:diLdLtaui}),
followed by the swap in
lines~\ref{line:SPIswapbegin}--\ref{line:SPIswapend}. 

As for (\assertref{SPIassert:rdilessd}),
when $b^{(i)}(x)$ is visited for the very first time
(i.e., the first execution of line~\ref{line:SPIupdateLambda}
for some index $i$),
we have $d=\deg m^{(i)}(x)$ and 
$\rd^{(i)}(\Lambda) < d$ is obvious. 
For all later executions of line~\ref{line:SPIupdateLambda},
we have 
$d=\rd^{(i)}(\Lambda)$ and $d^{(i)}=\rd^{(i)}(\Lambda^{(i)})$
before line~\ref{line:SPIupdateLambda}, 
and $\rd^{(i)}(\Lambda) < d$ after line~\ref{line:SPIupdateLambda}
follows from Lemma~\ref{lemma:SPIAlgcore}.

To prove (\assertref{SPIassert:altLamdeginvariant}) and (\assertref{SPIassert:degLamLLami.1}),
we note that
Line~\ref{line:SPIupdateLambda} changes the degree of $\Lambda(x)$
only in iterations where
lines \ref{line:SPIswapbegin}--\ref{line:SPIresetdelta} are executed,
see (\ref{eqn:degLameisLamkplusDeltak}) below;
every later executions of Line~\ref{line:SPIupdateLambda} for the fixed $i$
does not change $\deg \Lambda(x)$ because of Lemma~\ref{lemma:SPIAlgcore} and 
that $\Lambda^{(i)}(x)$ and $d^{(i)}$ 
remain the same during the inner repeat loop.

If lines \ref{line:SPIswapbegin}--\ref{line:SPIresetdelta} are executed,
then line~\ref{line:SPIupdateLambda} changes the degree of 
$\Lambda(x)$ to
\begin{equation} \label{eqn:degLameisLamkplusDeltak}
\deg \Lambda^{(i)}(x) + d - d^{(i)} =\deg \Lambda_k(x)+\Delta_k,
\end{equation}
which is (\assertref{SPIassert:altLamdeginvariant}).
With (\assertref{SPIassert:degLambdai}), 
the left-hand side of (\ref{eqn:degLameisLamkplusDeltak}) yields also (\assertref{SPIassert:degLamLLami.1}).

It remains to prove (\assertref{SPIassert:Lamdeginvariant}).
First, we note that (\assertref{SPIassert:Lamdeginvariant})
clearly holds when the \pkw{loop} is entered for the first time.
But if (\assertref{SPIassert:Lamdeginvariant}) holds, 
then $ \Lambda^{(i)}(x)$ in (\assertref{SPIassert:degLamiLLam})
satisfies
\begin{IEEEeqnarray}{rCl}
\deg \Lambda^{(i)}(x) &=&\sum_{j\neq i}^L \big(\deg
            m^{(j)}(x)-d^{(j)}\big) \nonumber \\ &+& \deg
            m^{(i)}(x)-d.
\end{IEEEeqnarray}
It then follows from (\ref{eqn:degLameisLamkplusDeltak})
that $\Lambda(x)$ after line~\ref{line:SPIupdateLambda}
satisfies
\begin{equation}
\deg \Lambda^{(i)}(x) + d - d^{(i)} = \sum_{j=1}^L \big(\deg
m^{(j)}(x)-d^{(j)}\big),
\end{equation}
which is (\assertref{SPIassert:Lamdeginvariant}).
(Note that (\assertref{SPIassert:Lamdeginvariant}) 
and (\assertref{SPIassert:diLdLtaui}) together 
provide an alternative proof of Proposition~\ref{proposition:SolutionDegree}.)

Finally, we note that the algorithm is guaranteed to terminate because 
every execution of the \pkw{repeat} loop
(lines~\ref{line:SPIrepeat}--\ref{line:SPIuntil}) strictly
decreases $\delta_\text{max}(\Lambda)$ or $i_\text{max}(\Lambda)$
according to Lemma~\ref{lemma:SPIAlgcore}
and the swap in
lines~\ref{line:SPIswapbegin}--\ref{line:SPIswapend} strictly
decreases $d^{(i)}$.

For later use, we also record the following fact from (\extraref{SPIextra:dk})
and (\assertref{SPIassert:altLamdeginvariant}):
Let $\Lambda_1(x), \Lambda_2(x), \ldots, \Lambda_K(x)$
be all polynomials $\Lambda_k(x)$ from (\extraref{SPIextra:dk})
and let $\hat\Lambda(x)$ be the $\Lambda(x)$ returned by the algorithm.
(Note that $\deg\hat\Lambda(x)>\deg\Lambda_K(x)$.)
\begin{proposition} \label{SPIprop:degLamisincreasing}
The polynomials $\Lambda_k(x)$ defined in  (\extraref{SPIextra:dk}) satisfy
$\deg\Lambda_{1}(x)=0$ (since $\Lambda_{1}(x)=1$) 
and
\begin{equation} \label{SPIeqn:degLamisincreasing}
\deg\Lambda_{K}(x)>\ldots>\deg\Lambda_{2}(x)>\deg\Lambda_{1}(x)
\end{equation}
with
\begin{equation} \label{proof:hatLambdaj.0}
\Delta_{k}=\deg \Lambda_{k+1}(x)- \deg\Lambda_{k}(x)
\end{equation}
for $k \in \{ 1,\ldots, K-1\}$
and
\begin{equation} \label{proof:hatLambdaj.1}
\Delta_K=\deg \hat\Lambda(x)-\deg\Lambda_K(x).
\end{equation}
Moreover, we have
\begin{equation} \label{proof:deltajdjdi}
\Delta_k=d_k - \deg\!\Big( b^{(i_k)}(x)\Lambda_k(x) \bmod m^{(i_k)}(x) \Big)
\end{equation}
for $k=1,2,\ldots,K$. 
\end{proposition}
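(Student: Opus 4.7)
The plan is to derive all three assertions of the proposition --- the strict chain (\ref{SPIeqn:degLamisincreasing}), the telescoping identities (\ref{proof:hatLambdaj.0})--(\ref{proof:hatLambdaj.1}), and the formula (\ref{proof:deltajdjdi}) --- directly from the bookkeeping definition (\extraref{SPIextra:dk}) together with the assertions already established for Algorithm~\ref{alg:AnnotatedSPIAlg}; essentially no fresh argument about the update rule is needed.

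I would handle (\ref{proof:deltajdjdi}) first, as it is the most local claim. At the moment just before the $k$-th execution of the swap block, (\extraref{SPIextra:dk}) records $d_k = d^{(i_k)}$ and $\Delta_k = d^{(i_k)} - d = d_k - d$, where $d$ and $d^{(i_k)}$ are the values appearing at that line. Since control has just left the \pkw{repeat} loop, Assertion~(\assertref{SPIassert:imaxLambda}) is in force for $\Lambda = \Lambda_k$, giving $i = i_k = i_\text{max}(\Lambda_k)$, $\delta = \delta_\text{max}(\Lambda_k)$, and therefore $d = \delta + \tau^{(i_k)} = \rd^{(i_k)}(\Lambda_k)$. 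Substituting into $\Delta_k = d_k - d$ yields (\ref{proof:deltajdjdi}).

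Next I would establish the telescoping identities. Assertion~(\assertref{SPIassert:altLamdeginvariant}), which holds immediately after line~\ref{line:SPIupdateLambda} of the iteration containing the $k$-th swap, gives $\deg\Lambda(x) = \Delta_k + \deg\Lambda_k(x)$ at that moment. It remains to show that $\deg\Lambda(x)$ does not change between this point and either the $(k{+}1)$-th swap (for $k<K$) or the return (for $k=K$). For this I invoke the top-of-loop invariant (\assertref{SPIassert:Lamdeginvariant}): $\deg\Lambda(x) = \sum_{j=1}^L \big(\deg m^{(j)}(x) - d^{(j)}\big)$. The quantities $d^{(j)}$ are modified only inside the swap block (lines \ref{line:SPIswapbegin}--\ref{line:SPIswapend}), so between consecutive swaps the right-hand side is constant, and hence so is $\deg\Lambda(x)$ at the top of the \pkw{loop}. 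Since the only statement inside the body that touches $\Lambda(x)$ is line~\ref{line:SPIupdateLambda}, this forces the degree to be preserved across the entire interval between swaps. Consequently $\deg\Lambda_{k+1}(x) = \Delta_k + \deg\Lambda_k(x)$ for $k<K$ and $\deg\hat\Lambda(x) = \Delta_K + \deg\Lambda_K(x)$, which are exactly (\ref{proof:hatLambdaj.0}) and (\ref{proof:hatLambdaj.1}). The strict chain (\ref{SPIeqn:degLamisincreasing}) then follows because $\Delta_k = d^{(i_k)} - d > 0$ by Assertion~(\assertref{SPIassert:diLdLtaui}), while $\deg\Lambda_1(x) = 0$ is immediate from the initialization $\Lambda(x):=1$ at line~\ref{line:SPIinitialLam}.

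The only delicate point is the invariance of $\deg\Lambda(x)$ across the non-swap iterations that may lie between consecutive swaps; one has to be sure that the subtraction performed in line~\ref{line:SPIupdateLambda} cannot accidentally cancel the leading term of $\Lambda(x)$. Reducing this to the global invariant (\assertref{SPIassert:Lamdeginvariant}) is what avoids a direct leading-coefficient analysis and is the only step where care is required; everything else is arithmetic from already-proved assertions.
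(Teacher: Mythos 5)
Your proof is correct and in substance follows the paper's route: the paper does not give a separate proof of this proposition but simply records it as a consequence of the extras (\extraref{SPIextra:dk}) and Assertion~(\assertref{SPIassert:altLamdeginvariant}), with the supporting reasoning contained in the preceding verification of the assertions. The one place where you genuinely diverge is the ``delicate point'' you identify yourself: the paper establishes that non-swap executions of line~\ref{line:SPIupdateLambda} preserve $\deg\Lambda(x)$ by a direct argument (Lemma~\ref{lemma:SPIAlgcore} together with the fact that $\Lambda^{(i)}(x)$ and $d^{(i)}$ are unchanged between swaps), whereas you deduce the same constancy from the top-of-loop invariant (\assertref{SPIassert:Lamdeginvariant}) and the observation that the $d^{(j)}$ are modified only inside the swap block. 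Since all assertions are already established at the point where the proposition is stated, your deduction is legitimate and arguably tidier; the only caveat is that it does not truly ``avoid a direct leading-coefficient analysis'' at the level of the whole appendix, because the paper's proof of (\assertref{SPIassert:Lamdeginvariant}) itself rests on exactly that analysis --- you have relocated the work, not eliminated it. Your handling of (\ref{proof:deltajdjdi}) via (\assertref{SPIassert:imaxLambda}) (which gives $d=\rd^{(i_k)}(\Lambda_k)$ at the moment of the $k$-th swap) and of the strict chain (\ref{SPIeqn:degLamisincreasing}) via $\Delta_k=d^{(i_k)}-d>0$ from (\assertref{SPIassert:diLdLtaui}) matches the paper's intent exactly.
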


\subsection{Completing the Proof of Theorem~\ref{theorem:CorrectnessOfBasicAlg}}

It is clear at this point that the algorithm terminates 
and the returned polynomial $\Lambda(x)=\hat\Lambda(x)$
satisfies (\ref{eqn:SPI}) for all $i \in \{ 1,\ldots, L\}$.
Below, we will show that any nonzero $\tilde \Lambda(x) \in F[x]$ 
with $\deg \tilde \Lambda(x)<\deg \hat \Lambda(x)$ cannot satisfy (\ref{eqn:SPI}) for all $i$.

To this end, we need Proposition~\ref{SPIprop:degLamisincreasing} and the lemma.
\begin{lemma} \label{lemma:leadingqtLambdat} 
For any nonzero $q_{k}(x)$ with $\deg q_k<\Delta_k$, we have 
\begin{equation} \label{proof:imaxqtHatLambdat}
i_\text{max}(q_{k}\Lambda_{k})=i_\text{max}(\Lambda_{k})
\end{equation}
and
\begin{IEEEeqnarray}{rCl}
\rd^{(i_k)}(q_k \Lambda_k)& =& \deg q_k + \rd^{(i_k)}(\Lambda_k)\\
&<&d_k.
\end{IEEEeqnarray}
Moreover, for any nonzero $q_k\Lambda_k$ and $q_{k'}\Lambda_{k'}$ 
with 
\begin{equation}
i_\text{max}(\Lambda_{k})=i_\text{max}(\Lambda_{k'})
\end{equation}
(i.e., $i_k=i_{k'}$), 
we have
\begin{equation}\label{eqn:SecondClaim}
\rd^{(i_k)}(q_{k} \Lambda_{k}+q_{k'} \Lambda_{k'})=\rd^{(i_k)}(q_{k} \Lambda_{k})
\end{equation}
if $k<k'$.
\end{lemma}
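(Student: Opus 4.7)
The plan is to prove the three conclusions of the lemma in order, using the book-keeping established by the extras~(\extraref{SPIextra:dk}) and the assertions surrounding the swap in Algorithm~\ref{alg:AnnotatedSPIAlg}. Throughout, I work at the moment $\Lambda_k(x)$ is recorded, i.e.\ just before the $k$-th swap, where, by the extras and by assertion~(\assertref{SPIassert:imaxLambda}), $i=i_k$, $d=\rd^{(i_k)}(\Lambda_k)$, $d^{(i_k)}=d_k$, and $\Delta_k=d_k-d\ge 1$ (the last by the swap condition on line~\ref{line:SPIifswap}).

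First I would establish the ``no-reduction'' bound that drives everything. Since $d^{(i)}$ is initialized to $\deg m^{(i)}$ and is monotonically non-increasing over the entire run (each swap only replaces it by a strictly smaller value), $d_k\le\deg m^{(i_k)}$. Hence for any nonzero $q_k(x)$ with $\deg q_k<\Delta_k$,
\begin{equation}
\deg q_k + \rd^{(i_k)}(\Lambda_k) \le (\Delta_k-1)+d = d_k-1 < \deg m^{(i_k)},
\end{equation}
so the product $q_k(x)\bigl(b^{(i_k)}(x)\Lambda_k(x)\bmod m^{(i_k)}(x)\bigr)$ stays below $\deg m^{(i_k)}$ and no further reduction occurs, yielding $\rd^{(i_k)}(q_k\Lambda_k)=\deg q_k+\rd^{(i_k)}(\Lambda_k)<d_k$. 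The same multiplicativity gives the weak inequality $\rd^{(j)}(q_k\Lambda_k)\le\deg q_k+\rd^{(j)}(\Lambda_k)$ for \emph{every}~$j$; subtracting $\tau^{(j)}$ and using $\rd^{(j)}(\Lambda_k)-\tau^{(j)}\le\delta_\text{max}(\Lambda_k)$ (strict for $j>i_k$, because $i_k$ is the \emph{largest} argmax of $\Lambda_k$) identifies $i_k$ as the largest $\argmax$ for $q_k\Lambda_k$ as well, which is (\ref{proof:imaxqtHatLambdat}).

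For the equality (\ref{eqn:SecondClaim}) with $k<k'$ and $i_k=i_{k'}$, I would note that the map $F[x]\to F[x]/m^{(i_k)}(x)$ is a ring homomorphism and $\rd^{(i_k)}$ is the degree of the image, so it suffices to show the two reduced images have \emph{different} degrees, i.e.\
\begin{equation}
\rd^{(i_k)}(q_k\Lambda_k)>\rd^{(i_{k'})}(q_{k'}\Lambda_{k'}).
\end{equation}
The key point is a sharper use of the same monotonicity: immediately after the $k$-th swap (line~\ref{line:SPIswapddi}), $d^{(i_k)}$ takes the value $\rd^{(i_k)}(\Lambda_k)$, and every subsequent swap for the same index can only decrease it further, so $d_{k'}\le\rd^{(i_k)}(\Lambda_k)$. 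Applying the first part of the lemma at time $k'$ and chaining these bounds gives
\begin{equation}
\rd^{(i_{k'})}(q_{k'}\Lambda_{k'})<d_{k'}\le\rd^{(i_k)}(\Lambda_k)\le\rd^{(i_k)}(q_k\Lambda_k),
\end{equation}
which is the desired strict inequality.

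The main obstacle I anticipate is not algebraic but the book-keeping needed to justify $d_{k'}\le\rd^{(i_k)}(\Lambda_k)$ when intermediate swaps for the same index $i_k$ sit between times $k$ and $k'$. This calls for a short induction on the sequence $i_1,\ldots,i_K$ establishing the loop invariant that $d^{(i)}$ is monotonically non-increasing throughout the run and equals $\rd^{(i)}(\Lambda_{\tilde k})$ right after the most recent swap $\tilde k$ with $i_{\tilde k}=i$. Once that invariant is recorded (it is essentially what the swap of $d$ and $d^{(i)}$ in line~\ref{line:SPIswapddi} achieves), the rest of the argument is just the short chain of inequalities above.
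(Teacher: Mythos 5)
Your proof is correct and follows essentially the same route as the paper's: the no-reduction bound $\deg q_k + \rd^{(i_k)}(\Lambda_k) < d_k \le \deg m^{(i_k)}(x)$ (the paper gets it from (\ref{proof:deltajdjdi})) gives the degree equality, the weak inequality at all other indices preserves $i_\text{max}$, and the chain $\rd^{(i_{k'})}(q_{k'}\Lambda_{k'}) < d_{k'} \le \rd^{(i_k)}(q_k\Lambda_k)$ settles the second claim. The only difference is that you spell out the monotonicity of $d^{(i)}$ justifying the middle inequality $d_{k'}\le\rd^{(i_k)}(\Lambda_k)$, which the paper leaves implicit.
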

\begin{proof}
First, we establish from $\deg q_k<\Delta_k$
and (\ref{proof:deltajdjdi}) that
\begin{equation}\label{proof:qjdj}
\deg q_k(x) + \deg\!\Big( b^{(i_k)}(x)\Lambda_k(x) \bmod m^{(i_k)}(x) \Big) < d_k
\end{equation}
for $k=1,2,\ldots,K$. 
For all $i\in\{1,\ldots,L\}$, we clearly have
\begin{equation}
\rd^{(i)}(q_k \Lambda_k) \leq \deg q_k + \rd^{(i)}(\Lambda_k).
\end{equation}
For $i_k$, however, we have 
\begin{equation} \label{eqn:RemDegqkLkeq}
\rd^{(i_k)}(q_k \Lambda_k) = \deg q_k + \rd^{(i_k)}(\Lambda_k)
\end{equation}
from (\ref{proof:qjdj}) and since $d_k \leq \deg m^{(i_k)}(x)$.
The first claim of the lemma then follows from $i_\text{max}(\Lambda_k)=i_k$;
the second claim (\ref{eqn:SecondClaim}) is clear from 
\begin{equation}
\rd^{(i_k)}(q_{k'} \Lambda_{k'})<d_{k'}\leq\rd^{(i_k)}(q_k \Lambda_k)<d_k
\end{equation}
for $k<k'$.
\end{proof}

Partitioning the indices $k \in \{ 1,\ldots, K\}$
into sets $S_1,\ldots,S_L$ 
such that 
\begin{equation}
k\in S_i \Longleftrightarrow i_\text{max}(\Lambda_k)=i_k=i,
\end{equation}
we obtain from Lemma~\ref{lemma:leadingqtLambdat} the following corollary.
\begin{corollary}\label{CorollaryForSPIproof}
For any $S_i$, any nonzero
\begin{equation} \label{eqn:tildeLambdaiAltform}
\tilde\Lambda^{(i)}(x)\eqdef \sum_{k\in S_i}q_k(x)\Lambda_{k}(x)
\end{equation}
with $\deg q_k<\Delta_k$ 
satisfies
\begin{equation} \label{Eqinlemma:imaxtildeLambdai}
i_\text{max}(\tilde\Lambda^{(i)})=i
\end{equation}
and
\begin{equation} \label{Eqinlemma:dmaxtildeLambdai}
\rd^{(i)}(\tilde \Lambda^{(i)})=\max_{k\in S_i}\big(q_k\Lambda_k\big)
\end{equation}
\eproofnegspace
\end{corollary}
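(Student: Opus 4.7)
The plan is to apply Lemma~\ref{lemma:leadingqtLambdat} iteratively to the summands of $\tilde\Lambda^{(i)}(x)$. Enumerate $S_i=\{k_1<k_2<\cdots<k_m\}$ and let $k_\ast$ be the smallest index in $S_i$ with $q_{k_\ast}(x)\neq 0$; this exists because $\tilde\Lambda^{(i)}\neq 0$. By the first claim of Lemma~\ref{lemma:leadingqtLambdat}, every nonzero term $q_k\Lambda_k$ with $k\in S_i$ satisfies $i_\text{max}(q_k\Lambda_k)=i$ and $\rd^{(i)}(q_k\Lambda_k)=\deg q_k+\rd^{(i)}(\Lambda_k)<d_k$. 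Since each swap for index $i$ strictly reduces $d^{(i)}$, the sequence $d_{k_1}>d_{k_2}>\cdots>d_{k_m}$ is strictly decreasing; combined with the inequality chain $\rd^{(i)}(q_{k'}\Lambda_{k'})<d_{k'}\le\rd^{(i)}(q_k\Lambda_k)<d_k$ (for $k<k'$ in $S_i$ with $q_k\neq 0$) that already underlies the proof of Lemma~\ref{lemma:leadingqtLambdat}, the quantities $\rd^{(i)}(q_k\Lambda_k)$ over $k\in S_i$ with $q_k\neq 0$ are themselves strictly decreasing in $k$.

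I then deduce (\ref{Eqinlemma:dmaxtildeLambdai}), read as $\rd^{(i)}(\tilde\Lambda^{(i)})=\max_{k\in S_i,\,q_k\neq 0}\rd^{(i)}(q_k\Lambda_k)$, which by the previous paragraph is attained uniquely at $k_\ast$. Writing $\tilde\Lambda^{(i)}=q_{k_\ast}\Lambda_{k_\ast}+R$ with $R$ equal to the sum over the remaining nonzero indices, I proceed by induction on the number of nonzero summands: applying the inductive hypothesis to $R$ (whose nonzero summands are indexed by elements of $S_i$ strictly greater than $k_\ast$), then invoking the second claim of Lemma~\ref{lemma:leadingqtLambdat} on the pair $(q_{k_\ast}\Lambda_{k_\ast},R)$, yields $\rd^{(i)}(\tilde\Lambda^{(i)})=\rd^{(i)}(q_{k_\ast}\Lambda_{k_\ast})$.

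Claim (\ref{Eqinlemma:imaxtildeLambdai}) then follows by a direct maximum argument. For every $j\in\{1,\ldots,L\}$,
\begin{equation}
\rd^{(j)}(\tilde\Lambda^{(i)})-\tau^{(j)}\le \max_{k\in S_i,\, q_k\neq 0}\!\big(\rd^{(j)}(q_k\Lambda_k)-\tau^{(j)}\big),
\end{equation}
and because $i_\text{max}(q_k\Lambda_k)=i$, each term on the right is at most $\rd^{(i)}(q_k\Lambda_k)-\tau^{(i)}$, with strict inequality whenever $j>i$. The outer maximum is therefore at most $\rd^{(i)}(q_{k_\ast}\Lambda_{k_\ast})-\tau^{(i)}=\rd^{(i)}(\tilde\Lambda^{(i)})-\tau^{(i)}$, so no $j>i$ matches the value at $i$, which proves $i_\text{max}(\tilde\Lambda^{(i)})=i$. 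The only mildly delicate point is chaining the two-term second claim of Lemma~\ref{lemma:leadingqtLambdat} across $m$ summands; this is handled by the strict decrease of $\rd^{(i)}(q_k\Lambda_k)$ in $k$ noted above, which ensures the pairwise hypotheses of the lemma are met at each inductive step.
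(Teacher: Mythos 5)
Your proof is correct and follows exactly the route the paper intends: the paper states the corollary as an immediate consequence of Lemma~\ref{lemma:leadingqtLambdat} after the partitioning into the sets $S_i$, and your argument is precisely the missing elaboration — chaining the two-term claim across the summands via the strictly decreasing sequence $\rd^{(i)}(q_k\Lambda_k)$ (justified by the chain $\rd^{(i_k)}(q_{k'}\Lambda_{k'})<d_{k'}\leq\rd^{(i_k)}(q_k\Lambda_k)$ from the lemma's proof), together with linearity of $\Lambda(x)\mapsto b^{(j)}(x)\Lambda(x)\bmod m^{(j)}(x)$ for the $i_\text{max}$ claim. Your reading of (\ref{Eqinlemma:dmaxtildeLambdai}) as $\max_{k\in S_i,\,q_k\neq 0}\rd^{(i)}(q_k\Lambda_k)$ is the intended one.
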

Finally, we note that 
any nonzero $\tilde \Lambda(x) \in F[x]$ 
with $\deg \tilde \Lambda(x)<\deg \hat \Lambda(x)$ 
can be uniquely written as
\begin{equation}
\tilde \Lambda(x)=\sum_{k=1}^{K} q_k(x)\Lambda_{k}(x)
\end{equation}
for some nonzero $q_k(x)$ with 
$\deg q_k(x)< \Delta_k$.
It then follows from Corollary~\ref{CorollaryForSPIproof}
that 
$\tilde \Lambda(x)=\sum_{i=1}^L \tilde\Lambda^{(i)}(x)$ cannot satisfy (\ref{eqn:SPI}) for all $i$.

\subsection{Proof of Theorem~\ref{theorem:NumIterations}}
\label{sec:ProofOftheorem:NumIterations}

For each $i\in \{1,\ldots,L\}$, let $d^{(i)}$ be as in the algorithm, and
let $\tilde d^{(i)}$ denote the value of $d^{(i)}$ when the algorithm stops. 
Note that $d^{(i)}$ (for each $i$) is 
initialized to $\deg m^{(i)}(x)$ in line~\ref{line:SPIinitialdi}.
Now let $\delta^{(i)}\eqdef d^{(i)}-\tau^{(i)}$ for every $i\in \{1,\ldots,L\}$.
Clearly, $\delta^{(i)}$ satisfies 
\begin{equation}
\tilde d^{(i)}-\tau^{(i)}\leq \delta^{(i)}\leq \deg m^{(i)}(x)-\tau^{(i)};
\end{equation}
moreover, every execution of the swap in
lines~\ref{line:SPIswapbegin}--\ref{line:SPIswapend} strictly
reduces $\delta^{(i)}$.
Finally, let $\delta$ be as in the algorithm,  which is initialized to
\begin{equation}
\delta:=\max_{i\in\{1,\ldots,L\}}\!\big(\deg m^{(i)}(x)-\tau^{(i)}\big)
\end{equation}
(see line~\ref{line:SPIinitialdelta}).
It is obvious that the number $N_\text{it}$ of executions of line~\ref{line:SPIkappa} 
of Algorithm~\ref{alg:BasicSPIAlg} (i.e., Algorithm~\ref{alg:AnnotatedSPIAlg}) 
is equal to the total number of 
iterations of lines~\ref{line:SPIrepeat}--\ref{line:SPIuntil}.
These executions of line~\ref{line:SPIkappa} 
(with the help of line~\ref{line:SPIupdateLambda}) 
are made to make $\Lambda(x)$ satisfy (\ref{eqn:SPI}) for all $i\in \{1,\ldots,L\}$
(which holds when $\delta\leq 0$)
accompanied by the reduction of $\delta^{(i)}$ from $\deg m^{(i)}(x)-\tau^{(i)}$ 
to $\tilde d^{(i)}-\tau^{(i)}$ for every $i$. 
We therefore have
\begin{equation}
N_\text{it} = \maxD + \sum_{i=1}^L n_\text{it}^{(i)},
\end{equation}
where $\maxD$ is  defined in (\ref{def:Bbound}) and 
where $n_\text{it}^{(i)}$ denotes the number of executions of line~\ref{line:SPIkappa} 
needed for decreasing $\delta^{(i)}$ from $\deg m^{(i)}(x)-\tau^{(i)}$ to $\tilde d^{(i)}-\tau^{(i)}$.
The quantity $n_\text{it}^{(i)}$ for each $i\in\{1,\ldots, L\}$ 
is 
\begin{IEEEeqnarray}{rCl}
n_\text{it}^{(i)}&=&L\cdot \big(\deg m^{(i)}(x)-\tau^{(i)}-(\tilde d^{(i)}-\tau^{(i)})\big)\\
&=&L\cdot \big(\deg m^{(i)}(x)-\tilde d^{(i)}\big)
\end{IEEEeqnarray}
and thus
\begin{equation}
\sum_{i=1}^L n_\text{it}^{(i)}=L \cdot \sum_{i=1}^L\big(\deg m^{(i)}(x)-\tilde d^{(i)}\big).
\end{equation}
We therefore obtain
\begin{equation}\label{Nit:ProofExactBound}
N_\text{it} = \maxD + L \cdot \sum_{i=1}^L\big(\deg m^{(i)}(x)-\tilde d^{(i)}\big).
\end{equation}
But
$\sum_{i=1}^L\big(\deg m^{(i)}(x)-\tilde d^{(i)}\big)=\deg \Lambda(x)$
from (\assertref{SPIassert:Lamdeginvariant}) with $d^{(i)}=\tilde d^{(i)}$.

\section{Quotient Saving Algorithm\\ and Remainder Saving Algorithm}
\label{section:QuotientRemainderSavingAlgs}

For $L=1$, the reverse Berlekamp--Massey algorithm is easily 
translated into two other algorithms, one of which is a Euclidean algorithm 
\cite{YuLoeliger2016IT}. 
In fact, it is a main point of \cite{YuLoeliger2016IT} that
these algorithms may be viewed as different versions of a single algorithm.
We now demonstrate that this works also for $L>1$. 

However, for $L>1$, these other algorithms are less attractive 
than the monomial-SPI reverse Berlekamp--Massey algorithm 
(Algorithm~\ref{alg:BasicSPIASpec})
as will be detailed below. 
However, before discounting these other algorithms from future research, 
it may be remembered that the complexity of the asymptotically fast MLFSR algorithms of 
\cite{SiBo:fskew2014} and \cite{Nielsen2013} is cubic in $L$
while the complexity of the algorithms below is only quadratic in $L$.

\subsection{Quotient Saving Algorithm}

Algorithm~\ref{alg:QSSPIAlg} (see box) 
is a variation of Algorithm~\ref{alg:BasicSPIAlg} 
that achieves a generalization 
of Algorithm~\ref{alg:BasicSPIASpec} to general $m^{(i)}(x)$. 
To this end, we store and update the quotients 
$Q^{(i)}(x)$, $i=1,\ldots,L$, defined by
\begin{equation} \label{SPIeqn:quitientandremainder}
b^{(i)}(x)\Lambda(x)=Q^{(i)}(x)m^{(i)}(x)+r^{(i)}(x)
\end{equation}
with $r^{(i)}(x)\eqdef b^{(i)}(x)\Lambda(x)\bmod m^{(i)}(x)$. The coefficient of 
$x^d$ of $r^{(i)}(x)$ in line~\ref{line:SPIkappa} of Algorithm~\ref{alg:BasicSPIAlg} 
can then be computed as in line~\ref{line:QSSPIkappa} of Algorithm~\ref{alg:QSSPIAlg}.

The quotients $Q^{(i)}(x)$ of (\ref{SPIeqn:quitientandremainder}) are initialized 
in line~\ref{line:QSSPIinitialQj}, updated in line~\ref{line:QSSPIupdateQj}, 
and stored (as $Q^{(i,j)}(x)$) in line~\ref{line:QSSPIstoreQj}, in parallel with $\Lambda(x)$.

All other quantities in the algorithm remain unchanged.
In any case (as in (\ref{algorithm:rdiLambda})), we always have
\begin{equation}
\deg\!\big(b^{(i)}(x)\Lambda(x)-Q^{(i)}(x)m^{(i)}(x)\big)<d
\end{equation}
after executing lines~\ref{line:QSSPIupdateQbgn}--\ref{line:QSSPIupdateQend}.

Theorem~\ref{theorem:NumIterations} still applies,
with ``line~\ref{line:SPIkappa}'' replaced by ``line~\ref{line:QSSPIkappa}''.
Due to the additional computation of lines~\ref{line:QSSPIupdateQbgn}--\ref{line:QSSPIupdateQend},
the complexity of Algorithm~\ref{alg:QSSPIAlg} 
is 
\begin{equation} \label{eqn:QuotientSavingComplexity}
O\big (N_\text{it} L \deg \Lambda(x) \big) \leq O\big( L(\maxD D + L D^2) \big).
\end{equation}

Compared with (\ref{eqn:MonomialRevBMAComplexity}),
the factor $L$ in (\ref{eqn:QuotientSavingComplexity}) 
makes 
this algorithm
less attractive for $L>1$
than Algorithm~\ref{alg:BasicSPIASpec}.

\begin{table}[tp]
\framebox[\linewidth]{%
\normalsize%
\begin{minipage}{0.95\linewidth}
\begin{algorithm}[Quotient Saving SPI Algorithm]\label{alg:QSSPIAlg}\\
{Input:} $b^{(i)}(x), m^{(i)}(x), \tau^{(i)}$ for $i=1,\ldots,L$.\\
{Output:} $\Lambda(x)$ as in the problem statement.
\begin{pseudocode}
\npcl[line:QSSPIforinitial] \pkw{for} $i=1,\ldots, L$ \pkw{begin}\\
\npcl \>$\Lambda^{(i)}(x):=0$\\
\npcl \>$d^{(i)}:=\deg m^{(i)}(x)$\\
\npcl[line:QSSPIinitialkappa] \>$\kappa^{(i)}:=\lcf m^{(i)}(x)$\\
\npcl[line:QSSPIinitialQijbgn]\> \pkw{for} $j=1,\ldots, L$ \pkw{begin} \\
\npcl\>\> $Q^{(i,j)}(x):=0$\\
\npcl\>\> \pkw{if} $i=j$ \pkw{begin} $Q^{(i,j)}(x):=-1$ \pkw{end}\\
\npcl[line:QSSPIforinitialend]\> \pkw{end} \\
\npcl[line:QSSPIforendinitial] \pkw{end} \\
\npcl[line:QSSPIinitialLam] $\Lambda(x):=1$\\
\npcl[line:QSSPIinitialQj]\pkw{for} $i=1,\ldots, L$ \pkw{begin} $Q^{(i)}(x):=0$ \pkw{end} \\
\npcl[line:QSSPIinitialdelta] $\delta:=\max_{i\in\{1,\ldots,L\}}\big(\deg m^{(i)}(x)-\tau^{(i)}\big)$\\
\npcl[line:QSSPIinitialiddx] $i:=1$\\
\npcl[line:QSSPIloopbegin]   \pkw{loop begin} \\
\npcl[line:QSSPIrepeat] \> \pkw{repeat}\\
\npcl[line:QSSPIupdatei]\> \>  \pkw{if} $i>1$  \pkw{begin} $i:=i-1$ \pkw{end}\\
\npcl[line:QSSPIupdtatedelta]\> \>  \pkw{else begin} \\
\npcl[line:QSSPIstop] \>\>\> \pkw{if} $\delta\leq 0$  \pkw{return} $\Lambda(x)$\\
\npcl \>\>\>$i:=L$\\
\npcl \>\>\>$\delta:=\delta-1$\\
\npcl[line:QSSPIupdateiEnd]\> \> \pkw{end} \\
\npcl[line:QSSPIupdated]\>\> $d:=\delta+\tau^{(i)}$\\
\npcl[line:QSSPIkappa] \> \> 
    $\displaystyle \kappa := 
    \sum_{\ell} b_{d-\ell}^{(i)}\Lambda_\ell
  - \sum_{\ell} m_{d-\ell}^{(i)}Q_\ell^{(i)}$ \\
\npcl[line:QSSPIuntil] \> \pkw{until} $\kappa \neq 0$\\
\> \> \> {\grey\rule[0.5ex]{50mm}{1pt}}\\
\npcl[line:QSSPIifswap] \> \pkw{if} $ d<d^{(i)}$ \pkw{begin} \\
\npcl[line:QSSPIswapbegin] \>\> \pkw{swap} $(\Lambda(x),\Lambda^{(i)}(x))$\\
\npcl[line:QSSPIswapddi] \>\> \pkw{swap} $(d, d^{(i)})$\\
\npcl[line:QSSPIswapend] \>\> \pkw{swap} $(\kappa, \kappa^{(i)})$\\
\npcl[line:QSSPIstoreQj]\>\>\pkw{for} $j=1,\ldots, L$ \pkw{swap} $(Q^{(j)}(x), Q^{(i,j)}(x))$ \\
\npcl[line:QSSPIresetdelta] \>\> $\delta:=d-\tau^{(i)}$ \\
\npcl[line:QSSPIifswapend] \> \pkw{end} \\
\> \> \> {\grey\rule[0.5ex]{50mm}{1pt}}\\
\npcl[line:QSSPIupdateLambda] \>
$\Lambda(x):= \kappa^{(i)} \Lambda(x)- \kappa x^{d-d^{(i)}} \Lambda^{(i)}(x)$ \\
\npcl[line:QSSPIupdateQbgn]\> \pkw{for} $j=1,\ldots, L$ \pkw{begin}\\
\npcl[line:QSSPIupdateQj]\>\> $Q^{(j)}(x):= \kappa^{(i)}Q^{(j)}(x)- \kappa x^{d-d^{(i)}} Q^{(i,j)}(x)$ \\
\npcl[line:QSSPIupdateQend]\> \pkw{end}\\
\npcl[line:QSSPIloopend]   \pkw{end}
\end{pseudocode}
\end{algorithm}
\end{minipage}%
}
\end{table}

\begin{table}[tp]
\framebox[\linewidth]{%
\normalsize%
\begin{minipage}{0.95\linewidth}
\begin{algorithm}[Remainder Saving SPI Algorithm]\label{alg:RSSPIAlg}\\
(a Euclidean algorithm)\\[0.5ex]
{Input:} $b^{(i)}(x), m^{(i)}(x), \tau^{(i)}$ for $i=1,\ldots,L$.\\
{Output:} $\Lambda(x)$ as in the problem statement.
\begin{pseudocode}
\npcl[line:RSSPIforinitial] \pkw{for} $i=1,\ldots, L$ \pkw{begin}\\
\npcl \>$\Lambda^{(i)}(x):=0$\\
\npcl \>$d^{(i)}:=\deg m^{(i)}(x)$\\
\npcl \>$\kappa^{(i)}:=\lcf m^{(i)}(x)$\\
\npcl\> \pkw{for} $j=1,\ldots, L$ \pkw{begin} \\
\npcl\>\> $r^{(i,j)}(x):=0$\\
\npcl\>\> \pkw{if} $i=j$ \pkw{begin} $r^{(i,j)}(x):=m^{(i)}(x)$ \pkw{end}\\
\npcl\> \pkw{end} \\
\npcl[line:RSSPIforendinitial] \pkw{end} \\
\npcl[line:RSSPIinitialLam] $\Lambda(x):=1$\\
\npcl\pkw{for} $i=1,\ldots, L$ \pkw{begin} $r^{(i)}(x):=b^{(i)}(x)$ \pkw{end} \\
\npcl[line:RSSPIinitialdelta] $\delta:=\max_{i\in\{1,\ldots,L\}}\big(\deg m^{(i)}(x)-\tau^{(i)}\big)$\\
\npcl[line:RSSPIinitialiddx] $i:=1$\\
\npcl[line:RSSPIloopbegin]   \pkw{loop begin} \\
\npcl[line:RSSPIrepeat] \> \pkw{repeat}\\
\npcl[line:RSSPIupdatei]\> \>  \pkw{if} $i>1$  \pkw{begin} $i:=i-1$ \pkw{end}\\
\npcl[line:RSSPIupdtatedelta]\> \>  \pkw{else begin} \\
\npcl[line:RSSPIstop] \>\>\> \pkw{if} $\delta\leq 0$  \pkw{return} $\Lambda(x)$\\
\npcl \>\>\>$i:=L$\\
\npcl \>\>\>$\delta:=\delta-1$\\
\npcl[line:RSSPIupdateiEnd]\> \> \pkw{end} \\
\npcl[line:RSSPIupdated]\>\> $d:=\delta+\tau^{(i)}$\\
\npcl[line:RSSPIkappa] \> \> $\kappa:= \text{coefficient of $x^{d}$ of $r^{(i)}(x)$}$ \\
\npcl[line:RSSPIuntil] \> \pkw{until} $\kappa \neq 0$\\
\> \> \> {\grey\rule[0.5ex]{50mm}{1pt}}\\
\npcl[line:RSSPIifswap] \> \pkw{if} $ d<d^{(i)}$ \pkw{begin} \\
\npcl[line:RSSPIswapbegin] \>\> \pkw{swap} $(\Lambda(x),\Lambda^{(i)}(x))$\\
\npcl[line:RSSPIswapddi] \>\> \pkw{swap} $(d, d^{(i)})$\\
\npcl[line:RSSPIswapend] \>\> \pkw{swap} $(\kappa, \kappa^{(i)})$\\
\npcl\>\>\pkw{for} $j=1,\ldots, L$ \pkw{swap} $(r^{(j)}(x), r^{(i,j)}(x))$ \\
\npcl[line:RSSPIresetdelta] \>\> $\delta:=d-\tau^{(i)}$ \\
\npcl[line:RSSPIifswapend] \> \pkw{end} \\
\> \> \> {\grey\rule[0.5ex]{50mm}{1pt}}\\
\npcl[line:RSSPIupdateLambda] \>
$\Lambda(x):= \kappa^{(i)} \Lambda(x)- \kappa x^{d-d^{(i)}} \Lambda^{(i)}(x)$ \\
\npcl[line:RSSPIupdateRbgn]\> \pkw{for} $j=1,\ldots, L$ \pkw{begin}\\
\npcl[line:RSSPIupdateRj]\>\> $r^{(j)}(x):= \kappa^{(i)}r^{(j)}(x)- \kappa x^{d-d^{(i)}} r^{(i,j)}(x)$ \\
\npcl[line:RSSPIupdateRend]\> \pkw{end}\\
\npcl[line:RSSPIloopend]   \pkw{end}
\end{pseudocode}
\end{algorithm}
\end{minipage}%
}
\end{table}

\subsection{Remainder Saving Algorithm}

Another variation of Algorithm~\ref{alg:BasicSPIAlg} is Algorithm~\ref{alg:RSSPIAlg}, 
where we store and update the remainders $r^{(i)}(x)$
from (\ref{SPIeqn:quitientandremainder}). 
In consequence, the computation 
of line~\ref{line:SPIkappa} of Algorithm~\ref{alg:BasicSPIAlg} 
is unnecessary and replaced by the trivial 
line~\ref{line:RSSPIkappa} of Algorithm~\ref{alg:RSSPIAlg}. 
However, updating the remainders $r^{(i)}(x)$
requires the additional computation in 
lines~\ref{line:RSSPIupdateRbgn}--\ref{line:RSSPIupdateRend}. 

Otherwise, the algorithm works exactly like Algorithms~\ref{alg:BasicSPIAlg}~and~\ref{alg:QSSPIAlg}. 
In particular, we always have
\begin{equation}
\deg r^{(i)}(x)<d
\end{equation}
after executing lines~\ref{line:RSSPIupdateRbgn}--\ref{line:RSSPIupdateRend}.

Note that this algorithm is rather 
a Euclidean algorithm 
\cite{FengTzeng1989,GathenGerhard} 
than a Berlekamp--Massey algorithm
(but it is a new algorithm as well).

Due to the computation of 
lines~\ref{line:RSSPIupdateRbgn}--\ref{line:RSSPIupdateRend},
the complexity of Algorithm~\ref{alg:RSSPIAlg} is 
\begin{equation} \label{eqn:RemainderSavingComplexity}
O(N_\text{it}L \nu_\text{max})
\end{equation}
with 
\begin{equation}
\nu_\text{max} \eqdef \max_{i \in \{ 1,\ldots,L\} } \deg m^{(i)}(x).
\end{equation}
But
\begin{equation} \label{eqn:RemainderSavingComplexityComparison}
N_\text{it} L \nu_\text{max} \geq N_\text{it} \deg \Lambda(x)
\end{equation}
by (\ref{eqn:DegreeBoundLCM}).
It follows that 
the complexity of Algorithm~\ref{alg:RSSPIAlg} is never smaller 
than the complexity (\ref{eqn:MonomialRevBMAComplexity})
of Algorithm~\ref{alg:BasicSPIASpec}. 
In particular, for monomial moduli, we have 
$\deg \Lambda(x) \leq \nu_\text{max}$ (by Proposition~\ref{proposition:MonomialDegreeBound}), 
i.e., the left side of 
(\ref{eqn:RemainderSavingComplexityComparison}) exceeds the right side by a factor of $L$.

For general SPI problems, the difference between the left side and the right side
of (\ref{eqn:RemainderSavingComplexityComparison})
may be small, in which case Algorithm~\ref{alg:RSSPIAlg}
may be more attractive than first monomializing the SPI problem
and then using Algorithm~\ref{alg:BasicSPIASpec}.
Indeed, (\ref{eqn:RemainderSavingComplexityComparison}) is an equality 
if $\deg \Lambda(x) = L \nu_\text{max}$, which happens
in the very special case where 
\begin{equation}
\lcm \big\{ m^{(1)}(x),\ldots, m^{(L)}(x) \big\} = m^{(1)}(x) \cdots m^{(L)}(x),
\end{equation}
$\deg m^{(1)}(x) = \ldots = \deg m^{(L)}(x) = \nu_\text{max}$,
and $\tau^{(1)} = \ldots = \tau^{(L)} = 0$.
However, this case does not arise in decoding as in this paper.

\section*{Acknowledgment}

The paper has much benefitted from the comments by the anonymous reviewers.

\newcommand{\COM}{IEEE Trans.\ Communications}
\newcommand{\COMMag}{IEEE Communications Mag.}
\newcommand{\IT}{IEEE Trans.\ Inf.\ Theory}
\newcommand{\JSAC}{IEEE J.\ Select.\ Areas in Communications}
\newcommand{\SP}{IEEE Trans.\ Signal Proc.}
\newcommand{\SPMag}{IEEE Signal Proc.\ Mag.}
\newcommand{\ProcIEEE}{Proceedings of the IEEE}


\vspace{-12ex}

\begin{IEEEbiographynophoto}{Jiun-Hung~Yu}(S'10-M'14)
was born in Nantou, Taiwan, in 1979.
He received the M.S. degree in communication engineering from 
National Chiao Tung University, Hsinchu, Taiwan, in 2003, and
the Ph.D.\ degree in electrical engineering from ETH Zurich, in 2014.
From 2003 to 2008, he was with Realtek Semiconductor Cooperation,
Hsinchu, Taiwan. From 2008 to 2017, he was with the Signal and Information
Processing Laboratory of ETH Zurich.
Since 2017, he has been with National Chiao Tung University. He is currently 
an Assistant Professor.  
His research interests include communication theory, error-correcting codes,
and statistical signal processing.
\end{IEEEbiographynophoto}

\vspace{-12ex}

\begin{IEEEbiographynophoto}{Hans-Andrea Loeliger} (S'85-M'92-SM'03-F'04) 
received the Diploma in electrical engineering and the Ph.D.\ degree in 1992 from ETH Zurich, Switzerland. 
From 1992 to 1995, he was with Link\"oping University, Link\"oping Sweden. 
From 1995 to 2000, he was a technical consultant and coowner of a consulting company. 
Since 2000, he has been a Professor with the Department of Information Technology 
and Electrical Engineering of ETH Zurich, Switzerland. 
His research interests include the broad areas of signal processing, 
machine learning, information theory, error correcting codes, communications, and electronic circuits.
\end{IEEEbiographynophoto}

\end{document}